\newtheorem{theorem}{Theorem}[section]
\newtheorem{lemma}[theorem]{Lemma}
\newtheorem{corollary}[theorem]{Corollary}
\crefname{theorem}{theorem}{theorems}
\crefname{lemma}{lemma}{lemmas}
\crefname{corollary}{corollary}{corollaries}
\crefname{invariant}{invariant}{invariants}
\scriptsize\color{black},
\newcommand{\ignore}[1]{}
\newcommand{\codestyle}[1]{{{\fontsize{9}{11}\ttfamily #1}}}
\newcommand{\faith}[1]{{\color{cyan} \textbf{Faith}: #1}}
\newcommand{\galy}[1]{{\color{red} \textbf{Galy}: #1}}
\title{Strong Linearizability without Compare\&Swap: \\ The Case of Bags}
\author{Faith Ellen \\ University of Toronto, Toronto, Canada \\ faith@cs.toronto.edu
\and
Gal Sela \\ EPFL, Lausanne, Switzerland \\ gal.sela@epfl.ch}
\date{}
\begin{document}

\maketitle

\begin{abstract}
Because strongly-linearizable objects provide stronger guarantees than linearizability,
they serve as
valuable building blocks for the design of concurrent data structures. Yet,  many objects that have linearizable implementations from 
base objects
weaker than compare\&swap objects
do not have strongly-linearizable implementations from 
the same 
base objects. We focus on one such object: 
the \emph{bag}, a multiset from which processes can take 
unspecified
elements.

We present the first lock-free, strongly-linearizable implementation of a bag from 
interfering objects (specifically, registers, and test\&set objects).
This may be surprising, since
there are provably no such implementations
of stacks or queues.

Since a bag can contain arbitrarily many elements, an unbounded amount of space must be used to implement it.
Hence, it makes sense to also consider
a bag with a bound on its capacity.
However, like stacks and queues, a bag with capacity $b$ shared by more than $2b$ processes
has no lock-free, strongly-linearizable implementation from interfering objects.
If we further restrict a bounded bag
so that only one process can insert into it, we are able to obtain 
a lock-free, strongly-linearizable implementation from 
$O(b+n)$ interfering objects, where $n$ is the number of processes.

Our goal is to understand the circumstances under which strongly-linearizable implementations of bags exist 
and, more generally, to understand the power of interfering objects.
\end{abstract}

\textbf{Keywords:} Strong-Linearizability, Bag, Concurrent Data Structures, Wait-Freedom, Lock-Freedom

\medskip

The conference version of this paper is available at \cite{ellen2025strong}.

\newpage

\section{Introduction}\label{sec:intro}

Concurrent data structures are often built using linearizable implementations of objects as if they were atomic objects. Although linearizability is the 
usual
correctness condition for concurrent algorithms, there are scenarios where composing linearizable implementations does not preserve desired properties. 
Namely, an algorithm that uses atomic objects may lose some of its properties when these atomic objects are replaced by linearizable implementations. 
Specifically, linearizability does not preserve hyperproperties (properties of sets of executions), for example, probability distributions of events for randomized programs and security properties such as noninterference \cite{golab2011linearizable,attiya2019putting}.
This can be rectified by using strongly-linearizable implementations \cite{golab2011linearizable}.
They guarantee
that the linearization of a prefix of a concurrent execution is a prefix of the linearization of the whole execution. This means that the linearization of operations in an execution prefix cannot depend on later events in the execution.

Attiya, Castaneda, and Enea \cite{attiya2024strong}
observed that
strongly-linearizable implementations of objects typically use
compare\&swap 
objects, which can be used to solve consensus among any number of processes.
They provided strongly-linearizable implementations of some objects,
such as a wait-free single-writer snapshot object and a wait-free max register from a fetch\&add object,
and a lock-free, readable, resettable test\&set object 
and a lock-free, readable, fetch\&in\-crement object 
from an infinite array of test\&set objects and registers.
Both fetch\&add and test\&set objects are less powerful than a compare\&swap object.

In this work, we aim to further explore the power of these well-studied building blocks in the context of strong linearizability. 
Specifically, we investigate strongly-linearizable implementations of a bag from these primitives.
Establishing the existence of such implementations deepens our understanding of the circumstances under which these primitives can be used to achieve strong linearizability.

A \textit{bag} 
is a multiset to which processes can insert elements and from which they can take unspecified elements. 
A concurrent bag is an abstraction of the interaction between producers and consumers
and is commonly used to distribute tasks among processes for load balancing and improved scalability.

Queues and stacks have strongly-linearizable implementations from 
compare\&swap 
objects \cite{hwang2022strongly,treiber1986systems,michael1998nonblocking}.
A bag has a straightforward implementation from a stack or a queue.
Therefore, a bag has a strongly-linearizable implementation from 
compare\&swap 
objects.
Although there is a wait-free, linearizable implementation of a stack
from registers, test\&set objects, and a readable fetch\&increment object~\cite{afek2007common2}
and there is a lock-free, linearizable implementation of a queue from the same set of objects~\cite{Li01},
neither of these implementations is strongly-linearizable.
In fact, Attiya, Casta\~neda, and Hendler~\cite{ACH18} proved that no lock-free, strongly-linearizable implementation of a stack or queue from such objects is possible.

\ignore{
Attiya, Casta\~neda, and Enea~\cite{attiya2024strong} claimed that the 
lock-free linearizable implementation of a queue
which appears in \Cref{fig: nonSL bag}, 
is a strongly-linearizable implementation of a bag.
We give a counterexample to this claim in \Cref{sec:non-SL bag}.
This was the starting point for our research.
}

The challenge when implementing a strongly-linearizable bag 
from these objects is
for
an operation that is  trying to take an element from the bag to detect when the bag is empty.
If elements can reside in multiple locations, it does not suffice for the operation to simply examine these locations one by one, concluding that the bag is empty if no elements are found. The problem is that new elements may have
been added to locations after they were examined.
Even if there was a configuration 
during the operation
in which the bag was empty, it might not be possible
to linearize the operation at this point while maintaining strong linearizability.
For example, after this point, in an alternative execution, a value could be
added to a location that the operation had not yet examined
and the operation could 
return that value, instead of EMPTY.

We address this challenge in \Cref{sec:LF SL bag}, modifying the lock-free, linearizable implementation of a queue by adding an additional readable fetch\&increment object. This object
is used by
operations which are inserting elements into the bag to inform operations which
are trying to take elements from the bag that the bag is no longer empty.
We prove that the resulting algorithm is a strongly-linearizable implementation of a bag.
Interestingly, although our algorithm is still a linearizable implementation of a queue, it is not a strongly-linearizable implementation of a queue.

A bag can grow arbitrarily large, 
so either the number of objects used to implement it or the size of the objects must be unbounded.
An alternative is a 
\emph{$b$-bounded bag}, which can simultaneously contain up to $b$ elements.
Unfortunately, 
a $b$-bounded bag
shared by more than $2b$ processes has
no lock-free, strongly-linearizable implementation from interfering objects.
However, we 
provide a lock-free, strongly-linearizable implementation of
a restricted version of a $b$-bounded bag, into which only one process, the \emph{producer}, can insert elements.
It uses
a bounded number of registers, each of bounded size, and readable, resettable test\&set objects.
Before presenting this implementation, we give two different implementations of a 1-bounded bag with a single producer, which introduce some of the ideas that appear in our $b$-bounded bag implementation.

In \Cref{sec:WF 1-bounded bag}, we present a wait-free, linearizable implementation of a 1-bounded bag.
Note that it provides a stronger progress guarantee.
To keep the space bounded, when the producer inserts 
an element into the bag, it may reuse
objects previously used for other elements. It has to do this carefully, to avoid reusing objects that other processes may be poised to access. This requires delicate coordination between the producer and the consumers.

In \Cref{sec:1-bounded bag}, we present a lock-free, strongly-linearizable implementation of a 1-bounded bag. 
It combines the mechanism for detecting an empty bag 
from our lock-free, strongly-linearizable implementation of a bag and 
the mechanism for reusing objects from our wait-free, linearizable implementation of a 1-bounded bag.


Our lock-free, strongly-linearizable implementation of a $b$-bounded bag, for any 
value of $b$,
is presented in \Cref{section: b-bounded}. 
To enable the producer to detect when the bag is full, we use an approach that
is similar to detecting when the bag is empty, but we have
to ensure that these two mechanisms do not interfere with one another.
In our implementation of a 1-bounded bag in \Cref{sec:1-bounded bag}, the producer determines whether the bag is empty or full by inspecting a single object.
This makes this implementation both simpler and more efficient than the special case of our implementation of a $b$-bounded bag when $b= 1$.

\ignore{
An ABA detecting register has a strongly-linearizable implementation from bounded multi-writer registers \cite{aghazadeh2015time}.
Thus, our algorithm implies that there exists a wait-free linearizable implementation for a 1-bounded bag with a single producer from resettable readable test\&set objects and bounded multi-writer registers.
}

\section{Preliminaries}\label{sec:preliminaries}

We consider an asynchronous system where processes communicate through operations
applied to shared objects.
Each type of object supports a fixed set of operations
that can be applied to it. 
For example, a \emph{register} supports \codestyle{read}(), which returns the value of the register,
and \codestyle{write}($x$), which changes the value of the register to $x$.

Each process can be modelled as a deterministic state machine.
A \emph{step} by a process specifies an operation 
that the process applies to a shared object and the response, if any, that the operation
returns. 
The process can then perform local computation, 
perhaps based on the response of the operation, to update its state.
A process  can crash, after which it takes no more steps.

A \emph{configuration} consists of the state of every process and the value of every shared object.
In an \emph{initial configuration}, each process is in its initial state and each object has its initial value.
An \emph{execution} is an alternating sequence of configurations and steps, starting with an initial configuration. If an execution is finite, it ends with a configuration.
The order in which processes take steps is determined by an adversarial scheduler.

The \emph{consensus number} of an object is the largest positive integer $n$
such that there is an algorithm solving consensus among $n$ processes
using only instances of this object and registers. 
A register 
has consensus number 1.
Another example of an object with consensus number 1 is 
an \emph{ABA-detecting register} \cite{aghazadeh2015time}.
It supports \codestyle{dWrite}($x$)
and \codestyle{dRead}().
When \codestyle{dWrite}($x$) is performed, $x$ 
is stored in the object.
When a process $p$ performs \codestyle{dRead}(), the value 
stored in 
the object is returned, together with an output bit, which is true if and only if process $p$ has previously performed \codestyle{dRead}() and some \codestyle{dWrite}() has been performed since its last \codestyle{dRead}().
We use a restricted version of an ABA-detecting register, where
\codestyle{dWrite}() does not have an input parameter
and  \codestyle{dRead}() simply returns the output bit.

A \emph{test\&set} object 
has initial value 0 and supports only one operation, \codestyle{t\&s}().
If the value of the object was 0, this operation changes its value to 1 and returns 0.
If the value of the object was 1, it just returns 1.
In the first case,  we say that the
\codestyle{t\&s} was \emph{successful} and, in the second case, we say that it was
\emph{unsuccessful}.
A \emph{resettable test\&set} object is like a test\&set object, except that it also supports
\codestyle{reset}(), which changes the value of the object to 0.
A \emph{fetch\&increment} object 
supports \codestyle{f\&i}(), 
which increases the value of the object by 1 and returns the previous value of the object.
A \emph{compare\&swap} object supports the operation \codestyle{c\&s}($u,v$), which checks whether
the object has value $u$ and, if so, the operation is \emph{successful} and it changes the value of the object to $v$.
It always returns the value of the object immediately before the operation was performed.
For any type of object that does not support \codestyle{read}, its readable version supports  \codestyle{read} in addition to its other operations.

\ignore{
Here are some examples are objects of consensus number 2:
\begin{itemize}
\item
A \emph{test\&set} object initially has value 0. It supports \codestyle{t\&s}().
If the value of the object was 0, 
\codestyle{t\&s}()
changes its value to 1 and returns 0.
If the value of the object was 1, it just returns 1.
In the first case,  we say that the 
operation \emph{wins} or is 
\emph{successful} and, in the second case, we say that it 
\emph{loses} or is
\emph{unsuccessful}.
\item
A \emph{resettable test\&set} object is like a test\&set object, except that it also supports
\codestyle{reset}(), which changes the value of the object to 0.
\item
A \emph{fetch\&add} object supports \codestyle{f\&a}($x$), 
which adds $x$ (which may be negative) to the value of the object and returns the previous value of the object.
\item
A \emph{fetch\&increment} object is a restricted version of a \emph{fetch\&add} object.
It supports \codestyle{f\&i}(), 
which increases the value of the object by 1 and returns the previous value of the object.
\item
A \emph{swap} object supports \codestyle{swap}($x$), which changes the value of the object to $x$
and returns the previous value of the object.
\end{itemize}
For any type of object that does not support \codestyle{read}, its readable version supports 
\codestyle{read} in addition to all of its primitives
}

Two operations, $op$ and $op'$, \emph{commute} 
if, whenever $op$ is applied to the object followed by $op'$,
the resulting value of the object is the same as when $op'$ is applied
followed by $op$.
Note that the responses to $op$ and $op'$ may be different when these operations are applied in the opposite order.
The operation $op'$ \emph{overwrites} the operation $op$
if, whenever $op$ is applied to the object followed by $op'$,
the resulting value of the object is the same as when only $op'$ is applied.
Note that the response to $op'$ may be different depending on whether $op$ was applied. 
An object is \emph{interfering} if every two of its operations either commute
or one of them overwrites the other.
All interfering objects have consensus number at most 2~\cite{herlihy1991wait}.
Registers are interfering.
The test\&set, resettable test\&set, and fetch\&increment objects and their readable versions are all
interfering and have consensus number 2.

Both a \emph{stack} and a \emph{queue} are also examples of 
objects with consensus number 2 \cite{herlihy1991wait}.
However, they are not interfering.
The value of these objects is an unbounded sequence of elements, which is initially empty.
A stack supports \codestyle{push}($x$) and \codestyle{pop}(),
whereas a queue supports \codestyle{enqueue}($x$) and \codestyle{dequeue}().
A \codestyle{pop}() or \codestyle{dequeue}() that returns a value is \emph{successful}.
If it returns EMPTY, it is \emph{unsuccessful}.


A \emph{bag} is 
another object that is not interfering. Its
value is a multiset of elements, that is initially empty.
The number of elements that can be in the multiset is unbounded.
The elements are taken from a set of values $V$ that does not include $\bot$.
A bag supports two operations, \codestyle{Insert}($x$) where $x \in V$ and \codestyle{Take}.
When \codestyle{Insert}($x$) is performed, the element $x$ is added to the multiset.
It does not return anything.
If the multiset is not empty, a \codestyle{Take} operation removes an arbitrary
element from the multiset and returns it. In this case, we say that the operation is 
\emph{successful}.
If the multiset is empty, then \codestyle{Take} returns EMPTY and
we say that it is \emph{unsuccessful.}
Note that this specification is nondeterministic.

A \emph{b-bounded bag} object is an object whose value is a multiset of at most $b$ elements.
If the multiset contains fewer than $b$ elements when \codestyle{Insert}($x$) is performed, the element $x$ is added to the multiset and OK is returned.
If the multiset already contains $b$ elements, the value of the bag does not change
and FULL is returned.
For simplicity, we assume no value repetitions in the bounded bags in our proofs.

An \emph{implementation} of an object $O$ shared by a set of processes $P$ from a collection of objects $C$
provides a representation of 
$O$ using objects in $C$ and algorithms for each process
in $P$ to perform each operation supported by $O$. 
In an execution, 
an operation on an implemented object
\emph{begins} when a process performs the first step of its algorithm for performing this operation
and is \emph{completed} at the step in which the process returns from the algorithm, with the response of the operation, if any.

A bag has a straightforward implementation from a stack or a queue.
The multiset is represented by the sequence, \codestyle{Insert}($x$)
is performed by applying \codestyle{push}($x$) or \codestyle{enqueue}($x$)
and \codestyle{Take}() is performed by applying \codestyle{pop}() or \codestyle{dequeue}(). Thus a bag has consensus number at most 2.

There is a simple algorithm for solving consensus among 2 processes using two single-writer registers and a bag initialized with one element:
To propose the value $x$, a process writes $x$ to its
single-writer register and then performs a \codestyle{Take}() operation on the bag. If the \codestyle{Take}() returns an element, the process returns $x$. Otherwise, it reads and returns the value that the other process wrote to its single-writer register.
It follows from Borowsky, Gafni, and Afek's \emph{Initial State Lemma} \cite[Lemma 3.2]{borowsky1994consensus}, that consensus among 2 processes can be solved
from registers and two initially empty bags.
Thus, a bag has consensus number exactly 2.

\begin{remove}
\galy{It therefore makes sense to implement a bag using interfering objects, an approach we explore in this paper.}
\faith{I don't see the implication explaining why looking at implementations from interfering objects makes sense.
Stacks and queues are also objects with consensus number 2.}\galy{And they do have linearizable implementations from interfering objects. In any case, I wanted to have some sentence that justifies why we elaborate so much on consensus numbers in the preliminaries section. Maybe we could say that implementing a bag from interfering objects might be possible as it is not ruled out by consensus numbers (which would be the case if they all had consensus number 1), and we explore this direction for strongly-linearizable implementations, which seem like a challenge as queues and stacks that have consensus number 2 have no lock-free, strongly-linearizable implementations from interfering objects.}
\faith{I don't think it makes sense to talk about this in the Preliminaries.
I think that we already motivate looking at implementing a bag from interfering objects
in the Introduction.}
\galy{The thing is I still feel like it's unclear why we detail so much about consensus numbers. Maybe it's really redundant and irrelevant to the paper.}
\end{remove}

\begin{remove}
Two operations are \emph{concurrent} if both begin before either one completes.
A sequence of operations together with their responses is \emph{legal} if it satisfies the sequential specification of the object.
A \emph{linearization} of an execution is an ordering
of every completed operation and a (possibly empty) subset of the operations that have begun, but have not terminated, such that:
\begin{itemize}
\item
every two of these operations that are not concurrent 
appear in the order in which they began and
\item
the sequence of these operations together with their responses is legal.
\faith{This isn't quite right. We also have to assign responses to the incomplete operations.}
\end{itemize}
Equivalently,  each completed operation and a (possibly empty) subset of the operations  that have begun but have not terminated can be assigned a distinct linearization point between its beginning and its completion in the execution such that the sequence of these operations in order of their linearization points
together with their responses is legal.
\end{remove}

For any execution, $\alpha$, consider an ordering of every completed operation and a (possibly empty) subset of the operations that have begun, but have not completed, such that, for every completed operation, $op$, each of these operations that begins after $op$ has completed is ordered after $op$. A way to obtain such an ordering is to first assign a linearization point in $\alpha$
to each of these operations within its \emph{operation interval}, i.e.,
no operation is linearized before it begins or after it has completed.
Then specify an ordering of the operations that are linearized at the same point.
Suppose there is a sequential execution in which the operations in the ordering are performed in order
such that every completed operation in $\alpha$
has the same response as it does in this sequential execution. Then we say that the ordering is a \emph{linearization} of 
$\alpha$.

An implementation of an object is \emph{linearizable}~\cite{HerlihyWing90} if each of its executions has a linearization.
An implementation is \emph{strongly-linearizable}~\cite{golab2011linearizable}
if there is a function $f$ that maps each execution $\alpha$ to a linearization of $\alpha$ such that, for every prefix $\alpha'$ of $\alpha$, $f(\alpha')$ is a prefix of $f(\alpha)$.



An implementation is \emph{wait-free} if, in every execution, every operation by a process that does not crash completes within a finite number of steps by that process.
An implementation is \emph{lock-free} if every infinite execution contains an infinite number of completed operations.

\section{Related Work}

Many universal constructions (for example, from compare\&swap objects and registers or from consensus objects and registers) provide strongly-linearizable,
wait-free implementations of any shared object~\cite{golab2011linearizable}.

Treiber \cite{treiber1986systems,michael1998nonblocking} 
gave a simple lock-free, strongly-linearizable implementation of a stack  from
a readable compare\&swap object
and registers. 
Each \codestyle{push} or successful \codestyle{pop} performs exactly one successful \codestyle{c\&s},
which is where the operation is linearized.
Each unsuccessful \codestyle{pop} performs exactly one read of a compare\&swap object
that returns NULL, which is where the operation is linearized.

Michael and Scott \cite{michael1996simple} gave a lock-free, linearizable implementation of a queue 
from readable compare\&swap objects and registers.
Assuming nodes are never reused, it is possible to make a small modification to their implementation to make it strongly-linearizable \cite{hwang2022strongly}.
Each \codestyle{enqueue} or successful \codestyle{dequeue}
performs exactly one successful \codestyle{c\&s} to an object other than $Tail$,  which is where it is linearized. 
Each unsuccessful \codestyle{dequeue} performs exactly one read of a compare\&swap object
that returns NULL, which is where the operation is linearized.

\ignore{I don't think a failing dequeue could be linearized on Line D5 instead of D4 like we wanted for proving strong linearizability. This is because head->next might be CASed by an enqueue between the execution of D4 and D5, making the queue non-empty. 
In fact, I think the following execution shows that their algorithm is not strongly-linearizable:
Let deq be a call of dequeue(), enq be a call of Enqueue(1), deq' be a call of dequeue(), and enq' be a call of Enqueue(2), where each call is performed  by another process.
Consider the following execution $\alpha$ of their algorithm:
- deq runs until Line D4 and does not yet execute Line D5.
- enq runs to completion
Since enq completes in $\alpha$, it must be linearized in $\alpha$.
Let $\alpha_1$ be the continuation of $\alpha$ in which deq runs by itself to completion. It returns FALSE.
Since deq returns FALSE in $\alpha\cdot\alpha_1$, it must be linearized before enq. Hence, deq must be linearized in $\alpha$, with return value FALSE. 
Let $\alpha_2$ be another continuation of $\alpha$, in which enq' runs to completion, then deq' runs to completion, and then deq runs by itself to completion. From the code, deq fails the check in its first execution of Line D5 and continues to a second loop iteration in which it returns TRUE. Hence, deq must be linearized after enq' in $\alpha\cdot\alpha_2$ and with a different return value. This contradicts strong linearizability.
So should we go with the alternative phrasing we were considering:}
\begin{remove}
Assuming nodes are never reused, 
one of the reads their algorithm performs 
may be eliminated, resulting in a strongly-linearizable algorithm.
\end{remove}

Even though there are wait-free, linearizable implementations of registers, max-registers, 
and single-writer snapshot objects
from single-writer registers, 
Helmi, Higham, and Woelfel~\cite{HHW12} proved that there are no lock-free, strongly-linearizable implementations.
They also gave a wait-free, strongly-linearizable implementation of a bounded max-register 
from  registers.
Denysyuk and Woelfel \cite{DW2015} proved
there are no wait-free, strongly-linearizable implementations of 
max-registers and single-writer snapshot objects from registers, although they gave
lock-free strongly-linearizable implementations.
Later, Ovens and Woelfel~\cite{ovens2019strongly} gave lock-free, strongly-linearizable  implementations of an ABA-detecting register and a single-writer snapshot object from 
a bounded number of bounded size registers.

\ignore{
Recently, Attiya, Casta\~neda, and Enea~\cite{attiya2024strong} gave a
wait-free, strongly-linearizable implementation of a max-register and a single-writer snapshot object from a fetch\&add object.
They also gave\\
- a wait-free, strongly-linearizable implementation of a readable test\&set object from a test\&set object and a register,\\
- a wait-free, strongly-linearizable implementation of a readable, resettable test\&set object 
from a max-register and an infinite array of readable test\&set objects, and\\
- a lock-free, strongly-linearizable implementation of a readable fetch\&increment object
from an infinite array of readable test\&set objects.}

Afek, Gafni, and Morrison \cite{afek2007common2} gave a wait-free, linearizable implementation of a stack from interfering objects (specifically, a fetch\&add object, registers, and test\&set objects).
Attiya and Enea \cite{attiya2019putting} showed that this implementation is not strongly-linearizable.
More generally, Attiya, Casta\~neda, and Hendler~\cite{ACH18} 
proved that 
there is no lock-free, strongly-linearizable implementation of a stack or a queue shared by more than 2 processes from interfering objects.

Li~\cite{Li01} gave a lock-free, linearizable implementation of a queue from a fetch\&\-increment object, an infinite array of test\&set objects, and an infinite array of registers.
This implementation appears in \Cref{fig: nonSL bag}.
 He also gave a wait-free, linearizable implementation of a queue in which at most 2 processes
may perform \codestyle{dequeue}.
It uses a fetch\&increment object, an infinite array of registers, and each \codestyle{dequeue} dynamically allocates a constant number of registers and test\&set objects.
David~\cite{D04} gave a wait-free, linearizable implementation of a queue in which at most 1 process may \codestyle{enqueue}.
It uses registers, an infinite array of fetch\&increment objects, and a two-dimensional infinite array of swap objects.
It is unknown whether there is a wait-free, linearizable implementation of a queue from
interfering objects in which any number of processes can \codestyle{enqueue} and \codestyle{dequeue}.

\begin{figure}[h]
\begin{lstlisting}
@\underline{Shared variables:}@
    Items[1..]: an infinite array of registers, each initialized to $\bot$
    TS[1..]: an infinite array of test&set objects, each initialized to 0
    Max: a readable fetch&increment object, initialized to 1
@\underline{Insert(x)}:@ 
    max $\leftarrow$ Max.f&i()@\label{nonSL:Put:inc max}@
    Items[max].write(x)@\label{nonSL:Put:write item}@
@\underline{Take()}:@ 
    taken_old $\leftarrow$ 0@\label{nonSL:init taken old}@
    max_old $\leftarrow$ 0@\label{nonSL:init max old}@
    repeat@\label{nonSL:repeat}@
        taken_new $\leftarrow$ 0
        max_new $\leftarrow$ Max.read() - 1@\label{nonSL:take:read max}@
        for i $\leftarrow$ 1 to max_new do
            x $\leftarrow$ Items[i].read()@\label{nonSL:take:read item}@
            if x $\neq$ $\bot$ then
                if TS[i].t&s() = 0 then return x@\label{nonSL:take:t&s}@
                taken_new $\leftarrow$ taken_new + 1
        if (taken_new = taken_old) and (max_new = max_old) then return EMPTY@\label{nonSL:take:return EMPTY}@
        taken_old $\leftarrow$ taken_new@\label{nonSL:update taken old}@
        max_old $\leftarrow$ max_new@\label{nonSL:update max old}@
\end{lstlisting}
\caption{An implementation of a queue that is not a strongly-linearizable bag}\label{fig: nonSL bag}
\end{figure}

Attiya, Casta\~neda, and Enea \cite{attiya2024strong} claimed that Li's lock-free implementation of a queue was also a strongly-linearizable implementation of a bag. However, their claim is false.

Let $ins_1$ be a call of \codestyle{Insert(1)}, let $ins_2$ be a call of \codestyle{Insert(2)}, and let $tk$ be a call of
\codestyle{Take()}. Consider the following execution $\alpha$ of their algorithm: 
\begin{itemize}
    \item $ins_1$ performs its f\&i of \codestyle{Max} on \Cref{nonSL:Put:inc max}.
    \item $ins_2$ performs its f\&i of \codestyle{Max} on \Cref{nonSL:Put:inc max}.
    \item $tk$ performs \Cref{nonSL:init taken old}, \Cref{nonSL:init max old}, and does an entire iteration of the repeat loop starting on \Cref{nonSL:repeat},
    in which it sets 
    \textit{max\_new} to 2 on \Cref{nonSL:take:read max} and sets \textit{max\_old} to 2 on line \Cref{nonSL:update max old}.
    \item $tk$ starts a second iteration of the repeat loop,
    in which it sets \textit{max\_new} to 2 on \Cref{nonSL:take:read max} and reads $\bot$ from \codestyle{Items[1]} on \Cref{nonSL:take:read item}.
    However, it does not read \codestyle{Items[2]} on \Cref{nonSL:take:read item}.
    \item $ins_1$ writes 1 to \codestyle{Items[1]} on \Cref{nonSL:Put:write item} and returns.
\end{itemize}
Since $ins_1$ completes in $\alpha$, it must be linearized in $\alpha$.

Let $\alpha_1$ be the continuation of $\alpha$ in which $tk$ runs by itself to completion. It returns EMPTY as it encounters $\bot$ in every entry of \codestyle{Items} it reads.
Since $tk$ returns EMPTY in $\alpha\cdot\alpha_1$, it must be linearized before $ins_1$. Hence, $tk$ must be linearized in $\alpha$.

Let $\alpha_2$ be another continuation of $\alpha$, in which $ins_2$ writes 2 to \codestyle{Items[2]} on \Cref{nonSL:Put:write item} and returns, and then $tk$ runs by itself to completion. From the code, $tk$ reads 2 from \codestyle{Items[2]} and returns 2. 
This implies that $ins_2$ must be linearized before $tk$ in  $\alpha\cdot\alpha_2$ and, hence, in $\alpha$. Therefore $tk$ returns 2 in $\alpha \cdot \alpha_1$, which is a contradiction.

This shows that the algorithm in \Cref{fig: nonSL bag} is not a strongly-linearizable implementation of a bag for any choice of linearization points.
In particular, there is a problem with the linearization points
mentioned in \cite{attiya2024strongArxiv}:
\begin{itemize}
\item
Each \codestyle{Insert} operation is linearized when it performs \Cref{nonSL:Put:write item}.
\item
Each successful \codestyle{Take} operation is linearized when it performs a successful \codestyle{test\&set} on  \Cref{nonSL:take:t&s}.
\item 
Each unsuccessful \codestyle{Take} operation is linearized when it last reads \codestyle{Max} on \Cref{nonSL:take:read max}.
\end{itemize}
The issue is that, when an unsuccessful \codestyle{Take} operation is linearized, the bag might not be empty.
Let $ins$ be a call of \codestyle{Insert(1)} and let $tk_1$ and $tk_2$ be calls of
\codestyle{Take()}. Consider the following execution: 
\begin{itemize}
    \item $ins$ performs its f\&i of \codestyle{Max} on \Cref{nonSL:Put:inc max}.
    \item $tk_1$ performs \Cref{nonSL:init taken old}, \Cref{nonSL:init max old}, and does an entire iteration of the repeat loop starting on \Cref{nonSL:repeat}
    in which it sets 
    \textit{max\_new} to 1 on \Cref{nonSL:take:read max} and sets \textit{max\_old} to 1 on line \Cref{nonSL:update max old}.
    \item $ins$ writes 1 to \codestyle{Items[1]} on \Cref{nonSL:Put:write item} and returns.
    \item $tk_1$ starts a second iteration of the repeat loop,
    in which it sets \textit{max\_new} to 1 on \Cref{nonSL:take:read max}.
    \item $tk_2$ performs a \codestyle{Take} operation, returning the value 1 when it first performs \Cref{nonSL:take:t&s}.
    \item $tk_1$ completes its \codestyle{Take} operation, returning EMPTY when it returns
 on \Cref{nonSL:take:return EMPTY}.
\end{itemize}
Note that, when $tk_1$ last performs \Cref{nonSL:take:read max}, where it is linearized, 
the bag contains the element 1.


A $b$-bounded bag
shared by more than $2b$ processes has
no lock-free, strongly-linearizable implementation from interfering objects.
To see why, suppose $A$ is such an implementation for $n > 2b$ processes.
If a read operation is added to an interfering object, the result is still an interfering object. Hence, $A$ can be viewed as an implementation from readable interfering objects. 
Attiya, Castaneda, and Enea \cite{attiya2024strong} defined a \emph{$b$-ordering object}.
A $b$-bounded bag is an example of a $b$-ordering object.
Applying Lemma 5.2 from their paper shows 
that $b$-set agreement among $n$ processes can be solved using $A$. Since all interfering objects have consensus number at most 2, this implies that $b$-set agreement among $n>2b$ processes can be
solved using objects of consensus number at most 2.
Herlihy and Rajsbaum \cite{HR94} proved that this is impossible.

\section{A Lock-Free, Strongly-Linearizable Implementation of a Bag}\label{sec:LF SL bag}

The first algorithm we present is a strongly-linearizable implementation of an unbounded bag. The challenge in designing such an algorithm from 
interfering objects lies in identifying when the bag is empty.
To address this, we use a readable fetch\&increment object, \codestyle{Done}, which an \codestyle{Insert} operation increments as its last step, to inform \codestyle{Take} operations that the bag is not empty. 
A detailed description of the algorithm follows.

The implementation uses an infinite array of registers, $Items$, in which elements that have been inserted into the 
bag are stored, together with an infinite array of test\&set objects, $TS$.
The process that performs a successful \codestyle{t\&s}() on $TS[i]$ returns the element stored in $Items[i]$ and
removes it from the bag.

Each \codestyle{Insert}($x$) operation begins by performing \codestyle{f\&i}() on a readable fetch\&in\-crement object, \codestyle{Allocated}. 
This allocates the operation a new location within \codestyle{Items} in which it writes $x$.
Thus, the value of \codestyle{Allocated} is the index of the last allocated location within \codestyle{Items}.
Finally, the operation performs \codestyle{f\&i}() on a second 
readable fetch\&increment object, \codestyle{Done}, to inform \codestyle{Take}() operations about the insertion.



A \codestyle{Take}() operation
begins by reading
\codestyle{Done} and \codestyle{Allocated}.
Then it reads the allocated locations in \codestyle{Items}.
For each location, \codestyle{i}, if \codestyle{Items[i]} contains $x \neq \bot$,
the operation performs \codestyle{t\&s}() on the corresponding test\&set object, \codestyle{TS[i]}.
If this is successful, $x$ is returned.
After reading all the allocated locations in \codestyle{Items} without
performing a successful \codestyle{t\&s}(), the operation rereads \codestyle{Done}.
If its value has not changed since the operation last read \codestyle{Done}, EMPTY is returned.
\begin{remove}
until it finds one 
such that 
that contains a value $x \neq \bot$ and 
it performs a successful \codestyle{t\&s} on its corresponding test\&set object.
In this case, it returns $x$.
If $x$ is returned 
by a \codestyle{Take}() operation before the \codestyle{Insert}($x$) operation that inserted it increments \codestyle{Done}, we say that these operations are \emph{coupled}.
If the \codestyle{Take}() operation does not perform a successful
\codestyle{t\&s} operation,
it rereads \codestyle{Done} and, if its value has not changed since its previous read, the operation returns EMPTY.
\end{remove}
Otherwise, the operation repeats the entire sequence of steps.
Note that it begins again starting from the first location, in case an \codestyle{Insert} operation that was allocated an early location 
has recently written
to that location.
Pseudocode for our implementation of a bag
appears in \Cref{fig: SLbag2}.

\ignore{
A \codestyle{Take} operation that returns an item is linearized when it performs 
its successful \codestyle{t\&s}.
A \codestyle{Take} operation reads \codestyle{Done}
before and after examining the locations that might contain items.
If it does not find an item and the value of \codestyle{Done} has not changed,
the operation returns EMPTY. In this case, the \codestyle{Take} is linearized
when it last read \codestyle{Done}.
If the item inserted by an \codestyle{Insert} operation is returned by
by a \codestyle{Take} operation before the \codestyle{Insert} operation is completed, we say that these operations are \emph{coupled} and linearize
the \codestyle{Insert} immediately before the \codestyle{Take}.
We linearize an uncoupled \codestyle{Insert} at its last step.
This ensures that, if a \codestyle{Take} sees that the value
of \codestyle{Done} has not changed since it previously read
\codestyle{Done}, then no uncoupled \codestyle{Insert}
has been linearized during this part of the execution.
}
\ignore{
We next present the details of our implementation of a shared bag. It is implemented from:\\
- an infinite array, \codestyle{Items}, of registers, each initialized to $\bot$,\\
- an infinite array, \codestyle{TS}, of test\&set objects, each initialized to 0, and\\
- readable fetch\&increment objects, \codestyle{Allocated} and \codestyle{Done}, both initialized to 1.
- a readable fetch\&increment object, \codestyle{Allocated} and
- a counter, \codestyle{Done}, initialized to 0.
Instead of the monotonic counter, we could use an ABA-detecting register, which also has
a strongly-linearizable implementation from registers.\\
}

\begin{figure}[b!]
\begin{lstlisting}
@\underline{Shared variables}:@
    Items[1..]: an infinite array of registers, each initialized to $\bot$
    TS[1..]: an infinite array of test&set objects, each initialized to 0
    Allocated: a readable fetch&increment object, initialized to 0
    Done: a readable fetch&increment object, initialized to 0
@\underline{Insert(x)}:@
    m $\leftarrow$ Allocated.f&i() + 1@\label{unbounded:Put:inc Alloc}@
    Items[m].write(x)@\label{unbounded:Put:write item}@
    Done.f&i()@\label{unbounded:Put:Done}@
@\underline{Take()}:@ 
    repeat
        d $\leftarrow$ Done.read()@\label{unbounded:take:read Done}@
        m $\leftarrow$ Allocated.read()@\label{unbounded:take:read Alloc}@
        for i $\leftarrow$ 1 to m do
            x $\leftarrow$ Items[i].read()@\label{unbounded:take:read item}@
            if x $\neq$ $\bot$ then
                if TS[i].t&s() = 0 then return x@\label{unbounded:take:t&s}@
        if d = Done.read() then return EMPTY@\label{unbounded:take:reread Done}@
\end{lstlisting}
\caption{A lock-free, strongly-linearizable bag}\label{fig: SLbag2}
\end{figure}

\begin{remove}
When a process performs an \codestyle{Insert}($x$) operation, it is allocated a new location within \codestyle{Items} in which it writes $x$. This location is obtained by performing f\&i on \codestyle{Allocated}. Thus, the value of \codestyle{Allocated} is the first unallocated location within \codestyle{Items}.
Finally, it performs \codestyle{Done.f\&i()}.

To begin a \codestyle{Take} operation, a process reads \codestyle{Done} and \codestyle{Allocated}.
Then it looks at each of the locations in \codestyle{Items} that have been allocated, until it finds one
that contains a value $x \neq \bot$ and whose corresponding test\&set object it is able to win.
In this case, it returns $x$.
If it does not find such a location, it rereads \codestyle{Done} and, if its value has not changed since its previous read,
it returns EMPTY.
Otherwise, it repeats this sequence of steps.
\end{remove}

\medskip

\noindent{\bf Strong-linearizability.} 
Consider any execution consisting of operations on this data structure.
We linearize the operations as follows:
\begin{itemize}
    \item A \codestyle{Take}() operation that performs a successful \codestyle{t\&s}() on \Cref{unbounded:take:t&s} is linearized at this step.
    It returns the element it last read on
    \Cref{unbounded:take:read item}.
    \item A \codestyle{Take}() operation that reads \codestyle{Done} on \Cref{unbounded:take:reread Done} and obtains the same value it obtained in its previous read of \codestyle{Done} (on \Cref{unbounded:take:read Done}) is linearized at this last read. It returns EMPTY.
    \item Consider an  \codestyle{Insert}($x$) operation, $ins,$ that was allocated location \codestyle{m} and wrote $x$ to \codestyle{Items[m]} on \Cref{unbounded:Put:write item}. 
    If some \codestyle{Take}() operation performs a successful \codestyle{t\&s}() on \codestyle{TS[m]} after $ins$ performed \Cref{unbounded:Put:write item}, but before $ins$ performs \codestyle{Done.f\&i}() on \Cref{unbounded:Put:Done},
    then $ins$ is linearized immediately before this \codestyle{Take}(). In this case, we say
    that these two operations are \emph{coupled}.
    Otherwise, $ins$ is linearized when it performs \codestyle{Done.f\&i}().
\end{itemize}
The linearization point of each operation occurs within its operation interval.
At any configuration, $C$, the bag contains the difference between the multiset of elements inserted by \codestyle{Insert} operations 
linearized before $C$ and the multiset of 
elements returned by
successful \codestyle{Take} operations linearized before $C$.

\begin{remove}

\item An \codestyle{Insert}($x$) operation that 
    performs \codestyle{Done.f\&i}()
    on \Cref{unbounded:Put:Done},
    but has not already been linearized as a coupled operation,
    is linearized at this step.
\end{remove}

Each \codestyle{Take}() operation is linearized immediately before it returns.
Thus, if it is linearized at some point in an execution, it is linearized at the same point
in every extension of that execution.
Each  \codestyle{Insert}($x$) operation that is coupled
with a \codestyle{Take}() operation is linearized immediately before that \codestyle{Take}() operation. 
This means that the inserted element is taken from the bag immediately after it is inserted.
Thus, if a coupled \codestyle{Insert}($x$) operation is linearized at some point in an execution, it is linearized at the same point
in every extension of that execution.
Each uncoupled \codestyle{Insert}($x$) operation is linearized 
when it increments \codestyle{Done}. 
This ensures that, if a \codestyle{Take}() operation sees 
that the value of \codestyle{Done} has not changed between
\Cref{unbounded:take:read Done} and \Cref{unbounded:take:reread Done},
then no uncoupled \codestyle{Insert}
has been linearized during this part of the execution.
An uncoupled \codestyle{Insert}($x$) operation is linearized immediately before it returns,
so it is linearized at the same point in every extension of the execution.
Hence, if the implementation is linearizable, it is strongly-linearizable.


\ignore{
An uncoupled \codestyle{Insert} is linearized when it increments \codestyle{Done}.
This ensures that, if a \codestyle{Take} sees that the value
of \codestyle{Done} has not changed 
since it previously read
\codestyle{Done}, then no uncoupled \codestyle{Insert}
has been linearized during this part of the execution.
\galy{Linearizing an uncoupled \codestyle{Insert} operation when it increments \codestyle{Done} ensures that an unmodified \codestyle{Done} value indicates no such insertions occurred during the for loop performed by a failing \codestyle{Take} operation. Coupled \codestyle{Insert} operations do not compromise the correctness of a failing \codestyle{Take} operation, as coupled \codestyle{Insert} and \codestyle{Take} operations are linearized one immediately after the other at the same step. This means that the inserted item is immediately taken, not affecting the bag's emptiness.}
}



To show that the ordering is a linearization,
it remains to prove that the values returned by the \codestyle{Take}() operations are consistent with the sequential specifications of a bag.
Let $tk$ be a \codestyle{Take}() operation that performs a successful \codestyle{t\&s}() of
\codestyle{TS}[$i$] on \Cref{unbounded:take:t&s} and let $x \neq \bot$ be the element it read from \codestyle{Items}[$i$] 
when it last performed \Cref{unbounded:take:read item}.
Since $tk$ read $x$ from \codestyle{Items}[$i$], there
was an \codestyle{Insert}($x$) operation that wrote $x$ into \codestyle{Items}[$i$] on
\Cref{unbounded:Put:write item}. Note that there was exactly one \codestyle{Insert} operation
that was allocated $i$ on \Cref{unbounded:Put:inc Alloc}, by the semantics of \codestyle{f\&i}().
This \codestyle{Insert}($x$) operation was linearized before $tk$, either because it
executed \Cref{unbounded:Put:Done} before $tk$ performed its successful \codestyle{t\&s}() on \codestyle{TS}[$i$]
or it was coupled with $tk$ and, hence, linearized immediately before $tk$.
This occurrence of element $x$ 
remains in the bag 
until the linearization point of $tk$ 
because the element in \codestyle{Items}[$i$] is only returned by the \codestyle{Take}() operation
that performs the successful \codestyle{t\&s}() on \codestyle{TS}[$i$].

Now let $tk$ be a \codestyle{Take} operation that reads \codestyle{Done} on \Cref{unbounded:take:reread Done} and obtains the same value it obtained in its previous read of \codestyle{Done} on \Cref{unbounded:take:read Done}. 
Let $C$ be the configuration immediately after $tk$ reads \codestyle{Done} on \Cref{unbounded:take:read Done} in its last iteration of the repeat loop and
let $C'$ be the configuration immediately before 
$tk$ reads \codestyle{Done} for the last time on \Cref{unbounded:take:reread Done}.
The value of \codestyle{Done} does not change between these two configurations.
Let $m$ be the value $tk$ read from \codestyle{Allocated} on \Cref{unbounded:take:read Alloc},
which is the largest location in \codestyle{Items} that
had been previously
allocated to an \codestyle{Insert} operation.

Any element, $x$,
that was in the bag in configuration $C$ was inserted into the bag by an uncoupled \codestyle{Insert}($x$) operation.
This operation was linearized when it performed \Cref{unbounded:Put:Done}.
Suppose that it wrote $x$ into \codestyle{Items}[$i$].
This happened prior to $C$.
Note that $i \leq m$,
since $m$ was obtained when $tk$ 
read \codestyle{Allocated} on \Cref{unbounded:take:read Alloc} after $C$
and the value of \codestyle{Allocated} only increases.
When $tk$ performed \Cref{unbounded:take:read item} between configurations $C$ and $C'$, 
it read $x$ from  \codestyle{Items}[$i$].
Since $tk$ later reaches \Cref{unbounded:take:reread Done},
its \codestyle{t\&s}() on \codestyle{TS}[$i$] was unsuccessful. 
Some other \codestyle{Take} operation previously performed a successful \codestyle{t\&s}() on \codestyle{TS}[$i$] and, when it did, 
this occurrence of element $x$
was taken from the bag.
Therefore, all elements that were in the bag in configuration $C$ were taken by 
\codestyle{Take} operations other than $tk$ before $C'$.

Now consider any element that was inserted into the bag between configurations $C$ and $C'$
by some \codestyle{Insert} operation.
Since the value of \codestyle{Done} did not change between $C$ and $C'$,
this \codestyle{Insert} operation did not perform \codestyle{Done.f\&i()} on \Cref{unbounded:Put:Done}.
Hence, it was coupled with a \codestyle{Take} operation and it was immediately
taken from the bag after it was inserted into the bag.
Therefore, all elements that were inserted into the bag between configurations $C$ and $C'$
were taken from the bag prior to configuration $C'$.
Hence, the bag is empty in configuration $C'$, 
immediately before the step at which
$tk$ is linearized.

\medskip

\noindent{\bf Lock-freedom.} The \codestyle{Insert} operation performs 3 steps, so it is wait-free. 
When a \codestyle{Take} operation finishes an iteration of the loop and is about to begin another iteration, 
the value of \codestyle{Done} it last read on \Cref{unbounded:take:reread Done} was different than what it last read on \Cref{unbounded:take:read Done}. This means that some \codestyle{Insert} operation performed \codestyle{Done.f\&i()} on \Cref{unbounded:Put:Done} between these two points of the execution. This step completes the \codestyle{Insert} operation.

\medskip

\subsection{A linearizable, but not strongly-linearizable, queue}
It is worth noting that the algorithm in \Cref{fig: SLbag2}, which implements a strongly-linearizable bag, also implements a linearizable queue.
The proof of this fact is very similar to the proof used by Li~\cite{Li01}.
\ignore{
This could be shown by mapping any execution $\alpha$ of this algorithm to an execution $\beta$ of the linearizable queue by Li~\cite{Li01}, such that $\beta$ contains the same operations as $\alpha$ with the same responses and each pair of non-concurrent operations in $\beta$ are in the same order as in $\alpha$. 
\galy{We do it in two stages. First, we map any execution $\alpha$ of the algorithm in \Cref{fig: SLbag2} to an execution $\alpha'$ of the algorithm in \Cref{fig: SLbag2} as follows. All steps of all iterations but the last one of any \codestyle{Take} operation are removed. For each unsuccessful \codestyle{Take} operation $tk$, all the steps of its last iteration are moved to be one after another at the point it performs its reread of \codestyle{Done}. If any \codestyle{Insert} operation performs a write to \codestyle{Items}[$i$] in $\alpha$ after $tk$ reads \codestyle{Items}[$i$] and no \codestyle{Take} operation performs a \codestyle{t\&s} on \codestyle{TS}[$i$] in $\alpha$ before $tk$ rereads \codestyle{Done}, this write to \codestyle{Items}[$i$] is moved to be right after all the steps of $tk$ in $\alpha'$. In a second stage, we map $\alpha'$ to an execution of the algorithm in \Cref{fig: nonSL bag}. The step of writing to \codestyle{Done} in any \codestyle{Insert} operation is removed. The steps of reading \codestyle{Done} in any \codestyle{Take} operation are removed. For an unsuccessful \codestyle{Take} operation that does not obtain 1 from \codestyle{Allocated}, we add all steps of an additional iteration, one after another right after the last step of the \codestyle{Take} in $\alpha'$.}}
However, it is 
not a strongly-linearizable queue, as we next prove.
%

Let $ins_1$ be a call of \codestyle{Insert($1$)},
let $ins_2$ be a call of \codestyle{Insert($2$)},
and let $tk$, $tk_1$, and $tk_2$ be calls of \codestyle{Take()}.
Consider the following execution $\alpha$ of the algorithm in \Cref{fig: SLbag2}: 
\begin{itemize}
    \item $ins_1$ performs its f\&i of \codestyle{Allocated} on \Cref{unbounded:Put:inc Alloc} and is allocated location 1.
    \item $ins_2$ performs its f\&i of \codestyle{Allocated} on \Cref{unbounded:Put:inc Alloc} and is allocated location 2.
    \item $tk$ reads 2 from \codestyle{Allocated} on \Cref{unbounded:take:read Alloc} and then reads $\bot$ from \codestyle{Items[1]} on \Cref{unbounded:take:read item}.
    \item $ins_1$ and $ins_2$ run to completion.
\end{itemize}
Since $ins_1$ and $ins_2$ complete in $\alpha$, they must be linearized in $\alpha$.

Let $\alpha_1$ be the continuation of $\alpha$ in which $tk$ runs by itself to completion. It reads $2$ from \codestyle{Items[2]} 
on \Cref{unbounded:take:read item} during its second iteration of the for loop and returns $2$. To satisfy the semantics of a queue, $ins_2$ must be linearized before $ins_1$ in $\alpha\cdot\alpha_1$ and, hence, in $\alpha$.

Let $\alpha_2$ be another continuation of $\alpha$, in which 
$tk_1$ runs by itself to completion and then $tk_2$ runs by itself to completion.
From the code, $tk_1$ reads 1 from \codestyle{Items[1]} and returns 1, and $tk_2$ reads 2 from \codestyle{Items[2]} and returns 2. 
To satisfy the semantics of a queue, $tk_1$ must be linearized before $tk_2$ in $\alpha\cdot\alpha_2$ and, hence, in $\alpha$. This is a contradiction.


\section{A Wait-Free, Linearizable Implementation of a 1-Bounded Bag with a Single Producer}\label{sec:WF 1-bounded bag}


\begin{figure}[b!]
\begin{lstlisting}
@\underline{Shared variables:}@
    Items[1..n+1]: an array of single-writer registers, @all@ initialized to $\bot$, which can only be written to by the producer 
    TS[1..n+1]: an array of readable, resettable test&set objects, @all@ initialized to 0, except @\text{for}@ TS[1], which is initialized to 1
    Allocated: a single-writer register, initialized to 1, which can only be written to by the producer
    Hazards[1..n]: an array of single-writer registers, @all@ initialized to $\bot$, where Hazards[i] can only be written to by $P_i$
@\underline{Persistent local variables of the producer:}@
    used: a register, initialized to $\emptyset$, which contains a subset of $\{1,\ldots,n+1\}$
    m: a register, initialized to 1, which is a local copy of Allocated
@\underline{Insert(x)} by the producer:@
    if TS[m].read() = 0 then return FULL@\label{WF-1-bounded:insert:check TS[m]}@
    Items[m].write($\bot$)@\label{WF-1-bounded:insert:clear item}@
    used $\leftarrow$ used $\cup$ {m}@\label{WF-1-bounded:insert:add to used}@
    hazardous $\leftarrow$ COLLECT(Hazards)@\label{WF-1-bounded:insert:collect hazards}@
    m $\leftarrow$ some index in $\{1, \ldots, n+1\}$ - hazardous@\label{WF-1-bounded:insert:choose index}@
    Allocated.write(m)@\label{WF-1-bounded:insert:write to Allocated}@
    for all i $\in$ used - hazardous do@\label{WF-1-bounded:insert:reset loop}@
        TS[i].reset()@\label{WF-1-bounded:insert:reset TS}@
    used $\leftarrow$ used $\cap$ hazardous@\label{WF-1-bounded:insert:narrow used down}@
    Items[m].write(x)@\label{WF-1-bounded:insert:write item}@
    return OK
@\underline{Take()} by consumer $P_i$ for $i\in \{1,\ldots,n\}$:@ 
    a $\leftarrow$ Allocated.read()@\label{WF-1-bounded:take:read Allocated}@
    Hazards[i].write(a)@\label{WF-1-bounded:take:add hazard}@
    x $\leftarrow$ Items[a].read()@\label{WF-1-bounded:take:read item}@
    if x $\neq$ $\bot$ then
        if TS[a].t&s() = 0 then @\label{WF-1-bounded:take:t&s}@
            Hazards[i].write($\bot$)@\label{WF-1-bounded:take:clear hazard1}@
            return x
    Hazards[i].write($\bot$)
    return EMPTY
\end{lstlisting}
\caption{A wait-free, linearizable 1-bounded bag with one producer and $n$ consumers}\label{fig: WF 1-bounded bag single producer}
\end{figure}

Our first bounded bag algorithm from interfering objects is a wait-free, linearizable implementation of a bag that can contain at most one element. It is shared by $n$ processes, $P_1,\ldots,P_{n}$, called \emph{consumers}, that can perform \codestyle{Take}() and a single process, 
called the \emph{producer}, that can perform \codestyle{Insert}($x$). 
To keep the space bounded, the producer reuses locations in \codestyle{Items} to store new elements.
The producer announces the 
most recent location it has 
allocated 
by writing it to the shared register \codestyle{Allocated}.
There is a resettable test\&set object, instead of a test\&set object, associated with each location in \codestyle{Items}.
An array, \codestyle{Hazards}, of 
hazard pointers~\cite{michael2004hazard} 
is used to prevent multiple consumers from returning the same element, and to ensure that each element is consumed before being overwritten by a new element.
Each consumer announces a location it is about to access
and the producer avoids reusing the announced locations.
Specifically, the producer
avoids resetting the corresponding test\&set objects and writing new elements to the corresponding 
locations in \codestyle{Items}.



\ignore{
Similarly to the unbounded bag in \Cref{sec:LF SL bag}, this implementation uses:\\
- \codestyle{Items[1..n+1]}, an array of $n+1$ single-writer registers, each initialized to $\bot$, which can only be written to by the producer, and\\
- \codestyle{TS[1..n+1]}, an array of $n+1$ resettable, readable test\&set objects, each initialized to 0, except for TS[1], which is initialized to 1.\\
In addition, it uses\\
- \codestyle{Allocated}, a single-writer register, initialized to 1, which can only be written to by the producer, and\\ 
- \codestyle{Hazards}[$1..n$], an array of $n$ single-writer registers, each initialized to $\bot$, where Hazards[i] can only be written to by $P_i$.\\
}

\ignore{
Our use of \codestyle{Hazards} is similar to hazard pointers \cite{michael2004hazard}. Intuitively, the hazard announcements by the consumers are used to prevent the producer from reclaiming locations that consumers might be trying to access. More specifically, they are used to prevent the producer from resetting a test\&set object if some consumer is poised to perform a \codestyle{t\&s} on it.
This ensures that multiple consumers do not return the same item.
}


The producer maintains two persistent local variables, \codestyle{m}, which is the last location in \codestyle{Items} it 
allocated,
and \codestyle{used}, which is a set of locations it has used and will
need to reset before they are reallocated. To eliminate the need for a special case to handle the first \codestyle{Insert} call (which is the only one not preceded by a previous insertion), the initial state of the data structure is as if location 1 had been allocated to the producer, it had inserted an element into the bag in this location, and then 
this element had been taken by some consumer. This is simulated by setting the initial values of \codestyle{m}, \codestyle{Allocated}, and \codestyle{TS[1]} to 1. 

The producer begins an \codestyle{Insert}($x$) operation by checking whether the test\&s\-et object,  \codestyle{TS[m]},
in the last allocated location,  \codestyle{m},
is 0. If so, it returns FULL.
Otherwise, it overwrites the element in \codestyle{Items[m]} with $\bot$
and adds the index \codestyle{m} to \codestyle{used}.
Afterwards, it collects the set of \emph{hazardous} locations stored in \codestyle{Hazards}.
It then allocates an arbitrary location from $\{1,\ldots,n+1\}$
that is not hazardous
and announces this location by writing it to \codestyle{Allocated}. 
Next, it resets the test\&set for each location in \codestyle{used} that is not hazardous.
Then it removes these locations from \codestyle{used}.
Finally,  it writes $x$ to 
the newly allocated location in \codestyle{Items} and
returns OK.

To perform a \codestyle{Take}() operation, a consumer reads \codestyle{Allocated}. It announces 
the location
it read in \codestyle{Hazards} and then reads the value in this location. 
If it is an element $x\neq\bot$,
the consumer performs \codestyle{t\&s}() on the resettable test\&set object for this location.
It then clears its announcement. 
If the \codestyle{t\&s}() was successful, it returns $x$. 
If either $x = \bot$ or the \codestyle{t\&s}() was unsuccessful, it returns EMPTY.

If \codestyle{Items}[$a$] 
contains an element $x \neq \bot$, but this element has already been taken from the bag,
it is essential that no consumer can perform a successful \codestyle{t\&s}() on 
the associated test\&set object, \codestyle{TS}[$a$], and take the element again.
To ensure this, a consumer, $P_i$ writes $a$ to \codestyle{Hazards}[$i$] 
before reading \codestyle{Items}[$a$].
This prevents the producer from resetting \codestyle{TS}[$a$] while $P_i$ is poised to access \codestyle{TS}[$a$].


After it has set \codestyle{Items[m]} to $\bot$, but
before resetting any test\&set objects, the producer will collect the locations that appear in \codestyle{Hazards} and refrain from resetting the test\&set objects
associated with the hazardous locations it collected.
Hence, it will not reset a test\&set object which any consumer is poised to access. 

Assuming that the 
universe of elements that can be inserted into the bag is bounded,
the registers used in the implementations store bounded values and thus, the amount of space used by the implementation is bounded. Pseudocode for our implementation appears in \Cref{fig: WF 1-bounded bag single producer}.
Because the code contains no unbounded loops,
the implementation is wait-free.


\subsection{Linearizability} 
Consider any execution consisting of operations on this
data structure.
We linearize the operations as follows:
\begin{itemize}
    \item
    A \codestyle{Take}() operation that performs a successful \codestyle{t\&s}() on \Cref{WF-1-bounded:take:t&s} is linearized at this step.
    It returns the element it last read on \Cref{WF-1-bounded:take:read item}.
    \item
    Consider a \codestyle{Take}() operation, $tk$, that returns EMPTY.
    If \codestyle{Items}[$a$] = $\bot$ or \codestyle{TS}[$a$] = 1 
    when $tk$ read $a$ from \codestyle{Allocated} on \Cref{WF-1-bounded:take:read Allocated},
    then $tk$ is linearized at this step.
    Otherwise, we can show that 
    some other
    \codestyle{Take}() operation performed a successful \codestyle{t\&s}() on \codestyle{TS}[$a$] after $tk$ performed \Cref{WF-1-bounded:take:read Allocated},
    but before $tk$ returned.
In this case,
$tk$ is linearized immediately after the first such  \codestyle{Take}() operation.
    \item
    An \codestyle{Insert}($x$) operation that reads 0 from \codestyle{TS[m]} (on \Cref{WF-1-bounded:insert:check TS[m]}) is linearized at this read. It returns FULL. 
    \item
    An \codestyle{Insert}($x$) operation that 
    writes $x$
    on \Cref{WF-1-bounded:insert:write item} is linearized at this write. It returns OK.
\end{itemize}

Keeping track of hazardous locations enables objects to be safely reused so that bounded space is used.
However, this makes the algorithm and its proof of correctness more intricate.
We now
prove that \Cref{fig: WF 1-bounded bag single producer}
is a linearizable implementation of a 1-bounded bag.
We begin by proving
a number of facts about the algorithm to facilitate our proof.




First, we show that every location of \codestyle{Items} contains $\bot$, except for  when the producer is in a certain part of its code.
When it is in that part of its code, only location \codestyle{Allocated} contains an element.

\begin{lemma}\label{WF-invariant: no 2 items concurrently}\label{WF-invariant: no item right before write}
Before the producer first performs \Cref{WF-1-bounded:insert:write item} and when the producer is strictly between \Cref{WF-1-bounded:insert:clear item} and \Cref{WF-1-bounded:insert:write item},
all entries of \codestyle{Items} are $\bot$. Otherwise, \codestyle{Items[Allocated]} $\neq \bot$ and all other entries of \codestyle{Items} are $\bot$.
\end{lemma}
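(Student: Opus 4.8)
The plan is to prove the statement as a global invariant, by induction on the number of steps in a prefix of the execution. Before starting the induction I would record one auxiliary observation that will be used twice: \emph{whenever the producer is poised to execute \Cref{WF-1-bounded:insert:clear item} or \Cref{WF-1-bounded:insert:write item}, its local variable \codestyle{m} equals the current value of \codestyle{Allocated}}. This follows directly from the control flow of \codestyle{Insert}: \codestyle{m} is modified only on \Cref{WF-1-bounded:insert:choose index} and \codestyle{Allocated} only on \Cref{WF-1-bounded:insert:write to Allocated} (apart from initialization, which sets both to $1$); within a single \codestyle{Insert} call, \Cref{WF-1-bounded:insert:choose index} precedes \Cref{WF-1-bounded:insert:write to Allocated}, which writes the then-current value of \codestyle{m} into \codestyle{Allocated}; hence \codestyle{m} and \codestyle{Allocated} agree from the most recent execution of \Cref{WF-1-bounded:insert:write to Allocated} (or from the initial configuration) until the next execution of \Cref{WF-1-bounded:insert:choose index}, and \Cref{WF-1-bounded:insert:clear item} lies strictly before that next \Cref{WF-1-bounded:insert:choose index}, while \Cref{WF-1-bounded:insert:write item} lies strictly after the matching \Cref{WF-1-bounded:insert:write to Allocated} and strictly before any later \Cref{WF-1-bounded:insert:choose index}.

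For the base case, in the initial configuration every entry of \codestyle{Items} is $\bot$ and the producer has not yet performed \Cref{WF-1-bounded:insert:write item}, so the first clause holds. For the inductive step, I would first note that the only shared objects appearing in the statement are \codestyle{Items} and \codestyle{Allocated}; consumers write to neither of them (both are single-writer registers owned by the producer), and among the producer's steps only \Cref{WF-1-bounded:insert:clear item} and \Cref{WF-1-bounded:insert:write item} change \codestyle{Items}, while only \Cref{WF-1-bounded:insert:write to Allocated} changes \codestyle{Allocated}. Every other step leaves \codestyle{Items} and \codestyle{Allocated} unchanged and does not move the producer into or out of the region strictly between \Cref{WF-1-bounded:insert:clear item} and \Cref{WF-1-bounded:insert:write item}, nor cause the producer to perform \Cref{WF-1-bounded:insert:write item} for the first time, so it preserves the invariant trivially.

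It remains to handle the three interesting steps. If the producer executes \Cref{WF-1-bounded:insert:write to Allocated}, then by the induction hypothesis it is strictly between \Cref{WF-1-bounded:insert:clear item} and \Cref{WF-1-bounded:insert:write item}, so all entries of \codestyle{Items} are $\bot$; the step only rewrites \codestyle{Allocated}, so \codestyle{Items[Allocated]} is still $\bot$ together with every other entry, and the producer remains in that region, so the first clause continues to hold. If the producer executes \Cref{WF-1-bounded:insert:clear item}, then by the auxiliary observation \codestyle{m} $=$ \codestyle{Allocated} just before the step; by the induction hypothesis, before the step either all entries of \codestyle{Items} are already $\bot$ (if \Cref{WF-1-bounded:insert:write item} has never been performed) or \codestyle{Items[Allocated]} is the only possibly non-$\bot$ entry; in both cases, writing $\bot$ into \codestyle{Items[m]} $=$ \codestyle{Items[Allocated]} makes all entries $\bot$, and the producer is now strictly between \Cref{WF-1-bounded:insert:clear item} and \Cref{WF-1-bounded:insert:write item}, so the first clause holds. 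Finally, if the producer executes \Cref{WF-1-bounded:insert:write item}, then by the induction hypothesis it was strictly between \Cref{WF-1-bounded:insert:clear item} and \Cref{WF-1-bounded:insert:write item}, so all entries of \codestyle{Items} were $\bot$, and by the auxiliary observation \codestyle{m} $=$ \codestyle{Allocated}; writing $x$ into \codestyle{Items[m]} $=$ \codestyle{Items[Allocated]} leaves \codestyle{Items[Allocated]} $\neq \bot$ (since $x \in V$ and $\bot \notin V$) and all other entries $\bot$, and afterwards the producer is neither before its first \Cref{WF-1-bounded:insert:write item} nor strictly between \Cref{WF-1-bounded:insert:clear item} and \Cref{WF-1-bounded:insert:write item}, so the conclusion that \codestyle{Items[Allocated]} $\neq \bot$ and all other entries are $\bot$ holds.

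I expect the main obstacle to be stating and applying the auxiliary observation \codestyle{m} $=$ \codestyle{Allocated} at exactly the right program points, since it is precisely what connects the location cleared on \Cref{WF-1-bounded:insert:clear item} and the location written on \Cref{WF-1-bounded:insert:write item} to the current value of \codestyle{Allocated}; once that is in place, the rest is a routine step-by-step case analysis, and the fact that consumers touch neither \codestyle{Items} nor \codestyle{Allocated} keeps that analysis short.
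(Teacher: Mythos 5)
Your proof is correct and follows essentially the same route as the paper's: both hinge on the observation that \codestyle{m} and \codestyle{Allocated} agree at the clear and write steps, so the write of $\bot$ on the clear line erases exactly the location filled by the previous \codestyle{Insert}, and only the producer ever touches \codestyle{Items} or \codestyle{Allocated}. The paper presents this as a short direct trace of the producer's alternating writes, whereas you package the same content as an explicit induction on the execution with the \codestyle{m} $=$ \codestyle{Allocated} observation factored out as an auxiliary claim.
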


\begin{proof}
    All components of \codestyle{Items} are initialized to $\bot$. 
    The first successful \codestyle{Insert}  writes an element to \codestyle{Items[Allocated]}
    on \Cref{WF-1-bounded:insert:write item}. Any subsequent successful \codestyle{Insert}  writes $\bot$  to \codestyle{Items[Allocated]} on \Cref{WF-1-bounded:insert:clear item}. Note that the producer does not change the value of \codestyle{Allocated} between \Cref{WF-1-bounded:insert:write item} and \Cref{WF-1-bounded:insert:clear item},
    so, immediately after \Cref{WF-1-bounded:insert:clear item}, all components of  \codestyle{Items} are $\bot$.
    Later, on \Cref{WF-1-bounded:insert:write item}, the producer
 writes an element to \codestyle{Items[Allocated]}.
\end{proof}

Next, we show that 
when a location in \codestyle{TS} can be
reset, the corresponding location in
\codestyle{Items}  contains  $\bot$.

\begin{lemma}\label{WF-invariant: bot while in used}
    If $k$ $\in$ \codestyle{used}, then \codestyle{Items[k]} = $\bot$.
\end{lemma}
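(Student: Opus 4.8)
The plan is to prove this as a global invariant, by induction on the length of an arbitrary execution. The base case is immediate: initially \codestyle{used} $= \emptyset$, so the implication holds vacuously. For the inductive step, I would examine only the steps that can affect either side of the implication. The set \codestyle{used} is a persistent local variable of the producer that is modified only on \Cref{WF-1-bounded:insert:add to used} (which adds the current value of \codestyle{m}) and on \Cref{WF-1-bounded:insert:narrow used down} (which replaces \codestyle{used} by \codestyle{used} $\cap$ \codestyle{hazardous}); and every component of \codestyle{Items} is a single-writer register that only the producer can write, changed only on \Cref{WF-1-bounded:insert:clear item} and \Cref{WF-1-bounded:insert:write item}. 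Every other step of the execution --- including every step taken by a consumer --- leaves both \codestyle{used} and the contents of \codestyle{Items} unchanged, and hence preserves the invariant. So only four cases remain.

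Three of them are routine. A step executing \Cref{WF-1-bounded:insert:narrow used down} can only shrink \codestyle{used}, so it cannot falsify the implication. A step executing \Cref{WF-1-bounded:insert:clear item} writes $\bot$ into \codestyle{Items[m]}, which cannot falsify ``\codestyle{Items[k]} $= \bot$'' for any $k$. For \Cref{WF-1-bounded:insert:add to used}, the only location added to \codestyle{used} is the current value of \codestyle{m}; the producer's write of $\bot$ to \codestyle{Items[m]} on \Cref{WF-1-bounded:insert:clear item} immediately precedes this step, with \codestyle{m} unchanged, and no other process ever writes \codestyle{Items}, so \codestyle{Items[m]} $= \bot$ at the moment \codestyle{m} is added to \codestyle{used}.

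The one case needing real care is a step executing \Cref{WF-1-bounded:insert:write item}, where the producer writes $x \neq \bot$ into \codestyle{Items[m]}; here I must argue that \codestyle{m} $\notin$ \codestyle{used} at that moment. The key point is that the value of \codestyle{m} used on \Cref{WF-1-bounded:insert:write item} was chosen on \Cref{WF-1-bounded:insert:choose index} from the complement of \codestyle{hazardous} within $\{1, \ldots, n+1\}$, so \codestyle{m} $\notin$ \codestyle{hazardous}; and between \Cref{WF-1-bounded:insert:choose index} and \Cref{WF-1-bounded:insert:write item} the producer executes \Cref{WF-1-bounded:insert:narrow used down}, after which \codestyle{used} $\subseteq$ \codestyle{hazardous}, with neither \codestyle{used} nor \codestyle{m} modified again before \Cref{WF-1-bounded:insert:write item}. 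Hence \codestyle{m} $\notin$ \codestyle{used} when $x$ is written into \codestyle{Items[m]}, and the invariant is preserved. I expect this last case --- pinning down that the newly allocated \codestyle{m} is excluded from \codestyle{used} precisely because \codestyle{used} has just been narrowed to a subset of \codestyle{hazardous} while \codestyle{m} was taken from its complement --- to be the only subtlety; everything else is bookkeeping.
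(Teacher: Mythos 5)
Your proof is correct and follows essentially the same route as the paper's: both hinge on the observation that \codestyle{Items[m]} is set to $\bot$ just before \codestyle{m} enters \codestyle{used}, and that before any element is written to \codestyle{Items[m]} on \Cref{WF-1-bounded:insert:write item}, the location \codestyle{m} was chosen outside \codestyle{hazardous} while \codestyle{used} was narrowed to a subset of \codestyle{hazardous}, so \codestyle{m} $\notin$ \codestyle{used} at the write. You merely package the argument as a more explicit induction over all step types, which adds bookkeeping but no new ideas.
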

\begin{proof}
    The producer adds the location \codestyle{m} = $k$ to \codestyle{used} on \Cref{WF-1-bounded:insert:add to used}, after it sets \codestyle{Items[m]} to $\bot$ on \Cref{WF-1-bounded:insert:clear item}. 
    Thus, \codestyle{Items[k]} = $\bot$ when $k$ is added to \codestyle{used}. 
    Suppose that an element is subsequently written to \codestyle{Items[k]}.
    This occurs on \Cref{WF-1-bounded:insert:write item}, with \codestyle{m} = $k$.
    The value of \codestyle{m} was chosen on \Cref{WF-1-bounded:insert:choose index} to be a value
    that is not in \codestyle{hazardous}.
    Thus, if $k$ was in \codestyle{used}, it was removed from \codestyle{used} on
    \Cref{WF-1-bounded:insert:narrow used down}, before an element was written to \codestyle{Items[m]}.
\end{proof}

When a \codestyle{Take} operation returns an element, we can identify an interval
immediately before it performed its successful \codestyle{t\&s} in which this test\&set object was 0
and the corresponding location in \codestyle{Items} contained the value it returned.

\begin{lemma}\label{WF-lemma using hazards}
Suppose $tk$ is a \codestyle{Take} operation that 
performs a successful \codestyle{t\&s}() of \codestyle{TS[$a$]} on \Cref{WF-1-bounded:take:t&s}.
Let $x$ be the element that $tk$ last read from \codestyle{Items[$a$]}
on \Cref{WF-1-bounded:take:read item}.
Then \codestyle{TS[$a$]} = 0 and \codestyle{Items[$a$]} = $x$ between
when the producer last wrote to \codestyle{Items[$a$]} prior to the read
and when $tk$ performed  its successful \codestyle{t\&s}().
\end{lemma}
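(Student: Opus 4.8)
Since $tk$ reaches \Cref{WF-1-bounded:take:t&s}, the value $x$ it read on \Cref{WF-1-bounded:take:read item} is $\neq\bot$. The producer writes to \codestyle{Items[$a$]} only on \Cref{WF-1-bounded:insert:clear item} (writing $\bot$) and \Cref{WF-1-bounded:insert:write item} (writing an element), so the last such write before $tk$'s read, call its step $w$, wrote $x$ on \Cref{WF-1-bounded:insert:write item}; by the no-repetition assumption this is the unique \codestyle{Insert}($x$) operation, which we call $ins_a$, and $a$ is the index it chose on \Cref{WF-1-bounded:insert:choose index}. Write $h,r,s$ for $tk$'s steps on \Cref{WF-1-bounded:take:add hazard}, \Cref{WF-1-bounded:take:read item}, \Cref{WF-1-bounded:take:t&s}; then $h<r<s$, and since only $P_i$ writes \codestyle{Hazards[$i$]} and $tk$ does not write it between $h$ and its next write to it on \Cref{WF-1-bounded:take:clear hazard1} (which follows $s$), we have \codestyle{Hazards[$i$]}$=a$ throughout $[h,s]$. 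We must show \codestyle{Items[$a$]}$=x$ on $[w,s]$ and \codestyle{TS[$a$]}$=0$ on $[w,s)$.

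\textbf{Base case and reduction.} First, \codestyle{TS[$a$]}$=0$ in the configuration right after $w$: this follows from the code, since just before \Cref{WF-1-bounded:insert:write item} the operation $ins_a$ runs the reset loop of \Cref{WF-1-bounded:insert:reset loop}--\ref{WF-1-bounded:insert:reset TS}, and together with \Cref{WF-invariant: bot while in used} and the easy invariant that a test\&set object equal to $1$ always belongs to the currently allocated location whose element has not yet been cleared, this forces \codestyle{TS[$a$]}$=0$ when $x$ is written to \codestyle{Items[$a$]}. Now let $\sigma$ be the first step strictly after $w$ that either writes \codestyle{Items[$a$]}, resets \codestyle{TS[$a$]}, or is a successful \codestyle{t\&s}() on \codestyle{TS[$a$]}. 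Since $s$ is such a step and $s>w$, $\sigma$ exists with $\sigma\le s$; if $\sigma=s$ we are done, because then no write to \codestyle{Items[$a$]} and no reset or successful \codestyle{t\&s}() on \codestyle{TS[$a$]} occurs in $(w,s)$, so with the base case the claim holds. So assume for contradiction $\sigma<s$; then \codestyle{Items[$a$]}$=x$ and \codestyle{TS[$a$]}$=0$ throughout $[w,\sigma)$. The producer is sequential, $ins_a$'s \Cref{WF-1-bounded:insert:clear item} and \Cref{WF-1-bounded:insert:reset TS} steps precede $w$, and $ins_a$'s \Cref{WF-1-bounded:insert:write item} step is $w$; so any \codestyle{Insert} performing $\sigma$ on \Cref{WF-1-bounded:insert:clear item}, \Cref{WF-1-bounded:insert:reset TS} or \Cref{WF-1-bounded:insert:write item} starts after $ins_a$ completes.

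\textbf{Case analysis on $\sigma$.} If $\sigma$ resets \codestyle{TS[$a$]}, then $a\in$\codestyle{used} when its \codestyle{Insert} runs \Cref{WF-1-bounded:insert:reset TS}, so \codestyle{Items[$a$]}$=\bot$ there by \Cref{WF-invariant: bot while in used}, contradicting \codestyle{Items[$a$]}$=x$. If $\sigma$ is \Cref{WF-1-bounded:insert:write item} of a later \codestyle{Insert} $J$ (so $J$ chose $a$), then $J$ did not return FULL, hence executed \Cref{WF-1-bounded:insert:clear item}; using that no write to \codestyle{Items[$a$]} occurs in $(w,\sigma)$ one checks that the most recent successful \codestyle{Insert} before $J$ is $ins_a$, so $J$'s \Cref{WF-1-bounded:insert:clear item} writes to \codestyle{Items[$a$]} at a step in $(w,\sigma)$, contradicting minimality of $\sigma$. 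If $\sigma$ is \Cref{WF-1-bounded:insert:clear item} of a later \codestyle{Insert} $J$ writing \codestyle{Items[$a$]}, then $J$ passed \Cref{WF-1-bounded:insert:check TS[m]} reading \codestyle{TS[$a$]}$=1$ at a step in $(w,\sigma)$, contradicting \codestyle{TS[$a$]}$=0$ there. The remaining case, $\sigma$ a successful \codestyle{t\&s}() on \codestyle{TS[$a$]} by a consumer $tk_c$ at step $s_c$, is the crux. First, $tk_c$'s read on \Cref{WF-1-bounded:take:read item} cannot precede $w$: otherwise, as it read a value $\neq\bot$, \Cref{WF-invariant: no 2 items concurrently} gives \codestyle{Allocated}$=a$ at that read; since $tk_c$ had already written $a$ to its hazard register, and $ins_a$ later chose $a$ (so $a\notin$\codestyle{hazardous} at $ins_a$'s \Cref{WF-1-bounded:insert:collect hazards}), the COLLECT of $ins_a$ must have preceded $tk_c$'s write to \codestyle{Hazards}, putting $tk_c$'s read strictly between \Cref{WF-1-bounded:insert:clear item} and \Cref{WF-1-bounded:insert:write item} of $ins_a$, where \Cref{WF-invariant: no 2 items concurrently} says all of \codestyle{Items} is $\bot$ --- a contradiction. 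Hence $tk_c$ read $x$ from \codestyle{Items[$a$]} at a step in $[w,\sigma)$, and in particular (as $tk$ also reads $x$ at $r$ and no write to \codestyle{Items[$a$]} lies in $(w,s_c)$) \codestyle{Items[$a$]}$=x$ throughout $[w,r]$. Now $tk$'s \codestyle{t\&s}() at $s>s_c$ succeeds, so \codestyle{TS[$a$]}, which is $1$ after $s_c$, is reset somewhere in $(s_c,s)$ by some \codestyle{Insert} $J_R$; since $J_R$ resets \codestyle{TS[$a$]} we have $a\notin$\codestyle{hazardous}($J_R$), so $J_R$'s COLLECT read \codestyle{Hazards[$i$]}$\neq a$ and therefore precedes $h$, whence $J_R$'s \Cref{WF-1-bounded:insert:clear item} precedes $h<r$. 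If $J_R$'s previous allocation was $a$, its \Cref{WF-1-bounded:insert:clear item} writes $\bot$ to \codestyle{Items[$a$]} in $(w,r)$; otherwise, since $ins_a$ removed $a$ from \codestyle{used} on \Cref{WF-1-bounded:insert:narrow used down}, some earlier \codestyle{Insert} $J_U$ (after $ins_a$, before $J_R$) re-added $a$ to \codestyle{used}, and its \Cref{WF-1-bounded:insert:clear item} writes $\bot$ to \codestyle{Items[$a$]} at a step in $(w,r)$. Either way this contradicts \codestyle{Items[$a$]}$=x$ on $[w,r]$, completing the proof.

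\textbf{Main obstacle.} The delicate case is the last one: a single extra successful \codestyle{t\&s}() on \codestyle{TS[$a$]} is not in itself contradictory, so one must feed in the hypothesis that $tk$ itself later succeeds and propagate the consequences --- that success forces a reset of \codestyle{TS[$a$]}, which by \Cref{WF-invariant: bot while in used} requires $a\in$\codestyle{used}, which (since $ins_a$ emptied $a$ out of \codestyle{used}) requires $a$ to be re-added and hence \codestyle{Items[$a$]} to be re-cleared, and the hazard pointer is exactly what pins this re-clearing to occur before $tk$'s announcement on \Cref{WF-1-bounded:take:add hazard}, producing the collision with \codestyle{Items[$a$]}$=x$. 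Getting all the timing inequalities aligned, and handling whether the resetting \codestyle{Insert} or some earlier one re-added $a$ to \codestyle{used}, is the part that needs care; the earlier lemmas (\Cref{WF-invariant: no 2 items concurrently} and \Cref{WF-invariant: bot while in used}) do the rest of the work.
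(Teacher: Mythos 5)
Your overall strategy (take the first ``bad'' step $\sigma$ after the producer's write $w$ and do a case analysis) is workable, and your handling of the hardest case --- a competing successful \codestyle{t\&s}() on \codestyle{TS[$a$]} --- correctly deploys the hazard-pointer argument that is the heart of the paper's proof. But there is a genuine gap at your ``base case.'' The claim that \codestyle{TS[$a$]} $= 0$ in the configuration right after $w$ is not a code-level triviality: it is exactly \Cref{WF-lemma: TS=0 on write to Items} of the paper, whose proof is given \emph{after} this lemma and relies on \Cref{WF-corollary to lemma using hazards}, which is itself a consequence of the lemma you are proving. So importing it as a ``base case'' risks circularity unless you prove it independently. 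Your proposed justification --- the reset loop plus ``the easy invariant that a test\&set object equal to 1 always belongs to the currently allocated location whose element has not yet been cleared'' --- does not work: that invariant is false. In the initial configuration \codestyle{TS[1]} $=1$ while \codestyle{Items[1]} $=\bot$; and more substantively, if a consumer wins \codestyle{TS[$k$]} and the producer then clears \codestyle{Items[$k$]} but finds $k$ hazardous (so it cannot reset \codestyle{TS[$k$]}) and allocates elsewhere, then \codestyle{TS[$k$]} $=1$ while $k$ is neither the allocated location nor uncleared. The reset loop of $ins_a$ also does not by itself give you \codestyle{TS[$a$]}$=0$: if $a\notin\codestyle{used}$ at that point, nothing is reset, and you must separately rule out a successful \codestyle{t\&s}() on \codestyle{TS[$a$]} landing between the last reset of \codestyle{TS[$a$]} and $w$ --- which again requires the hazard argument. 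This unproven base case is load-bearing: your \Cref{WF-1-bounded:insert:clear item} subcase is dismissed by ``contradicting \codestyle{TS[$a$]}$=0$ there,'' which has no other source.

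The paper avoids this trap by arguing in the other direction: it starts from the fact that \codestyle{TS[$a$]} $=0$ immediately \emph{before} $tk$'s successful \codestyle{t\&s}() (which is free), supposes \codestyle{TS[$a$]}$=1$ somewhere in the interval, deduces that a reset must occur before the \codestyle{t\&s}(), and uses the hazard pointer (plus \Cref{WF-invariant: no item right before write}) to show the producer's COLLECT would have seen $a$ and refused to reset --- no forward base case needed. The cleanest repair of your proof is to adopt that direction for the test\&set half of the claim (you already have all the ingredients in your last case), or else to prove your base case from scratch with its own hazard argument rather than an unstated invariant.
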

\begin{proof}
Let $D_2$ be the configuration right before this step, let
$D_1$ be the configuration right before $tk$ reads \codestyle{Items[a]} 
on \Cref{WF-1-bounded:take:read item}, and let $D_0$ be the configuration right after the producer last wrote to \codestyle{Items[a]} on \Cref{WF-1-bounded:insert:write item} before $D_1$.
Let $P_i$ be the process that performed $tk$.
Prior to reading $x$ from \codestyle{Items[a]} right after $D_1$, process $P_i$ wrote $a$ to \codestyle{Hazards[i]} on \Cref{WF-1-bounded:take:add hazard}.
It next writes to \codestyle{Hazards[i]} on \Cref{WF-1-bounded:take:clear hazard1}, which occurs after $D_2$. So, between $D_1$ and $D_2$,
\codestyle{Hazards[i]} = $a$.
Since $tk$ reads $x$ from \codestyle{Items[a]} immediately after $D_1$
and the producer does not write to \codestyle{Items[a]} on \Cref{WF-1-bounded:insert:write item} between $D_0$ and $D_1$, 
\codestyle{Items[a]} = $x$ throughout this interval.
    
Right after $D_2$, a \codestyle{t\&s} of \codestyle{TS[a]} succeeds, so 
\codestyle{TS[a]} = 0 at $D_2$.
Assume, for sake of contradiction, that \codestyle{TS[a]} = 1 at some point between $D_0$ and $D_2$. Then, after this point, but before $tk$
performed its successful \codestyle{t\&s} on \codestyle{TS[a]}, the producer
must have reset \codestyle{TS[a]} on \Cref{WF-1-bounded:insert:reset TS}.
By \Cref{WF-invariant: no item right before write}, \codestyle{Items[a]} = $\bot$ when the reset was performed. Thus, between  $D_1$ and $D_2$, the producer must have written $\bot$ to \codestyle{Items[a]} on \Cref{WF-1-bounded:insert:clear item} and then performed the reset.
Between these two steps, the producer collects
\codestyle{Hazards} and reads $a$ from \codestyle{Hazards[i]}.
But, by \Cref{WF-1-bounded:insert:reset loop}, the producer does not reset \codestyle{TS[a]} on \Cref{WF-1-bounded:insert:reset TS}.
This is a contradiction. Thus \codestyle{TS[a]} = 0 between $D_0$ and $D_2$.

After writing $x$ to \codestyle{Items[a]} just before $D_0$,
    the producer next writes to \codestyle{Items[a]}
    on \Cref{WF-1-bounded:insert:clear item}.
    Between these two writes, the producer must read 1 from  \codestyle{TS[a]} on \Cref{WF-1-bounded:insert:check TS[m]}.
But \codestyle{TS[a]} = 0 between $D_0$ and $D_2$.
    Thus, \codestyle{Items[a]} = $x$ between $D_0$ and $D_2$. 
\end{proof}

It easily follows that,
when a successful \codestyle{t\&s} is performed on a location in \codestyle{TS}, the corresponding location in \codestyle{Items} does not contain  $\bot$.

\begin{corollary}
\label{WF-corollary to lemma using hazards}
When a \codestyle{Take} operation performs a successful \codestyle{t\&s} on \codestyle{TS[a]},
\codestyle{Items[a]} $\neq \bot$.
\end{corollary}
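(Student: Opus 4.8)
The plan is to obtain this as an immediate consequence of \Cref{WF-lemma using hazards}, so the proof will be very short. First I would observe, by inspecting the \codestyle{Take} code, that a consumer reaches the successful \codestyle{t\&s}() on \Cref{WF-1-bounded:take:t&s} only after the test $x \neq \bot$ has succeeded, where $x$ is the value it last read from \codestyle{Items[$a$]} on \Cref{WF-1-bounded:take:read item}. Hence $x \neq \bot$ for the \codestyle{Take} operation $tk$ in question.

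Next I would apply \Cref{WF-lemma using hazards} to $tk$: it asserts that \codestyle{Items[$a$]} $= x$ throughout the interval stretching from the producer's last write to \codestyle{Items[$a$]} before the read up to the step at which $tk$ performs its successful \codestyle{t\&s}(). In particular, at the step of the successful \codestyle{t\&s}() itself, \codestyle{Items[$a$]} $= x \neq \bot$, which is exactly the claim. There is essentially no obstacle: all the substantive work (the hazard-pointer argument ruling out an intervening reset, and the observation that the producer cannot overwrite \codestyle{Items[$a$]} while \codestyle{TS[$a$]} $= 0$) was already carried out in the proof of \Cref{WF-lemma using hazards}, and this corollary merely reads off the value of \codestyle{Items[$a$]} at one particular point of that interval.
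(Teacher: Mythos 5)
Your proof is correct and matches the paper's intent: the paper states this corollary with no explicit proof beyond ``it easily follows'' from \Cref{WF-lemma using hazards}, and your derivation (the code guard ensures $x \neq \bot$, and the lemma guarantees \codestyle{Items[$a$]} $= x$ at the moment of the successful \codestyle{t\&s}) is exactly that easy consequence.
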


Now we show that when the producer writes an element into some location in \codestyle{Items}, 
that value is available to be taken by a consumer.

\begin{lemma}\label{WF-lemma: TS=0 on write to Items}
When the producer writes an element to \codestyle{Items[m]} on
\Cref{WF-1-bounded:insert:write item}, 
the value of \codestyle{TS[m]} is 0.
\end{lemma}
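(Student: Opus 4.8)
The plan is to reduce the claim to finding a single configuration, while the producer is strictly between \Cref{WF-1-bounded:insert:clear item} and \Cref{WF-1-bounded:insert:write item} of the \codestyle{Insert} in question, at which \codestyle{TS[m]} $=0$, where \codestyle{m} is the index chosen on \Cref{WF-1-bounded:insert:choose index}. Indeed, by \Cref{WF-invariant: no item right before write} every entry of \codestyle{Items} is $\bot$ throughout that part of the execution, so by \Cref{WF-corollary to lemma using hazards} no \codestyle{Take} performs a successful \codestyle{t\&s}() there; hence no component of \codestyle{TS} changes from $0$ to $1$ in that part, and once \codestyle{TS[m]} equals $0$ at some point in this window it remains $0$ until \Cref{WF-1-bounded:insert:write item}.

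I would then split on whether \codestyle{m} $\in$ \codestyle{used} when the producer reaches \Cref{WF-1-bounded:insert:reset loop}. If it is, then, since \codestyle{m} was chosen on \Cref{WF-1-bounded:insert:choose index} to lie outside \codestyle{hazardous}, the producer performs \codestyle{TS[m].reset()} on \Cref{WF-1-bounded:insert:reset TS}, producing a configuration in the window with \codestyle{TS[m]} $=0$. If it is not, then, since \codestyle{used} is unchanged between \Cref{WF-1-bounded:insert:add to used} and \Cref{WF-1-bounded:insert:narrow used down}, the new value of \codestyle{m} does not belong to the value of \codestyle{used} just before \Cref{WF-1-bounded:insert:add to used}, and it differs from the value held by the producer's persistent variable \codestyle{m} at the start of the \codestyle{Insert}; moreover the producer resets no \codestyle{TS} entry for an index outside \codestyle{used}, so \codestyle{TS[m]} is constant throughout the window, and it suffices to show \codestyle{TS[m]} $=0$ in the configuration just after \Cref{WF-1-bounded:insert:clear item}. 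Since the persistent variable \codestyle{m} equals \codestyle{Allocated} from the start of the \codestyle{Insert} until \Cref{WF-1-bounded:insert:write to Allocated} (they are written on consecutive lines and an \codestyle{Insert} returning FULL changes neither), in that configuration the new value of \codestyle{m} is outside \codestyle{used} and distinct from \codestyle{Allocated}.

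It therefore remains to establish the invariant that, in every configuration, \codestyle{TS[$k$]} $=0$ for every index $k$ with $k \notin$ \codestyle{used} and $k \neq$ \codestyle{Allocated}; I would prove this by induction on the length of the execution. It holds initially because \codestyle{used} $=\emptyset$, \codestyle{Allocated} $=1$, and \codestyle{TS[$k$]} $=0$ for all $k\neq 1$. For the inductive step: \Cref{WF-1-bounded:insert:add to used} only enlarges \codestyle{used}; \Cref{WF-1-bounded:insert:write to Allocated} moves \codestyle{Allocated} off an index placed into \codestyle{used} on \Cref{WF-1-bounded:insert:add to used} earlier in the same \codestyle{Insert} and not yet removed; \Cref{WF-1-bounded:insert:narrow used down} removes from \codestyle{used} only indices that were reset on \Cref{WF-1-bounded:insert:reset TS} and, by \Cref{WF-invariant: no item right before write} together with \Cref{WF-corollary to lemma using hazards}, not subjected to a \codestyle{t\&s}() since; and every other producer step, as well as every consumer step other than a successful \codestyle{t\&s}(), leaves \codestyle{TS}, \codestyle{used}, and \codestyle{Allocated} unchanged.

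The step I expect to be the main obstacle is the remaining inductive case: a successful \codestyle{t\&s}() on \codestyle{TS}[$k$] by a consumer $P_j$, which sets \codestyle{TS}[$k$] to $1$, where I must show $k\in$ \codestyle{used} or $k=$ \codestyle{Allocated} afterward. By \Cref{WF-invariant: no item right before write}, $k=$ \codestyle{Allocated} at the moment $P_j$ read \codestyle{Items}[$k$] $\neq\bot$ on \Cref{WF-1-bounded:take:read item}, and $P_j$ had written $k$ to \codestyle{Hazards}[$j$] on \Cref{WF-1-bounded:take:add hazard} before that read and does not overwrite \codestyle{Hazards}[$j$] until after the \codestyle{t\&s}(). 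If \codestyle{Allocated} is still $k$ when the \codestyle{t\&s}() occurs, we are done. Otherwise the producer executed \Cref{WF-1-bounded:insert:write to Allocated} in between; the first \codestyle{Insert} to do so after $P_j$'s read has persistent \codestyle{m} equal to $k$ (as \codestyle{Allocated} is unchanged until then), so it writes $\bot$ to \codestyle{Items}[$k$] on \Cref{WF-1-bounded:insert:clear item} --- which, by \Cref{WF-invariant: no item right before write}, must happen after $P_j$'s read --- adds $k$ to \codestyle{used} on \Cref{WF-1-bounded:insert:add to used}, and reads $k$ from \codestyle{Hazards}[$j$] on \Cref{WF-1-bounded:insert:collect hazards}, so $k\in$ \codestyle{hazardous} for that \codestyle{Insert} and $k$ is not removed on \Cref{WF-1-bounded:insert:narrow used down}; applying the same hazard argument to each subsequent \codestyle{Insert} that reaches \Cref{WF-1-bounded:insert:narrow used down} before the \codestyle{t\&s}() shows $k$ is never removed from \codestyle{used} before then, so $k\in$ \codestyle{used} when the \codestyle{t\&s}() occurs, completing the induction and hence the proof.
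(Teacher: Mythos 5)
Your proof is correct, but it takes a genuinely different route from the paper's. The paper argues locally about the single location $a$ written at the step in question: it shows that $a$ was added to \codestyle{used} after the previous write of an element to \codestyle{Items[$a$]} (or, for a first-time write, that \codestyle{TS[$a$]} was never subjected to a \codestyle{t\&s}()), deduces via \Cref{WF-invariant: bot while in used} that $a$ must later have been removed from \codestyle{used} and hence reset at some step $e'$ before the write, and then uses \Cref{WF-invariant: no item right before write} together with \Cref{WF-corollary to lemma using hazards} to rule out any successful \codestyle{t\&s}() between $e'$ and the write. You instead establish a global invariant --- \codestyle{TS[$k$]} $=0$ for every $k$ with $k\notin$ \codestyle{used} and $k\neq$ \codestyle{Allocated} --- by induction over all steps of the execution, and your successful-\codestyle{t\&s}() case re-derives inside the induction the hazard-pointer protection that the paper packages separately in \Cref{WF-lemma using hazards}. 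Your approach is heavier, since it must dispose of every step type and redo the hazard argument, and your phrase ``\codestyle{m} equals \codestyle{Allocated} from the start of the \codestyle{Insert} until \Cref{WF-1-bounded:insert:write to Allocated}'' is slightly imprecise (they differ between \Cref{WF-1-bounded:insert:choose index} and \Cref{WF-1-bounded:insert:write to Allocated}, though you only use the equality at the configuration after \Cref{WF-1-bounded:insert:clear item}, where it does hold). What it buys is a clean, reusable characterization of the entire \codestyle{TS} array that the paper never states explicitly; the paper's argument is shorter because it only ever reasons about the one location and the one interval that matter. Both proofs ultimately hinge on the same two facts: all of \codestyle{Items} is $\bot$ while the producer is strictly between \Cref{WF-1-bounded:insert:clear item} and \Cref{WF-1-bounded:insert:write item}, and a successful \codestyle{t\&s}() requires a non-$\bot$ entry.
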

\begin{proof}
Let $e$ be a step of the execution in which the producer performs \Cref{WF-1-bounded:insert:write item}. Let $a$ denote the value of \codestyle{m} at this step.

Suppose that, prior to $e$, the producer has not performed \Cref{WF-1-bounded:insert:write item} with \codestyle{m} = $a$.
Since all locations of \codestyle{Items} initially contain $\bot$ and elements are only written
into \codestyle{Items} on \Cref{WF-1-bounded:insert:write item},
\codestyle{Items[a]} = $\bot$ prior to $e$.
Since \codestyle{t\&s} on \codestyle{TS[a]} is only performed on \Cref{WF-1-bounded:take:t&s} after
seeing that \codestyle{Items[a]} $\neq \bot$, no \codestyle{t\&s} on \codestyle{TS[a]} was performed prior to $e$.
If $a \neq 1$, then \codestyle{TS[a]} is initially 0, so it is 0 when $e$ is performed and we are done.
If $a = 1$, then $a$ was added to \codestyle{used} when the producer first performed
\Cref{WF-1-bounded:insert:add to used}.

Otherwise, consider the last step prior to $e$ in which the producer performed \Cref{WF-1-bounded:insert:write item} with \codestyle{m} = $a$.
When the producer next performs \Cref{WF-1-bounded:insert:add to used}, it adds $a$ to \codestyle{used}.

So in both cases $a$ was added to \codestyle{used} prior to $e$.
By \Cref{WF-invariant: bot while in used}, \codestyle{Items[a]} = $\bot$ while
$a$ was in \codestyle{used}. Immediately after step $e$, \codestyle{Items[a]} $\neq \bot$,
so $a$ is not in \codestyle{used}.
Prior to being removed from \codestyle{used} on \Cref{WF-1-bounded:insert:narrow used down},
$a$ was not in \codestyle{hazardous}, so \codestyle{TS[a]} was reset to 0 on \Cref{WF-1-bounded:insert:reset TS} by some step $e'$.

By \Cref{WF-invariant: no 2 items concurrently}, \codestyle{Items[a]} = $\bot$ at $e'$.
Between $e'$ and $e$, there was no write to \codestyle{Items[a]} on \Cref{WF-1-bounded:insert:write item}, so \codestyle{Items[a]} = $\bot$.
When a successful \codestyle{t\&s} on  \codestyle{TS[a]} is performed,
\Cref{WF-corollary to lemma using hazards}
implies that \codestyle{Items[a]} $\neq$ $\bot$.
Thus, no successful \codestyle{t\&s} on  \codestyle{TS[a]} was performed between $e'$ and $e$.
This implies that \codestyle{TS[a]} = 0 when $e$ was performed.
\end{proof}

We show that the linearization point of a \codestyle{Take} operation that returns EMPTY occurs within its execution interval.

\begin{lemma}\label{WF-lemma:unsuccessful take lin}
    Let $tk$ be a \codestyle{Take} operation that returns EMPTY. 
    Let $e$ be the step of the execution in which $tk$ reads \codestyle{Allocated} on \Cref{WF-1-bounded:take:read Allocated} into $a$. 
    Suppose that \codestyle{Items[a]} $\neq$ $\bot$ and \codestyle{TS[a]} = 0 at $e$. Then a successful \codestyle{t\&s} of \codestyle{TS[a]} occurs after $e$ and before $tk$ completes.
\end{lemma}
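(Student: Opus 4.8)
The plan is to split on how $tk$ reaches its \codestyle{return EMPTY} line. There are exactly two possibilities: (i) $tk$ performs an \emph{unsuccessful} \codestyle{t\&s} of \codestyle{TS[$a$]} on \Cref{WF-1-bounded:take:t&s}, or (ii) $tk$ reads $\bot$ from \codestyle{Items[$a$]} on \Cref{WF-1-bounded:take:read item} and skips the \codestyle{t\&s} altogether. Throughout I will use the elementary fact that the value of \codestyle{TS[$a$]} can change from $0$ to $1$ only via a successful \codestyle{t\&s} (performed on \Cref{WF-1-bounded:take:t&s}) and from $1$ to $0$ only via the producer's \codestyle{reset} on \Cref{WF-1-bounded:insert:reset TS}. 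Also note that $tk$ performs all its remaining steps — the read of \codestyle{Items[$a$]} on \Cref{WF-1-bounded:take:read item} and, in case (i), the \codestyle{t\&s} on \Cref{WF-1-bounded:take:t&s} — after $e$, since $e$ is $tk$'s read of \codestyle{Allocated} on \Cref{WF-1-bounded:take:read Allocated}.

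In case (i), I would observe that $tk$'s \codestyle{t\&s} reads \codestyle{TS[$a$]}$=1$, at a step after $e$. Since \codestyle{TS[$a$]}$=0$ at $e$, the value of \codestyle{TS[$a$]} goes from $0$ to $1$ at some step between $e$ and this unsuccessful \codestyle{t\&s}; that step is a successful \codestyle{t\&s} of \codestyle{TS[$a$]}, and it occurs before $tk$ completes, as required.

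Case (ii) is the harder one. Since \codestyle{Items[$a$]}$\neq\bot$ at $e$ but $tk$ later reads $\bot$ from \codestyle{Items[$a$]}, and the only write of $\bot$ to \codestyle{Items} is on \Cref{WF-1-bounded:insert:clear item}, the producer performs a write $s_2$ of $\bot$ to \codestyle{Items[$a$]} on \Cref{WF-1-bounded:insert:clear item} at a step strictly after $e$ and before $tk$'s read on \Cref{WF-1-bounded:take:read item}. At $s_2$ the producer's local \codestyle{m} equals $a$, and the immediately preceding producer step $s_1$ is the test on \Cref{WF-1-bounded:insert:check TS[m]} of the same \codestyle{Insert} call; since \codestyle{m} is unchanged between \Cref{WF-1-bounded:insert:check TS[m]} and \Cref{WF-1-bounded:insert:clear item} (it is modified only on \Cref{WF-1-bounded:insert:choose index}), $s_1$ reads \codestyle{TS[$a$]}, and — because the producer proceeds past \Cref{WF-1-bounded:insert:check TS[m]} rather than returning FULL — it reads \codestyle{TS[$a$]}$=1$ at $s_1$. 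The crux is to show $s_1$ occurs after $e$. If instead $s_1$ occurred before $e$, then, since $s_1$ and $s_2$ are consecutive steps of the producer and every \codestyle{reset} of \codestyle{TS[$a$]} is a producer step, no \codestyle{reset} of \codestyle{TS[$a$]} happens between $s_1$ and $s_2$, so \codestyle{TS[$a$]} cannot decrease there; but it equals $1$ at $s_1$ and $0$ at $e$, which lies strictly between $s_1$ and $s_2$ — a contradiction. Hence $s_1$ occurs after $e$, and, exactly as in case (i), \codestyle{TS[$a$]}$=0$ at $e$ together with \codestyle{TS[$a$]}$=1$ at $s_1$ forces a successful \codestyle{t\&s} of \codestyle{TS[$a$]} strictly between $e$ and $s_1$; since $s_1$ precedes $s_2$, which precedes $tk$'s read on \Cref{WF-1-bounded:take:read item}, this successful \codestyle{t\&s} occurs before $tk$ completes.

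The one place I expect to spend real care is the positioning argument in case (ii): justifying that $s_2$ lies strictly between $e$ and $tk$'s read, that its matching \Cref{WF-1-bounded:insert:check TS[m]} step $s_1$ concerns the same index $a$, and that no \codestyle{reset} can slip in between the two consecutive producer steps $s_1$ and $s_2$. Everything else is just the monotonicity behaviour of a test\&set object together with the case split on why \codestyle{return EMPTY} is reached.
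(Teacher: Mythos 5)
Your proof is correct, and for the harder case it takes a genuinely different route from the paper's. The case split (unsuccessful \codestyle{t\&s} versus reading $\bot$ from \codestyle{Items[$a$]}) and the treatment of the first case coincide with the paper's. For the second case, the paper works \emph{backwards} from $e$: it invokes \Cref{WF-invariant: no 2 items concurrently} to place the producer strictly between \Cref{WF-1-bounded:insert:write item} and the next \Cref{WF-1-bounded:insert:clear item} at $e$, lets $e'$ be that preceding write of an element to \codestyle{Items[$a$]}, uses \Cref{WF-lemma: TS=0 on write to Items} to get \codestyle{TS[$a$]} $= 0$ at $e'$, shows \codestyle{TS[$a$]} stays $0$ throughout $[e',e]$ (no reset is reached in that interval), and then concludes that the successful \codestyle{t\&s} forced by the producer's later read of $1$ on \Cref{WF-1-bounded:insert:check TS[m]} must fall after $e$. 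You instead work \emph{forwards} to the producer's write of $\bot$ at $s_2$ and its immediately preceding check $s_1$, and rule out $s_1$ preceding $e$ by observing that $s_1$ and $s_2$ are consecutive producer steps, so no reset can bring \codestyle{TS[$a$]} from $1$ back to $0$ in between. Your argument is more self-contained: it needs only the code structure (consecutive producer steps, $\bot$ written only on \Cref{WF-1-bounded:insert:clear item}, resets performed only by the producer) and the monotonicity of a test\&set object between resets, and avoids both auxiliary lemmas. The paper's version buys a slightly stronger intermediate fact (\codestyle{TS[$a$]} $=0$ on all of $[e',e]$) and fits naturally into the lemma infrastructure it has already built for the rest of the linearizability proof, but as a standalone proof of this statement your decomposition is leaner.
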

\begin{proof}
    If $tk$ reads an element from \codestyle{Items[a]} on \Cref{WF-1-bounded:take:read item} then, as it returns EMPTY, its \codestyle{t\&s} on \Cref{WF-1-bounded:take:t&s} must fail. This implies that another \codestyle{Take} operation has performed a successful \codestyle{t\&s} on \codestyle{TS[a]} after $e$ and before $tk$ failed its \codestyle{t\&s}, so we are done. 

    Otherwise, $tk$ reads $\bot$ from \codestyle{Items[a]}.
    But when $tk$ executes step $e$, \codestyle{Items[a]} $\neq$ $\bot$. By \Cref{WF-invariant: no 2 items concurrently}, at that point the producer must be strictly between \Cref{WF-1-bounded:insert:write item} and the following execution of \Cref{WF-1-bounded:insert:clear item}. 
    Let $e'$ be the step in which the producer entered this interval, namely, the last step of the execution in which it writes an element to \codestyle{Items} on \Cref{WF-1-bounded:insert:write item} before $e$. 
    As the producer does not reach \Cref{WF-1-bounded:insert:clear item} between $e'$ and $e$, it specifically does not reach \Cref{WF-1-bounded:insert:write to Allocated} (in which it modifies \codestyle{Allocated}) during that interval. Hence, it wrote to location $a$ at $e'$.
    By \Cref{WF-lemma: TS=0 on write to Items}, \codestyle{TS[a]} = 0 at $e'$. \codestyle{TS[a]} = 0 at $e$ by assumption. It must remain 0 between $e'$ and $e$, because if it had been set to 1, the producer would have had to reset it before $e$, but the producer does not reach \Cref{WF-1-bounded:insert:reset TS} during this interval.
    Since $tk$ reads $\bot$ from \codestyle{Items[a]} after $e$, it indicates that the producer previously wrote $\bot$ to \codestyle{Items[a]} on \Cref{WF-1-bounded:insert:clear item}. Before this write (and after $e'$) the producer read 1 from \codestyle{TS[a]} on \Cref{WF-1-bounded:insert:check TS[m]}. This implies that a successful \codestyle{t\&s} of \codestyle{TS[a]} was executed after $e'$. We have shown it could not occur between $e'$ and $e$, so it occurred between $e$ and the execution of \Cref{WF-1-bounded:take:read item} by $tk$.
\end{proof}

The linearization point of each operation occurs at one of its own steps, except for an unsuccessful \codestyle{Take} operation $tk$ in case \codestyle{Items[a]} $\neq$ $\bot$ and \codestyle{TS[a]} = 0 when it reads \codestyle{Allocated}. In this case, it is linearized after this read and before it completes, by \Cref{WF-lemma:unsuccessful take lin}.
Thus, in all cases, the linearization point of an operation is within its operation interval.

Fix an execution consisting of operations on the bag in \Cref{fig: WF 1-bounded bag single producer}.
To show that the ordering of operations that we defined is a linearization, it remains to prove that the values returned by the operations are consistent with the sequential specifications of a 1-bounded bag.
We do this by induction on the length of the execution.
Let $S$ be the ordering of a prefix, $\sigma$, of the execution, let $e$ be the next step of the execution, and $S'$ be the ordering
of $\sigma\cdot e$. 
Suppose $S$ is a linearization. 
Let $B$ be the set of items in the bag at the end of $S$. 
We will prove that $S'$ is a linearization.

It suffices to consider the steps at which operations are linearized, namely when
a \codestyle{Take} operation performs a successful \codestyle{t\&s} or reads \codestyle{Allocated} and when 
an \codestyle{Insert} operation reads 0 from \codestyle{TS} or 
writes to \codestyle{Items}.

\medskip

First, suppose that $e$ is a successful \codestyle{t\&s} of \codestyle{TS[a]} on \Cref{WF-1-bounded:take:t&s} by a \codestyle{Take} operation, $tk$.
Let $x \neq\bot$ be the value $tk$ read from \codestyle{Items[a]} when it performed \Cref{WF-1-bounded:take:read item}. 
If there is a pending \codestyle{Take()} operation, $tk'$, at the end of $\sigma$ that
read $a$ from \codestyle{Allocated} on
\Cref{WF-1-bounded:take:read Allocated} when \codestyle{Items[a]} $\neq$ $\bot$ and \codestyle{TS[a]} = 0,
then $S'$ is $S$ followed by $tk$ with return value $x$ and then $tk'$ with return value EMPTY. Otherwise, $S'$ is $S$ followed by $tk$ with return value $x$.
In both cases, to show that $S'$ is a linearization, we must show that $B = \{x\}$. 

Let $D_1$ be the configuration right before this read
and let $D_2$ be the configuration at the end of $\sigma$.
Let $D_0$ be the configuration right after the last write to \codestyle{Items[a]} on \Cref{WF-1-bounded:insert:write item} that was performed by the producer before $D_1$ and let $ins$ be the \codestyle{Insert} operation that performed this write.
By \Cref{WF-lemma using hazards}, 
\codestyle{TS[a]} = 0 and 
\codestyle{Items[a]} = $x$ between $D_0$ and $D_2$. Therefore, by \Cref{WF-invariant: no item right before write}, no write to any cell of \codestyle{Items} on \Cref{WF-1-bounded:insert:write item} is performed between $D_0$ and $D_2$, which means $ins$ performed the last write of an element to \codestyle{Items} in $\sigma$.

Since $ins$ performed \Cref{WF-1-bounded:insert:write item} right before $D_0$,
it was completed in $\sigma$.
Hence $ins$ appears in $S$. Immediately after it appears, the bag contains only $x$.
Between $D_0$ and $D_2$,  \codestyle{TS[a]} = 0, so no successful \codestyle{t\&s} is performed
on \codestyle{TS[a]}. Also, \codestyle{Items[a]} $\neq \bot$, so, by \Cref{WF-invariant: no 2 items concurrently}, \codestyle{Items[a']} $= \bot$ for all $a' \neq a$.
By \Cref{WF-corollary to lemma using hazards},
\codestyle{Items[a']} $\neq \bot$ when a successful \codestyle{t\&s} is performed on \codestyle{TS[a']}. Thus, no successful \codestyle{t\&s} is performed
on any location of \codestyle{TS} between $D_0$ and $D_2$.
Hence, no \codestyle{Take} operation that removes $x$ is linearized between $D_0$ and $D_2$. 
Since $ins$ is linearized right before $D_0$,
there are no successful \codestyle{Take} operations after $ins$ in $S$. Therefore, at the end of $S$, the bag still contains $x$.
Since $S$ is a linearization, it follows that $B = \{x\}$.

\medskip

Next, suppose that $e$ is a read of \codestyle{Allocated} on \Cref{WF-1-bounded:take:read Allocated} by a \codestyle{Take} operation $tk$ into $a$, when \codestyle{Items[a]} = $\bot$ or \codestyle{TS[a]} = 1. Then $S'$ is $S$ followed by $tk$ with return value EMPTY. 
To show that $S'$ is a linearization, we must show that $B$ is empty. 
$B$ is not empty if and only if there is a successful \codestyle{Insert} operation in $S$ after which there is no successful \codestyle{Take} operation.
This correlates to $\sigma$ in which the producer wrote to \codestyle{Items} on \Cref{WF-1-bounded:insert:write item} and there is no following successful \codestyle{t\&s} on \codestyle{TS}.
We will show that this is not the case in $S$.
Assume, for sake of contradiction, that $\sigma$ contains a write of an element to \codestyle{Items} by the producer on \Cref{WF-1-bounded:insert:write item}, but no successful \codestyle{t\&s} on \codestyle{TS} after the last such write. Let $e$ be the step in which the producer executes this last write. By \Cref{WF-lemma: TS=0 on write to Items}, \codestyle{TS[Allocated]} = 0 at $e$. Since no successful \codestyle{t\&s} follows $e$ in $\sigma$, the producer may not pass \Cref{WF-1-bounded:insert:check TS[m]}. In particular, it may not reach \Cref{WF-1-bounded:insert:write to Allocated} and thus \codestyle{Allocated} remains unchanged. It also does not reach \Cref{WF-1-bounded:insert:clear item}, hence (and since \codestyle{Allocated} remains unchanged), \codestyle{Items[Allocated]} remains non-$\bot$ until the end of $\sigma$.
The fact that \codestyle{Allocated} remains unchanged also implies that \codestyle{TS[Allocated]} remains 0 until the end of $\sigma$. 
But these contradict the fact that in the end of $\sigma$, \codestyle{Items[Allocated]} = $\bot$ or \codestyle{TS[Allocated]} = 1.
Thus, $B$ is empty.

\medskip

Now, suppose $e$ is a read of 0 from \codestyle{TS[m]} by an \codestyle{Insert(x)} operation $ins$ on \Cref{WF-1-bounded:insert:check TS[m]}. Then $S'$ is $S$ followed by $ins$ with return value FULL. 
To show that $S'$ is a linearization, we must show that $B$ is not empty.

Only successful insertions can modify \codestyle{m} and \codestyle{Allocated} or reset components of  \codestyle{TS}.
Initially, \codestyle{m} = \codestyle{Allocated} = 1, \codestyle{TS[1]} = 1,  and all other cells of \codestyle{TS} are initially  0. Hence, for $ins$ to read 0 from \codestyle{TS[m]} on \Cref{WF-1-bounded:insert:check TS[m]}, a previous successful \codestyle{Insert} operation must have been performed before $ins$, changing the value of  \codestyle{m} and \codestyle{Allocated} or resetting \codestyle{TS[1]}. 

Let $ins^*$ be the last successful \codestyle{Insert} operation the producer performed before $ins$. Since
$ins^*$ has completed, it is linearized before $ins$.
Since $ins^*$ is linearized when it performs \Cref{WF-1-bounded:insert:write item}, 
the producer does not perform \Cref{WF-1-bounded:insert:clear item} through \Cref{WF-1-bounded:insert:write item}
strictly between the linearization points of $ins^*$ and $ins$.
Note that \codestyle{m} is only changed on
\Cref{WF-1-bounded:insert:choose index}
and \codestyle{Allocated} is only changed on
\Cref{WF-1-bounded:insert:write to Allocated}, so 
their values do not change during this interval.
There are also no resets performed during this interval, since these only occur
on \Cref{WF-1-bounded:insert:reset TS}.
Moreover, by \Cref{WF-invariant: no 2 items concurrently}, throughout this interval, \codestyle{Items[Allocated]} $\neq \bot$
and all other locations of \codestyle{Items} contain $\bot$.
\Cref{WF-corollary to lemma using hazards} implies that the only successful \codestyle{t\&s} performed during this interval is on
\codestyle{TS[Allocated]}. However, if there is such a successful \codestyle{t\&s},
it is not reset before $ins$ reads \codestyle{TS[m]}, so it would read 1, rather than 0.
Hence, no successful \codestyle{Take} operation is linearized between $ins^*$ and $ins$, and so $B$ contains the item that $ins^*$ inserted.

\medskip

Finally, suppose $e$ is a write to \codestyle{Items} by an \codestyle{Insert(x)} operation $ins$ on \Cref{WF-1-bounded:insert:write item}. 
Then $S'$ is $S$ followed by $ins$ with return value OK. 
To show that $S'$ is a linearization, we must show that $B$ is empty.
This is clearly true if there are no \codestyle{Insert} operations in $S$.
Otherwise, let $ins^*$ be the last \codestyle{Insert} operation in $S$. 
Let $e^*$ be the step of $\sigma$ in which $ins^*$
wrote to \codestyle{Items[$a^*$]} on \Cref{WF-1-bounded:insert:write item}.
By \Cref{WF-lemma: TS=0 on write to Items}, \codestyle{TS[$a^*$]} = 0 when $e^*$ occurs.
Since $S$ is a linearization, right after $ins^*$, the bag contains one item. 
Only successful \codestyle{Insert} operations modify \codestyle{m} (on \Cref{WF-1-bounded:insert:choose index}), so \codestyle{m} = $a^*$ 
when $ins$ read 1 from \codestyle{TS[m]} on \Cref{WF-1-bounded:insert:check TS[m]}.
Thus, a successful \codestyle{t\&s} on \codestyle{TS[$a^*$]} must have been performed between $e^*$
and this step. The \codestyle{Take} operation that performed it is linearized at this 
\codestyle{t\&s}. It appears after $ins^*$ in $S$, so the bag is empty at the end of $S$.

\subsection{Not Strongly-Linearizable}\label{sec:WFnonSLProof}
We show that \Cref{fig: WF 1-bounded bag single producer}
is not a strongly-linearizable implementation of a 1-bounded bag with a single producer.

Let $ins_i$ be a call of \codestyle{Insert($i$)} and $tk_i$ a call of \codestyle{Take()}, for $1 \leq i \leq 3$. Consider the following execution $\alpha$ of the algorithm in \Cref{fig: WF 1-bounded bag single producer}: 
\begin{itemize}
    \item $ins_1$ starts running, and it is allocated location 1 on \Cref{WF-1-bounded:insert:choose index}. It then runs to completion.
    \item $tk_1$ runs and returns 1.
    \item $tk_2$ reads 1 from \codestyle{Allocated} on \Cref{WF-1-bounded:take:read Allocated}.
    \item $ins_2$ starts running. It writes $\bot$ to \codestyle{Items[1]} on \Cref{WF-1-bounded:insert:clear item}. It is allocated location 2 on \Cref{WF-1-bounded:insert:choose index}. It then runs to completion.
\end{itemize}

Let $\alpha_1$ be the continuation of $\alpha$ in which $tk_2$ runs by itself to completion. It returns EMPTY as it encounters $\bot$ in every entry of \codestyle{Items} it reads including \codestyle{Items[1]}.
Since $tk_2$ returns EMPTY in $\alpha\cdot\alpha_1$, it must be linearized before $ins_2$. Hence, $tk_2$ must be linearized in $\alpha$, with return value EMPTY. 

Let $\alpha_2$ be another continuation of $\alpha$, in which the following occur:
\begin{itemize}
    \item $tk_3$ runs and returns 2.
    \item $ins_3$ starts running. It is allocated location 1 on \Cref{WF-1-bounded:insert:choose index}. It continues to run, writes 3 to \codestyle{Items[1]} and returns.
    \item $tk_2$ runs to completion.
\end{itemize}
From the code, $tk_2$ reads 3 from \codestyle{Items[1]} and returns 3.
Hence, $tk_2$ must be linearized after $ins_3$ in $\alpha\cdot\alpha_2$. This contradicts strong-linearizability.

\section{A Lock-Free, Strongly-Linearizable Implementation of a 1-Bounded Bag with a Single Producer}\label{sec:1-bounded bag}

\begin{figure}[b!]
\begin{lstlisting}
@\underline{Shared variables:}@
    Items[1..n+1]: an array of single-writer registers, @all@ initialized to $\bot$, which can only be written to by the producer 
    TS[1..n+1]: an array of readable, resettable test&set objects, @all@ initialized to 0, except @\text{for}@ TS[1], which is initialized to 1
    Allocated: a single-writer register, initialized to 1, which can only be written to by the producer 
    Hazards[1..n]: an array of single-writer registers, @all@ initialized to $\bot$, where Hazards[i] can only be written to by $P_i$
    Done: an ABA-detecting register
@\underline{Persistent local variables of the producer:}@
    used: a register, initialized to $\emptyset$, which contains a subset of $\{1,\ldots,n+1\}$
    m: a register, initialized to 1, which is a local copy of Allocated
@\underline{Insert(x)} by the producer:@
    if TS[m].read() = 0 then return FULL@\label{1-bounded:insert:check TS[m]}@
    Items[m].write($\bot$)@\label{1-bounded:insert:clear item}@
    used $\leftarrow$ used $\cup$ {m}@\label{1-bounded:insert:add to used}@
    hazardous $\leftarrow$ COLLECT(Hazards)@\label{1-bounded:insert:collect hazards}@
    m $\leftarrow$ some index in $\{1, \ldots, n+1\}$ - hazardous@\label{1-bounded:insert:choose index}@
    Allocated.write(m)@\label{1-bounded:insert:write to Allocated}@
    for all i $\in$ used - hazardous do@\label{1-bounded:insert:reset loop}@
        TS[i].reset()@\label{1-bounded:insert:reset TS}@
    used $\leftarrow$ used $\cap$ hazardous@\label{1-bounded:insert:narrow used down}@
    Items[m].write(x)@\label{1-bounded:insert:write item}@
    Done.dWrite()@\label{1-bounded:insert:write to done}@
    return OK
@\underline{Take()} by $P_i$, for $i\in\{1,\ldots,n\}$:@ 
    Done.dRead()@\label{1-bounded:take:read done}@
    repeat
        a $\leftarrow$ Allocated.read()@\label{1-bounded:take:read Allocated}@
        Hazards[i].write(a)@\label{1-bounded:take:add hazard}@
        x $\leftarrow$ Items[a].read()@\label{1-bounded:take:read item}@
        if x $\neq$ $\bot$ then@\label{1-bounded:take:test item}@
            if TS[a].t&s() = 0 then @\label{1-bounded:take:t&s}@
                Hazards[i].write($\bot$)@\label{1-bounded:take:clear hazard1}@
                return x@\label{1-bounded:take:return item}@
        Hazards[i].write($\bot$)@\label{1-bounded:take:clear hazard2}@
        if Done.dRead() = false then return EMPTY@\label{1-bounded:take:reread done}@
\end{lstlisting}
\caption{A lock-free, strongly-linearizable 1-bounded bag with one producer and $n$ consumers}\label{fig: 1-bounded bag single producer}
\end{figure}

In this section, we present a lock-free, strongly-linearizable implementation of a bag that can contain at most one element. It is shared by $n$ processes, $P_1,\ldots,P_{n}$, called 
\emph{consumers}, that can perform \codestyle{Take}() and a single process, 
called the \emph{producer}, that can perform \codestyle{Insert}($x$).
It combines ideas from our lock-free, strongly-linearizable implementation of an unbounded bag in \Cref{sec:LF SL bag} and our 
wait-free, linearizable implementation of a 1-bounded bag in \Cref{sec:WF 1-bounded bag}.
To keep the space bounded, 
we use an ABA-detecting register for \codestyle{Done}, instead of a readable fetch\&increment object.
An ABA-detecting register 
has a lock-free, strongly-linearizable implementation from registers using bounded space~\cite{ovens2019strongly}.

\ignore{
Similarly to the unbounded bag in \Cref{sec:LF SL bag}, this implementation uses:\\
- \codestyle{Items[1..n+1]}, an array of $n+1$ single-writer registers, each initialized to $\bot$, which can only be written to by $P_0$, and\\
- \codestyle{TS[1..n+1]}, an array of $n+1$ resettable readable test\&set objects, each initialized to 0, except for TS[1], which is initialized to 1.\\
Instead of a fetch\&increment object, the implementation uses\\
- \codestyle{Done}, an ABA-detecting register.
In addition, it uses\\
- \codestyle{Allocated}, a single-writer register, initialized to 1, which can only be written to by the producer, and\\ 
- \codestyle{Hazards}[$1..n$], an array of $n$ single-writer registers, each initialized to $\bot$, where Hazards[i] can only be written to by $P_i$.\\
All registers used in the implementations have bounded values and thus use bounded space. Pseudocode for our implementation appears in \Cref{fig: 1-bounded bag single producer}.

The producer maintains two persistent local variables, \codestyle{m}, which is the last location in \codestyle{Items} it was allocated and which it announces in \codestyle{Allocated}, and \codestyle{used}, which is a set of locations it has used and will
need to reset before being reallocated. The initial value of \codestyle{m}, like the initial value of \codestyle{Allocated}, is 1. Since \codestyle{TS[1]} is initially 1, the 
initial state of the data structure is as if location 1
had been allocated to the producer, it had inserted a value into the bag in this location, and then 
this value had been taken by some other process.
}

An \codestyle{Insert}($x$) operation by the producer is the same as in the wait-free linearizable implementation
presented in the previous section, except the producer
writes to the ABA-detecting register \codestyle{Done} before it returns OK.

\begin{remove}
The producer begins an \codestyle{Insert}($x$) operation by checking whether the test\&set object,  \codestyle{TS[m]},
in the last allocated location,  \codestyle{m}, is 0 and returns FULL in this case. Otherwise, it clears the last item (setting \codestyle{Items[m]}  to $\bot$) and adds its index, \codestyle{m}, to \codestyle{used}.
Afterwards, it collects the set of \emph{hazardous} locations stored in \codestyle{Hazards}.
It then allocates an arbitrary location from $\{1,\ldots,n+1\}$
that is not hazardous
assigns it to \codestyle{m}, and announces it by writing this location to \codestyle{Allocated}. 
Next, it resets the test\&set for each location in \codestyle{used} that is not hazardous. Then it removes these locations from \codestyle{used}.
Finally,  it writes $x$ to 
the newly allocated location in \codestyle{Items},
writes to the ABA-detecting register \codestyle{Done}, and returns OK.
\end{remove}


To perform a \codestyle{Take}() operation, a consumer, $P_i$, reads \codestyle{Done} and \codestyle{Allo\-cated}. 
In \codestyle{Hazards[i]}, it announces 
the location, \codestyle{a}, that it read from \codestyle{Allocated}. 
Then $P_i$ reads \codestyle{Items[a]} and, if it contains an element
 $x\neq\bot$, then $P_i$
 performs  \codestyle{t\&s}() on
\codestyle{TS[a]}.
Next $P_i$ clears its announcement. If the \codestyle{t\&s}() was successful, $P_i$ returns $x$. 
If either $x = \bot$ or the \codestyle{t\&s}() was unsuccessful, $P_i$ rereads \codestyle{Done} and, if its value has not changed since $P_i$'s previous read,
$P_i$ returns EMPTY.
Otherwise, $P_i$ repeats the entire sequence of steps.
Pseudocode for our implementation appears in \Cref{fig: 1-bounded bag single producer}.

\ignore{
Our use of hazards is similar to hazard pointers \cite{michael2004hazard}. Intuitively, the hazard announcements by takers are used to prevent the producer from reclaiming locations that takers might be trying to access. More specifically, they are used to prevent the producer from resetting a test\&set object if some taker is poised to perform a \codestyle{t\&s} on it.
This ensures that multiple takers do not return the same item.
}



\subsection{Strong-Linearizability}
Consider any execution consisting of operations on this data structure. We linearize the operations as follows:
\begin{itemize}
\item
A \codestyle{Take}() operation that performs a successful \codestyle{t\&s}() on \Cref{1-bounded:take:t&s} is linearized when it performs this \codestyle{t\&s}(). It returns the value it read in its last execution of \Cref{1-bounded:take:read item}. 
\item
A \codestyle{Take}() operation that gets false from  \codestyle{Done.dRead()} on \Cref{1-bounded:take:reread done} 
is linearized at this step.
It returns EMPTY. 
\item
An \codestyle{Insert}($x$) operation that reads 0 from \codestyle{TS[m]} (on \Cref{1-bounded:insert:check TS[m]}) is linearized at this read. It returns FULL. 
\item
Consider an \codestyle{Insert}($x$) operation, $ins$, that 
allocated location \codestyle{a} to \codestyle{m} on \Cref{1-bounded:insert:choose index}
and wrote $x$ to \codestyle{Items[a]}
on \Cref{1-bounded:insert:write item}.
If some \codestyle{Take}() operation performs a successful \codestyle{t\&s}() on \codestyle{TS[a]} after $ins$ performed \Cref{1-bounded:insert:write item}, but before $ins$ performs \codestyle{Done.dRead()} 
on \Cref{1-bounded:insert:write to done}, then $ins$ is linearized  
immediately before the \codestyle{Take}() operation that performed the first such \codestyle{t\&s}(). 
\begin{remove}
If it has performed \Cref{1-bounded:insert:write item}, but not yet 
performed \codestyle{Done.dRead()} 
on \Cref{1-bounded:insert:write to done},
when some \codestyle{Take}() operation performs a successful \codestyle{t\&s}() on \codestyle{TS[a]},
$ins$ is linearized 
immediately before the \codestyle{Take}() operation that performed the first such \codestyle{t\&s}(). 
\end{remove}
In this case, $ins$ returns OK and we say that these \codestyle{Insert}($x$) and \codestyle{Take}() operations are \emph{coupled}.
\item
An \codestyle{Insert}($x$) operation that performs \codestyle{Done.dWrite()} on \Cref{1-bounded:insert:write to done}, but has not already been linearized as a coupled operation, is linearized at this step.
In this case, it returns OK and we
say that it is \emph{uncoupled}.
\end{itemize}


We now prove that the algorithm in \Cref{fig: 1-bounded bag single producer} is a strongly-linearizable implementation of a 1-bounded bag with a single producer.



The linearization point defined above for each operation, occurs at one of its own steps, except for coupled \codestyle{Insert} operations,
for which the linearization point is between its last two steps.
Thus, in all cases, the linearization point of an operation is within its operation interval.


Each \codestyle{Take}() operation is linearized immediately before it returns on
\Cref{1-bounded:take:reread done} or when it performs a successful \codestyle{t\&s} on \Cref{1-bounded:take:t&s}. Thus, if it is linearized at some point in an execution, it is linearized at the same point in every extension of that execution.
Each \codestyle{Insert}($x$) operation that is coupled with a \codestyle{Take}() operation is linearized immediately before that \codestyle{Take}() operation.
Thus, if a coupled \codestyle{Insert}($x$) operation is linearized at some point in an execution, it is linearized at the same point in every extension of that execution. 
Each uncoupled \codestyle{Insert}($x$) operation is linearized 
immediately before it returns, so it is linearized at the same point in every extension of the execution.
Hence, if the implementation is linearizable, it is strongly-linearizable.

\ignore{
In the subsequent linearization proof, we will use the following facts:

\begin{proposition}\label{proposition: insert lin after write item}
    A successful \codestyle{Insert} is linearized after it writes its item on \Cref{1-bounded:insert:write item}.
\end{proposition}
\begin{proof}
    A successful uncoupled \codestyle{Insert} operation is linearized when it increments \codestyle{Done}, which it does after writing its item.
    A successful coupled \codestyle{Insert} operation, which was allocated location $a$, is linearized right before the \codestyle{Take} operation that performs the first successful \codestyle{t\&s} on \codestyle{TS[a]} since the \codestyle{Insert} wrote its item to \codestyle{Items[a]}. This successful \codestyle{Take} operation is linearized at its successful \codestyle{t\&s}, which by definition happens after the \codestyle{Insert} operation writes its item, and the insertion is linearized right before that.
\end{proof}
}

\ignore{THE FOLLOWING OBSERVATION MAY BE TOO TRIVIAL AND WE DON'T USE IT.
The first fact follows directly from the code.

\begin{observation}\label{corollary:item exists when test&set}
    If a \codestyle{Take} operation performs a successful \codestyle{t\&s} on \codestyle{TS[a]},
    then \codestyle{Items[a]} $\neq \bot$
    when it last performed \Cref{1-bounded:take:read item}. 
\end{observation}

\begin{proof}
If a \codestyle{Take} operation performs a successful \codestyle{t\&s} on \codestyle{TS[a]}
on  \Cref{1-bounded:take:t&s}, then, by the test on
\Cref{1-bounded:take:test item}, the value it read from 
\codestyle{Items[a]} on \Cref{1-bounded:take:read item} is not $\bot$.
\end{proof}
}

We now prove a number of useful facts
about the implementation.
First, we show that every location of \codestyle{Items} contains $\bot$, except for  when the producer is in a certain part of its code.
When it is in that part of its code, only location \codestyle{Allocated} contains an element.

\begin{lemma}\label{invariant: no 2 items concurrently}\label{invariant: no item right before write}
Before the producer first performs \Cref{1-bounded:insert:write item} and when the producer is strictly between \Cref{1-bounded:insert:clear item} and \Cref{1-bounded:insert:write item},
all entries of \codestyle{Items} are $\bot$. Otherwise, \codestyle{Items[Allocated]} $\neq \bot$ and all other entries of \codestyle{Items} are $\bot$.
\end{lemma}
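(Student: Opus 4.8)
The statement is an invariant about the contents of \codestyle{Items} throughout any execution, so the plan is to prove it by induction on the length of the execution (equivalently, on the steps that can change the relevant state). The only steps that can change the value of any \codestyle{Items} cell are \Cref{1-bounded:insert:clear item} (writing $\bot$ to \codestyle{Items[m]}) and \Cref{1-bounded:insert:write item} (writing $x \neq \bot$ to \codestyle{Items[m]}), and the only step that can change \codestyle{Allocated} is \Cref{1-bounded:insert:write to Allocated}; these are all producer steps, and between \Cref{1-bounded:insert:clear item} and \Cref{1-bounded:insert:write item} the producer executes exactly one write to \codestyle{Allocated} (on \Cref{1-bounded:insert:write to Allocated}). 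I would also record the obvious fact that the producer's control flow forces these steps to occur cyclically: \Cref{1-bounded:insert:clear item}, then \Cref{1-bounded:insert:write to Allocated}, then \Cref{1-bounded:insert:write item}, then (on a later successful \codestyle{Insert}) \Cref{1-bounded:insert:clear item} again, etc.

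\textbf{Base case.} In the initial configuration, all of \codestyle{Items} is $\bot$ and the producer has not yet performed \Cref{1-bounded:insert:write item}, so the first clause of the lemma applies and holds vacuously-correctly.

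\textbf{Inductive step.} Assume the invariant holds before a step $e$; I consider the cases for $e$. If $e$ is not one of \Cref{1-bounded:insert:clear item}, \Cref{1-bounded:insert:write to Allocated}, \Cref{1-bounded:insert:write item}, then neither \codestyle{Items} nor \codestyle{Allocated} changes, and $e$ also does not move the producer across the boundary described in the lemma (in particular it does not move it into or out of the open interval strictly between \Cref{1-bounded:insert:clear item} and \Cref{1-bounded:insert:write item}, and it does not change whether \Cref{1-bounded:insert:write item} has ever been executed), so the invariant is preserved. If $e$ is \Cref{1-bounded:insert:clear item}: by the induction hypothesis, just before $e$ the producer is not strictly between \Cref{1-bounded:insert:clear item} and \Cref{1-bounded:insert:write item} and has (if this is not the producer's first time reaching here) previously performed \Cref{1-bounded:insert:write item}, so \codestyle{Items[Allocated]} $\neq \bot$ and all other cells are $\bot$; after $e$ writes $\bot$ to \codestyle{Items[m]} $=$ \codestyle{Items[Allocated]} (since \codestyle{Allocated} has not changed since the last \Cref{1-bounded:insert:write item}, as argued above using the cyclic control flow — this is essentially the argument in \Cref{WF-invariant: no 2 items concurrently}), all of \codestyle{Items} is $\bot$, and the producer is now strictly between \Cref{1-bounded:insert:clear item} and \Cref{1-bounded:insert:write item}, matching the first clause. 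If $e$ is \Cref{1-bounded:insert:write to Allocated}: by the induction hypothesis all of \codestyle{Items} is $\bot$ (the producer is strictly between \Cref{1-bounded:insert:clear item} and \Cref{1-bounded:insert:write item}), and changing \codestyle{Allocated} leaves all cells $\bot$, and the producer remains in that open interval, so the first clause still holds. If $e$ is \Cref{1-bounded:insert:write item}: by the induction hypothesis all of \codestyle{Items} is $\bot$; after $e$ writes $x \neq \bot$ to \codestyle{Items[m]} $=$ \codestyle{Items[Allocated]} (the value of \codestyle{m} at this point equals the value written to \codestyle{Allocated} on the preceding \Cref{1-bounded:insert:write to Allocated}, and \codestyle{Allocated} is not touched again before $e$), we have \codestyle{Items[Allocated]} $\neq \bot$ and all other cells $\bot$, and the producer has now performed \Cref{1-bounded:insert:write item} and is no longer strictly between \Cref{1-bounded:insert:clear item} and \Cref{1-bounded:insert:write item}, so the second clause holds.

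\textbf{Main obstacle.} The only delicate point is bookkeeping the claim ``\codestyle{Allocated} has not changed between the most recent \Cref{1-bounded:insert:write to Allocated} and the step $e$ under consideration,'' i.e.\ correctly tying the value of the producer's local \codestyle{m} to the shared \codestyle{Allocated} at the moments \Cref{1-bounded:insert:clear item} and \Cref{1-bounded:insert:write item} are executed. This is exactly the reasoning already used in the proof of \Cref{WF-invariant: no 2 items concurrently}, since the \codestyle{Insert} code here is identical to that of \Cref{fig: WF 1-bounded bag single producer} except for the extra \codestyle{Done.dWrite()} on \Cref{1-bounded:insert:write to done}, which touches neither \codestyle{Items} nor \codestyle{Allocated}; so in the write-up I would either reproduce that short argument or simply note that the proof is verbatim that of \Cref{WF-invariant: no 2 items concurrently}.
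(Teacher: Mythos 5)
Your proposal is correct and follows essentially the same reasoning as the paper's proof: the paper argues the same cyclic structure of the producer's writes (element written to \codestyle{Items[Allocated]} on \Cref{1-bounded:insert:write item}, later cleared at the same location on \Cref{1-bounded:insert:clear item} because \codestyle{Allocated} is unchanged in between), just presented informally rather than as an explicit induction with a case analysis on step types. Your more explicit bookkeeping of the step where \codestyle{Allocated} changes (\Cref{1-bounded:insert:write to Allocated}) and of the producer's first pass is a harmless elaboration of the same argument.
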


\begin{proof}
    All components of \codestyle{Items} are initialized to $\bot$. 
    The first successful \codestyle{Insert}  writes an element to \codestyle{Items[Allocated]}
    on \Cref{1-bounded:insert:write item}. Any subsequent successful \codestyle{Insert}  writes $\bot$  to \codestyle{Ite\-ms[Allocated]} on \Cref{1-bounded:insert:clear item}. Note that the producer does not change the value of \codestyle{Allocated} between \Cref{1-bounded:insert:write item} and \Cref{1-bounded:insert:clear item},
    so, immediately after \Cref{1-bounded:insert:clear item}, all components of  \codestyle{Items} are $\bot$.
    Later, on \Cref{1-bounded:insert:write item}, the producer
 writes an element to \codestyle{Items[Allocated]}.
\end{proof}

Next, we show that 
when a location in \codestyle{TS} can be
reset, the corresponding location in
\codestyle{Items}  contains  $\bot$.

\begin{lemma}\label{invariant: bot while in used}
    If $k$ $\in$ \codestyle{used}, then \codestyle{Items[k]} = $\bot$.
\end{lemma}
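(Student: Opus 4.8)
The plan is to reuse, essentially verbatim, the argument for the analogous invariant in the wait-free implementation (\Cref{WF-invariant: bot while in used}), since the producer's code governing \codestyle{used} and \codestyle{Items} is unchanged here and the only new line, \Cref{1-bounded:insert:write to done}, touches neither variable. I would phrase it as a short invariant argument over the steps of the execution, noting that only the producer modifies \codestyle{used} or any entry of \codestyle{Items}, so it suffices to examine the producer's steps.

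First I would isolate the only two events that could falsify the claim: adding $k$ to \codestyle{used} while \codestyle{Items[k]} $\neq \bot$, or writing a non-$\bot$ value to \codestyle{Items[k]} while $k \in$ \codestyle{used}. The producer adds an index to \codestyle{used} only on \Cref{1-bounded:insert:add to used}, with value \codestyle{m}, and only immediately after it has set \codestyle{Items[m]} $= \bot$ on \Cref{1-bounded:insert:clear item}; hence at the moment $k$ joins \codestyle{used} we have \codestyle{Items[k]} $= \bot$, ruling out the first event. For the second event: a non-$\bot$ value reaches \codestyle{Items[k]} only on \Cref{1-bounded:insert:write item}, and only when \codestyle{m} $= k$; but \codestyle{m} was assigned on the preceding \Cref{1-bounded:insert:choose index} to an index outside \codestyle{hazardous}, so $k \notin$ \codestyle{hazardous}. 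Between \Cref{1-bounded:insert:choose index} and \Cref{1-bounded:insert:write item} the producer executes \Cref{1-bounded:insert:narrow used down}, setting \codestyle{used} $\leftarrow$ \codestyle{used} $\cap$ \codestyle{hazardous}, which removes $k$ from \codestyle{used} (if present) before the write on \Cref{1-bounded:insert:write item}. So immediately after \codestyle{Items[k]} becomes non-$\bot$, $k \notin$ \codestyle{used}, and neither is touched again before the producer next passes \Cref{1-bounded:insert:clear item}.

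Combining these, whenever $k \in$ \codestyle{used} the most recent write to \codestyle{Items[k]} was the $\bot$-write on \Cref{1-bounded:insert:clear item} that preceded $k$'s insertion into \codestyle{used}, so \codestyle{Items[k]} $= \bot$, which establishes the lemma. I do not expect a genuine obstacle here; the only point requiring care is confirming that no step other than the producer's changes \codestyle{used} or \codestyle{Items}, and that the added \codestyle{Done} operations on \Cref{1-bounded:insert:write to done} and \Cref{1-bounded:take:read done}/\Cref{1-bounded:take:reread done} are irrelevant, so the wait-free proof carries over unchanged.
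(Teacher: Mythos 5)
Your proposal is correct and follows essentially the same argument as the paper's proof: the index $k$ is added to \codestyle{used} only right after \codestyle{Items[k]} is set to $\bot$ on \Cref{1-bounded:insert:clear item}, and before any element is written to \codestyle{Items[k]} on \Cref{1-bounded:insert:write item} the index is removed from \codestyle{used} on \Cref{1-bounded:insert:narrow down used} because it was chosen outside \codestyle{hazardous}. Your observation that the proof of \Cref{WF-invariant: bot while in used} carries over unchanged is also exactly what the paper does.
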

\begin{proof}
    The producer adds the location \codestyle{m} = $k$ to \codestyle{used} on \Cref{1-bounded:insert:add to used}, after it sets \codestyle{Items[m]} to $\bot$ on \Cref{1-bounded:insert:clear item}. 
    Thus, \codestyle{Items[k]} = $\bot$ when $k$ is added to \codestyle{used}. 
    Suppose that an element is subsequently written to \codestyle{Items[k]}.
    This occurs on \Cref{1-bounded:insert:write item}, with \codestyle{m} = $k$.
    The value of \codestyle{m} was chosen on \Cref{1-bounded:insert:choose index} to be a value
    that is not in \codestyle{hazardous}.
    Thus, if $k$ was in \codestyle{used}, it was removed from \codestyle{used} on
    \Cref{1-bounded:insert:narrow used down}, before an element was written to \codestyle{Items[m]}.
\end{proof}

When a \codestyle{Take} operation returns an element, we can identify an interval
immediately before it performed its successful \codestyle{t\&s} in which this test\&set object contained 0
and the corresponding location in \codestyle{Items} contained the value that \codestyle{Take} returned.

\begin{lemma}\label{lemma using hazards}
Suppose $tk$ is a \codestyle{Take} operation that 
performs a successful \codestyle{t\&s} of \codestyle{TS[$a$]} on \Cref{1-bounded:take:t&s}.
Let $x$ be the element that $tk$ last read from \codestyle{Items[$a$]}
on \Cref{1-bounded:take:read item}.
Then \codestyle{TS[$a$]} = 0 and \codestyle{Items[$a$]} = $x$ between
when the producer last wrote to \codestyle{Items[$a$]} prior to the read
and when $tk$ performed  its successful \codestyle{t\&s}.
\end{lemma}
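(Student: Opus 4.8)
The plan is to follow the argument in the proof of \Cref{WF-lemma using hazards} essentially verbatim, since the \codestyle{Insert} and \codestyle{Take} procedures of \Cref{fig: 1-bounded bag single producer} differ from those of \Cref{fig: WF 1-bounded bag single producer} only in the extra \codestyle{Done.dWrite()} and \codestyle{Done.dRead()} steps, and these touch neither \codestyle{Items}, nor \codestyle{TS}, nor \codestyle{Hazards}. Concretely, I would fix the step $e$ in which $tk$ performs its successful \codestyle{t\&s} of \codestyle{TS[$a$]} on \Cref{1-bounded:take:t&s}, let $D_2$ be the configuration just before $e$, let $D_1$ be the configuration just before $tk$'s last execution of \Cref{1-bounded:take:read item} (the one that returns $x$), and let $D_0$ be the configuration just after the last write to \codestyle{Items[$a$]} that the producer performs on \Cref{1-bounded:insert:write item} before $D_1$ (such a write exists because \codestyle{Items[$a$]} is a single-writer register initialized to $\bot$ and $tk$ reads $x \neq \bot$ from it at $D_1$). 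Writing $P_i$ for the process running $tk$, the first step is to observe that $P_i$ wrote $a$ into \codestyle{Hazards[$i$]} on \Cref{1-bounded:take:add hazard} before $D_1$ and does not write \codestyle{Hazards[$i$]} again until \Cref{1-bounded:take:clear hazard1}, which it reaches only after $e$; hence \codestyle{Hazards[$i$]} $= a$ throughout $[D_1, D_2]$. Since $tk$ reads $x$ from \codestyle{Items[$a$]} immediately after $D_1$ and, by the choice of $D_0$, the producer performs no write to \codestyle{Items[$a$]} on \Cref{1-bounded:insert:write item} between $D_0$ and $D_1$, we obtain \codestyle{Items[$a$]} $= x$ throughout $[D_0, D_1]$.

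The next step is to show \codestyle{TS[$a$]} $= 0$ throughout $[D_0, D_2]$. It is $0$ at $D_2$, because $e$ is a successful \codestyle{t\&s}. If it were $1$ at some configuration in $[D_0, D_2]$, the producer would have to reset \codestyle{TS[$a$]} on \Cref{1-bounded:insert:reset TS} at a step $R$ occurring after that configuration and before $e$. By \Cref{invariant: no item right before write}, \codestyle{Items[$a$]} $= \bot$ at $R$; combined with \codestyle{Items[$a$]} $= x \neq \bot$ on $[D_0, D_1]$, this means the producer must clear \codestyle{Items[$a$]} on \Cref{1-bounded:insert:clear item} somewhere after $D_1$ and before $R$, so the \codestyle{Insert} call in which $R$ occurs executes \codestyle{COLLECT(Hazards)} on \Cref{1-bounded:insert:collect hazards} at some point in $(D_1, D_2)$. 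Since \codestyle{Hazards[$i$]} $= a$ there, that \codestyle{COLLECT} puts $a$ into \codestyle{hazardous}, so the loop on \Cref{1-bounded:insert:reset loop}, which iterates over \codestyle{used} $\setminus$ \codestyle{hazardous}, does not reset \codestyle{TS[$a$]}, contradicting the existence of $R$. Hence \codestyle{TS[$a$]} $= 0$ on $[D_0, D_2]$. Finally, to extend \codestyle{Items[$a$]} $= x$ to all of $[D_0, D_2]$: after the write to \codestyle{Items[$a$]} just before $D_0$, the next write the producer makes to \codestyle{Items[$a$]} is a clear on \Cref{1-bounded:insert:clear item}, and to reach that line it must first read $1$ from \codestyle{TS[$a$]} on \Cref{1-bounded:insert:check TS[m]}; since \codestyle{TS[$a$]} $= 0$ on $[D_0, D_2]$, this cannot happen before $D_2$, so \codestyle{Items[$a$]} $= x$ throughout $[D_0, D_2]$.

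I expect the main obstacle to be the same delicate point that already arises in \Cref{WF-lemma using hazards}: pinning down that the clear of \codestyle{Items[$a$]} forced above lies after $D_1$, and that the \codestyle{COLLECT} in the \codestyle{Insert} call performing $R$ is guaranteed to fall inside the window $[D_1, D_2]$ in which \codestyle{Hazards[$i$]} $= a$, so that it must observe this hazard. Making this rigorous uses \Cref{invariant: no item right before write} to locate the producer's control state at $R$, \Cref{invariant: bot while in used} together with the fact that \codestyle{Items[$a$]} $= x \neq \bot$ at $D_0$ (hence $a \notin$ \codestyle{used} at $D_0$) to bound when $a$ could have been added to \codestyle{used}, the sequentiality of the producer, and the fact that \codestyle{Hazards[$i$]} is single-writer. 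Once these interval relationships are in place, everything else is routine bookkeeping, and the added \codestyle{Done} operations play no role.
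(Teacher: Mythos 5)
Your proposal is correct and matches the paper's proof essentially step for step: the paper also defines the same configurations $D_0$, $D_1$, $D_2$, uses the hazard window $[D_1,D_2]$ during which \codestyle{Hazards[$i$]} $= a$ to derive a contradiction with any reset of \codestyle{TS[$a$]}, and then extends \codestyle{Items[$a$]} $= x$ to $[D_0,D_2]$ via the check on \Cref{1-bounded:insert:check TS[m]}. Indeed, the paper's proof of \Cref{lemma using hazards} is itself a near-verbatim adaptation of the proof of \Cref{WF-lemma using hazards}, exactly as you anticipated.
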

\begin{proof}
Let $D_2$ be the configuration right before $tk$ performs the successful \codestyle{t\&s} of \codestyle{TS[a]}, let
$D_1$ be the configuration right before $tk$ reads \codestyle{Items[a]} 
on \Cref{1-bounded:take:read item}
for the last time, and 
let $D_0$ be the configuration right after the producer last wrote to \codestyle{Items[a]} on \Cref{1-bounded:insert:write item} before $D_1$.
Let $P_i$ be the process that performed $tk$.
 Prior to reading $x$ from \codestyle{Items[a]} right after $D_1$, process $P_i$ wrote $a$ to \codestyle{Hazards[i]} on \Cref{1-bounded:take:add hazard}.
     It next writes to \codestyle{Hazards[i]} on \Cref{1-bounded:take:clear hazard1}, which occurs after $D_2$. So, between $D_1$ and $D_2$,
     \codestyle{Hazards[i]} = $a$.
         Since $tk$ reads $x$ from \codestyle{Items[a]} immediately after $D_1$
    and the producer does not write to \codestyle{Items[a]} on \Cref{1-bounded:insert:write item} between $D_0$ and $D_1$, 
    \codestyle{Items[a]} = $x$ throughout this interval.
    
  Right after $D_2$, a \codestyle{t\&s} of \codestyle{TS[a]} succeeds, so 
     \codestyle{TS[a]} = 0 at $D_2$.
    Assume, for sake of contradiction, that \codestyle{TS[a]} = 1 at some point between $D_0$ and $D_2$. Then, after this point, but before $tk$
    performed its successful \codestyle{t\&s} on \codestyle{TS[a]}, the producer
    must have reset \codestyle{TS[a]} on \Cref{1-bounded:insert:reset TS}.
    By \Cref{invariant: no item right before write}, \codestyle{Items[a]} = $\bot$ when the reset was performed. Thus, between  $D_1$ and $D_2$, the producer must have written $\bot$ to \codestyle{Items[a]} on \Cref{1-bounded:insert:clear item} and then performed the reset.
    Between these two steps, the producer collects
    \codestyle{Hazards} and reads $a$ from \codestyle{Hazards[i]}.
But, by \Cref{1-bounded:insert:reset loop}, the producer does not reset \codestyle{TS[a]} on \Cref{1-bounded:insert:reset TS}.
This is a contradiction. Thus \codestyle{TS[a]} = 0 between $D_0$ and $D_2$.

After writing $x$ to \codestyle{Items[a]} just before $D_0$,
    the producer next writes to \codestyle{Items[a]}
    on \Cref{1-bounded:insert:clear item}.
    Between these two writes, the producer must read 1 from  \codestyle{TS[a]} on \Cref{1-bounded:insert:check TS[m]}.
But \codestyle{TS[a]} = 0 between $D_0$ and $D_2$.
    Thus, \codestyle{Items[a]} = $x$ between $D_0$ and $D_2$. 
\end{proof}

It easily follows that,
when a successful \codestyle{t\&s} is performed on a location in \codestyle{TS}, the corresponding location in \codestyle{Items} does not contain  $\bot$.

\begin{corollary}
\label{corollary to lemma using hazards}
When a \codestyle{Take} operation performs a successful \codestyle{t\&s} on \codestyle{TS[a]},
\codestyle{Items[a]} $\neq \bot$.
\end{corollary}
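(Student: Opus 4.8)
The plan is to derive this directly from \Cref{lemma using hazards}, using only the structure of the \codestyle{Take} code. The key observation is that a \codestyle{Take} operation never reaches \Cref{1-bounded:take:t&s} unless the test on \Cref{1-bounded:take:test item} succeeded, i.e., unless the value $x$ it read from \codestyle{Items[a]} on its last execution of \Cref{1-bounded:take:read item} satisfies $x \neq \bot$. So fix a \codestyle{Take} operation $tk$ that performs a successful \codestyle{t\&s} of \codestyle{TS[a]}, and let $x$ be this value; then $x \neq \bot$.

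First I would note that the interval mentioned in \Cref{lemma using hazards} is well defined: since all components of \codestyle{Items} are initialized to $\bot$ and only the producer ever writes a non-$\bot$ value to \codestyle{Items} (on \Cref{1-bounded:insert:write item}), the fact that $tk$ read $x \neq \bot$ from \codestyle{Items[a]} implies the producer performed at least one write to \codestyle{Items[a]} on \Cref{1-bounded:insert:write item} before that read, so ``the producer's last write to \codestyle{Items[a]} prior to the read'' exists. Then I would apply \Cref{lemma using hazards}, which gives that \codestyle{Items[a]} $= x$ throughout the interval from that write up to and including the configuration immediately before $tk$ performs its successful \codestyle{t\&s}. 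Since a \codestyle{t\&s} on \codestyle{TS[a]} does not modify \codestyle{Items[a]}, the value of \codestyle{Items[a]} is still $x$ when the \codestyle{t\&s} is performed. As $x \neq \bot$, this is exactly the claim.

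I expect no real obstacle here: the statement is an immediate corollary of \Cref{lemma using hazards}, and the only point requiring a word of care is confirming that the hypothesis of \Cref{lemma using hazards} is met (namely, that $tk$'s last read on \Cref{1-bounded:take:read item} returned a non-$\bot$ value, which is forced by the guard on \Cref{1-bounded:take:test item}) and that the interval it refers to is nonempty.
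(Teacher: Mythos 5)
Your proposal is correct and follows exactly the route the paper intends: the paper gives no explicit proof, introducing the corollary with ``It easily follows'' from \Cref{lemma using hazards}, and your argument (the guard on \Cref{1-bounded:take:test item} forces $x \neq \bot$, and the lemma keeps \codestyle{Items[a]} $= x$ up to the successful \codestyle{t\&s}) is precisely that easy derivation. Your extra remark that the interval in the lemma is well defined is a harmless bonus.
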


Now we show that when the producer writes an element into some location in \codestyle{Items}, 
that value is available to be taken by a taker.

\begin{lemma}\label{lemma: TS=0 on write to Items}
When the producer writes an element to \codestyle{Items[m]} on
\Cref{1-bounded:insert:write item}, 
the value of \codestyle{TS[m]} is 0.
\end{lemma}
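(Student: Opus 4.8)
The plan is to follow essentially the same argument that establishes the corresponding statement for the wait-free implementation, \Cref{WF-lemma: TS=0 on write to Items}: the \codestyle{Insert} code of \Cref{fig: 1-bounded bag single producer} coincides with that of \Cref{fig: WF 1-bounded bag single producer} except for the \codestyle{Done.dWrite()} on \Cref{1-bounded:insert:write to done}, which is irrelevant here, and the invariants we need (\Cref{invariant: no 2 items concurrently}, \Cref{invariant: bot while in used}, \Cref{corollary to lemma using hazards}) have already been reestablished for this implementation. So I would fix a step $e$ at which the producer executes \Cref{1-bounded:insert:write item}, let $a$ be the value of \codestyle{m} at $e$, and show that \codestyle{TS[$a$]} $=0$ at $e$.

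First I would show that $a$ was placed into \codestyle{used} at some point before $e$. If the producer previously executed \Cref{1-bounded:insert:write item} with \codestyle{m} $=a$, then — since \codestyle{m} is not changed between \Cref{1-bounded:insert:write item} and the next execution of \Cref{1-bounded:insert:add to used} — that next execution of \Cref{1-bounded:insert:add to used}, which occurs before $e$, added $a$ to \codestyle{used}. Otherwise, \codestyle{Items[$a$]} $=\bot$ throughout the execution up to $e$, since elements are written into \codestyle{Items} only on \Cref{1-bounded:insert:write item}; since a \codestyle{t\&s} on \codestyle{TS[$a$]} is guarded by the test on \Cref{1-bounded:take:test item} that the value read from \codestyle{Items[$a$]} is not $\bot$, no \codestyle{t\&s} on \codestyle{TS[$a$]} has been performed before $e$, so if $a\neq 1$ then \codestyle{TS[$a$]} is still $0$ at $e$ and we are done, while if $a=1$ then $a$ was added to \codestyle{used} when the producer first executed \Cref{1-bounded:insert:add to used} (when \codestyle{m} still has its initial value $1$), which precedes $e$.

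Next I would argue as follows. Since $a$ was in \codestyle{used} before $e$ while, by \Cref{invariant: bot while in used}, \codestyle{Items[$a$]} $=\bot$ whenever $a\in$\codestyle{used}, and since \codestyle{Items[$a$]} $\neq\bot$ immediately after $e$, the index $a$ was removed from \codestyle{used} on \Cref{1-bounded:insert:narrow used down} at some point before $e$; immediately before that removal $a\notin$\codestyle{hazardous}, so the corresponding iteration of the loop on \Cref{1-bounded:insert:reset loop} reset \codestyle{TS[$a$]} on \Cref{1-bounded:insert:reset TS}. Let $e'$ be the last such reset of \codestyle{TS[$a$]} before $e$. By \Cref{invariant: no 2 items concurrently}, \codestyle{Items[$a$]} $=\bot$ at $e'$, and I would check that there is no write to \codestyle{Items[$a$]} on \Cref{1-bounded:insert:write item} strictly between $e'$ and $e$ — otherwise the producer would afterwards re-add $a$ to \codestyle{used} on \Cref{1-bounded:insert:add to used} and then, when it later picks \codestyle{m} $=a$ on \Cref{1-bounded:insert:choose index} in order to perform $e$, reset \codestyle{TS[$a$]} again on \Cref{1-bounded:insert:reset TS} before $e$, contradicting the maximality of $e'$. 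Hence \codestyle{Items[$a$]} $=\bot$ throughout the interval from $e'$ to $e$, so by \Cref{corollary to lemma using hazards} no successful \codestyle{t\&s} on \codestyle{TS[$a$]} is performed in that interval; since \codestyle{TS[$a$]} $=0$ right after $e'$ and is set to $1$ only by a successful \codestyle{t\&s}, it is still $0$ at $e$.

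The hard part will be the case analysis of the second paragraph: one has to ensure that, in every scenario — a first-ever write to location $a$, a later reuse of $a$, and in particular the initialization in which \codestyle{m} $=$ \codestyle{Allocated} $=1$ and \codestyle{TS[1]} $=1$ — the index $a$ really has been inserted into \codestyle{used} before $e$, so that \Cref{invariant: bot while in used} applies. Once that is settled, the remainder is a direct chain through \Cref{invariant: bot while in used}, \Cref{invariant: no 2 items concurrently}, and \Cref{corollary to lemma using hazards}.
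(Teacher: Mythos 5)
Your proposal is correct and follows essentially the same route as the paper's own proof: the same case split on whether $e$ is the first write to location $a$ (with the sub-cases $a\neq 1$ and $a=1$), the same use of \Cref{invariant: bot while in used} to conclude that $a$ was removed from \codestyle{used} and hence \codestyle{TS[$a$]} was reset at some $e'$, and the same chain through \Cref{invariant: no 2 items concurrently} and \Cref{corollary to lemma using hazards} to rule out any successful \codestyle{t\&s} between $e'$ and $e$. The only difference is presentational: you justify the absence of writes to \codestyle{Items[$a$]} between $e'$ and $e$ via maximality of $e'$, where the paper gets it directly from choosing the last prior write with \codestyle{m} $=a$; both are fine.
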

\begin{proof}
Let $e$ be a step of the execution in which the producer performs \Cref{1-bounded:insert:write item}. Let $a$ denote the value of \codestyle{m} at this step.

Suppose that, prior to $e$, the producer has not performed \Cref{1-bounded:insert:write item} with \codestyle{m} = $a$.
Since all locations of \codestyle{Items} initially contain $\bot$ and elements are only written
into \codestyle{Items} on \Cref{1-bounded:insert:write item},
\codestyle{Items[a]} = $\bot$ prior to $e$.
Since \codestyle{t\&s} on \codestyle{TS[a]} is only performed on \Cref{1-bounded:take:t&s} after
seeing that \codestyle{Items[a]} $\neq \bot$, no \codestyle{t\&s} on \codestyle{TS[a]} was performed prior to $e$.
If $a \neq 1$, then \codestyle{TS[a]} is initially 0, so it is 0 when $e$ is performed and we are done.
If $a = 1$, then $a$ was added to \codestyle{used} when the producer first performed
\Cref{1-bounded:insert:add to used}.

Otherwise, consider the last step prior to $e$ in which the producer performed \Cref{1-bounded:insert:write item} with \codestyle{m} = $a$.
When the producer next performs \Cref{1-bounded:insert:add to used}, it adds $a$ to \codestyle{used}.

By \Cref{invariant: bot while in used}, \codestyle{Items[a]} = $\bot$ while
$a$ was in \codestyle{used}. Immediately after step $e$, \codestyle{Items[a]} $\neq \bot$,
so $a$ is not in \codestyle{used}.
Prior to being removed from \codestyle{used} on \Cref{1-bounded:insert:narrow used down},
$a$ was not in \codestyle{hazardous}, so \codestyle{TS[a]} was reset to 0 on \Cref{1-bounded:insert:reset TS} by some step $e'$.

By \Cref{invariant: no 2 items concurrently}, \codestyle{Items[a]} = $\bot$ at $e'$.
Between $e'$ and $e$, there was no write to \codestyle{Items[a]} on \Cref{1-bounded:insert:write item}, so \codestyle{Items[a]} = $\bot$.
When a successful \codestyle{t\&s} on  \codestyle{TS[a]} is performed,
\Cref{corollary to lemma using hazards}
implies that \codestyle{Items[a]} $\neq$ $\bot$.
Thus, no successful \codestyle{t\&s} on  \codestyle{TS[a]} was performed between $e'$ and $e$.
This implies that \codestyle{TS[a]} = 0 when $e$ was performed.
\end{proof}

Fix an execution consisting of operations on the bag in \Cref{fig: 1-bounded bag single producer}.
To show that the ordering of operations that we defined is a linearization, it remains to prove that the values returned by the operations are consistent with the sequential specifications of a 1-bounded bag.
We do this by induction on the length of the execution.
Let $S$ be the ordering of a prefix, $\sigma$, of the execution, let $e$ be the next step of the execution, and $S'$ be the ordering
of $\sigma\cdot e$. 
Suppose $S$ is a linearization. 
Let $B$ be the set of items in the bag at the end of $S$. 
We will prove that $S'$ is a linearization.

It suffices to consider the steps at which operations are linearized, namely when
a \codestyle{Take} operation performs a successful \codestyle{t\&s} or gets false from a \codestyle{dRead} of \codestyle{Done} and when 
an \codestyle{Insert} operation reads 0 from \codestyle{TS} or 
writes to 
\codestyle{Done}.

\medskip

First, suppose that $e$ is a successful \codestyle{t\&s} of \codestyle{TS[a]} on \Cref{1-bounded:take:t&s} by a \codestyle{Take} operation, $tk$.
Let $x \neq\bot$ be the value $tk$ read from \codestyle{Items[a]} when it last performed \Cref{1-bounded:take:read item} during $\sigma$. 
Let $D_1$ be the configuration right before this read
and let $D_2$ be the configuration at the end of $\sigma$.
Let $D_0$ be the configuration right after the last write to \codestyle{Items[a]} on \Cref{1-bounded:insert:write item} that was performed by the producer before $D_1$ and let $ins$ be the \codestyle{Insert} operation that performed this write.
By \Cref{lemma using hazards}, 
\codestyle{TS[a]} = 0 and 
\codestyle{Items[a]} = $x$ between $D_0$ and $D_2$. Therefore, by \Cref{invariant: no item right before write}, no write to any cell of \codestyle{Items} on \Cref{1-bounded:insert:write item} is performed between $D_0$ and $D_2$, which means $ins$ performed the last write of an element to \codestyle{Items} in $\sigma$.
        
If there is a pending \codestyle{Insert(x')} operation, $ins'$, at the end of $\sigma$ that
wrote to \codestyle{Items[m]} on
\Cref{1-bounded:insert:write item}
(but has not incremented \codestyle{Done} on \Cref{1-bounded:insert:write to done}),
then $ins' = ins$. 
In this case, $S'$ is $S$ followed by $ins$ with return value OK and then $tk$ with return value $x$.
To show that $S'$ is a linearization, we must show that $B = \emptyset$.
If there are no \codestyle{Insert} operations in $S$, then the bag is empty
at the end of $S$. Otherwise, let $ins^*$ be the last \codestyle{Insert} operation in $S$. 
Let $e^*$ be the step of $\sigma$ in which $ins^*$
wrote to \codestyle{Items[$a^*$]} on \Cref{1-bounded:insert:write item}.
By \Cref{lemma: TS=0 on write to Items}, \codestyle{TS[$a^*$]} = 0 when $e^*$ occurs.
Since $S$ is a linearization, right after $ins^*$, the bag contains one item. 
Only successful \codestyle{Insert} operations modify \codestyle{m} (on \Cref{1-bounded:insert:choose index}), so \codestyle{m} = $a^*$ 
when $ins$ read 1 from \codestyle{TS[m]} on \Cref{1-bounded:insert:check TS[m]}.
Thus, a successful \codestyle{t\&s} on \codestyle{TS[$a^*$]} must have been performed between $e^*$
and this step. The \codestyle{Take} operation that performed it is linearized at this 
\codestyle{t\&s}. It appears after $ins^*$ in $S$, so the bag is empty at the end of $S$.
        
If there is no \codestyle{Insert} operation that performed \Cref{1-bounded:insert:write item} in $\sigma$, but not \Cref{1-bounded:insert:write to done}, 
then $S'$ is $S$ followed by $tk$ with return value $x$.
To show that $S'$ is a linearization, we must show that $B = \{x\}$. 
Since $ins$ performed \Cref{1-bounded:insert:write item} right before $D_0$,
it also performed \Cref{1-bounded:insert:write to done} and, thus, was completed in $\sigma$.
Hence $ins$ appears in $S$. Immediately after it appears, the bag contains only $x$.
Between $D_0$ and $D_2$,  \codestyle{TS[a]} = 0, so no successful \codestyle{t\&s} is performed
on \codestyle{TS[a]}. Also, \codestyle{Items[a]} $\neq \bot$, so, by \Cref{invariant: no 2 items concurrently}, \codestyle{Items[a']} $= \bot$ for all $a' \neq a$.
By \Cref{corollary to lemma using hazards},
\codestyle{Items[a']} $\neq \bot$ when a successful \codestyle{t\&s} is performed on \codestyle{TS[a']}. Thus, no successful \codestyle{t\&s} is performed on any location of \codestyle{TS} between $D_0$ and $D_2$.
Hence, no \codestyle{Take} operation that removes $x$ is linearized between $D_0$ and $D_2$. 
Since $ins$ is linearized after $D_0$,
there are no successful \codestyle{Take} operations after $ins$ in $S$. Therefore, at the end of $S$, the bag still contains $x$.
Since $S$ is a linearization, it follows that $B = \{x\}$.

\medskip

Next, suppose that $e$ is an execution of \codestyle{Done.dRead}() on \Cref{1-bounded:take:reread done} by a \codestyle{Take} operation $tk$, which returns false. Then $S'$ is $S$ followed by $tk$ with return value EMPTY. 
To show that $S'$ is a linearization, we must show that $B$ is empty.

Let $C$ be the configuration at the end of $\sigma$,
let $C^*$ be the configuration immediately after $tk$ last read \codestyle{Done} on either \Cref{1-bounded:take:read done} or \Cref{1-bounded:take:reread done},
let $S^*$ be the linearization of the prefix of $\sigma$ ending with $C^*$, and
let $B^*$ be the set of items in the bag at the end of $S^*$.
Since $tk$ returns on \Cref{1-bounded:take:reread done}, 
the producer does not write to \codestyle{Done} 
between these two configurations.

Suppose there was a successful \codestyle{Insert} operation that is in $S$, but is not in $S^*$.
Then it was linearized between configurations $C^*$ and $C$.
Since the producer did not write to \codestyle{Done} on \Cref{1-bounded:insert:write to done} between $C^*$ and $C$,
this \codestyle{Insert} operation was coupled with a \codestyle{Take} operation. Hence it was immediately
taken from the bag after it was inserted into the bag.
Therefore, if $B^*$ is empty, then so is $B$.

So, suppose $B^*$  is not empty. Then it contains exactly one element, since $S$ is a linearization.
Items inserted by coupled \codestyle{Insert} operations are immediately taken from the bag by a \codestyle{Take} operation linearized at the same step, so they are not in the bag in the end of the linearization of any prefix of the execution.
Hence they are not in $B^*$.
Therefore, the item, $x$, in $B^*$ was inserted into the bag by an uncoupled \codestyle{Insert} operation, $ins$.
This operation wrote $x$ to \codestyle{Items} in some location, $a$, on \Cref{1-bounded:insert:write item}. When that occurred, \codestyle{TS[a]} = 0 by \Cref{lemma: TS=0 on write to Items}.
Since $ins$ is uncoupled, no taker performs a successful \codestyle{t\&s} on 
\codestyle{TS[a]} between this write and when $ins$ writes to \codestyle{Done} on \Cref{1-bounded:insert:write to done}.
Hence \codestyle{TS[a]} = 0 at the linearization point of $ins$.
No successful \codestyle{Insert} follows $ins$ in $S^*$, since $S^*$ is a linearization and the bag is 1-bounded.
No successful \codestyle{Take} operation follows $ins$ in $S^*$, since it would remove $x$ from the bag.
Thus, no successful \codestyle{t\&s} is performed between the linearization point of $ins$ and $C^*$ and, so, \codestyle{TS[a]} = 0 throughout this interval.


If $tk$ read a location $a' \neq a$ from \codestyle{Allocated} on \Cref{1-bounded:take:read Allocated} after $C^*$,
then the producer wrote $a'$ to \codestyle{Allocated} on \Cref{1-bounded:insert:write to Allocated} after $ins$ was completed, which implies that the producer passed \Cref{1-bounded:insert:check TS[m]}.
If $tk$ read $a$ from \codestyle{Allocated} on \Cref{1-bounded:take:read Allocated} after $C^*$ and then read $\bot$ from \codestyle{Items[a]} on \Cref{1-bounded:take:read item},
then the producer wrote $\bot$ to \codestyle{Items[a]} on \Cref{1-bounded:insert:clear item} after $ins$ was completed, which implies that the producer passed \Cref{1-bounded:insert:check TS[m]}.
So in both cases, the producer read 1 from \codestyle{TS[a]} on \Cref{1-bounded:insert:check TS[m]} after $ins$ was completed.
Since \codestyle{TS[a]} = 0 between the end of $ins$ and $C^*$, 
some \codestyle{Take} operation performed a successful \codestyle{t\&s} on \codestyle{TS[a]} between $C^*$ and $C$.
The first \codestyle{Take} to perform such a \codestyle{t\&s} follows $ins$ in $S$ and takes $x$ from the bag, so $x \not\in B$.

Otherwise, after $C^*$ $tk$ read $a$ from \codestyle{Allocated} and then read an element from \codestyle{Items[a]}.
In this case, $tk$ must have performed an unsuccessful \codestyle{t\&s} of \codestyle{TS[a]} on \Cref{1-bounded:take:t&s}, so
another \codestyle{Take} operation performed a successful \codestyle{t\&s} of \codestyle{TS[a]} after $C^*$ but before this step.
The first \codestyle{Take} to perform such a \codestyle{t\&s} follows $ins$ in $S$ and takes $x$ 
from the bag, so $x \not\in B$.

\medskip

Now, suppose $e$ is a read of 0 from \codestyle{TS[m]} by an \codestyle{Insert}($x$) operation $ins$ on \Cref{1-bounded:insert:check TS[m]}. Then $S'$ is $S$ followed by $ins$ with return value FULL. 
To show that $S'$ is a linearization, we must show that $B$ is not empty.

Only successful insertions can modify \codestyle{m} and \codestyle{Allocated} or reset components of  \codestyle{TS}.
Initially, \codestyle{m} = \codestyle{Allocated} = 1, \codestyle{TS[1]} = 1,  and all other cells of \codestyle{TS} are initially  0. Hence, for $ins$ to read 0 from \codestyle{TS[m]} on \Cref{1-bounded:insert:check TS[m]}, a previous successful \codestyle{Insert} operation must have been performed before $ins$, changing the value of  \codestyle{m} and \codestyle{Allocated} or resetting \codestyle{TS[1]}. 

Let $ins^*$ be the last successful \codestyle{Insert} operation the producer performed before $ins$. Since
$ins^*$ has completed, it is linearized before $ins$.
Since $ins^*$ is linearized after it performs \Cref{1-bounded:insert:write item}, 
the producer does not perform \Cref{1-bounded:insert:clear item} through \Cref{1-bounded:insert:write item}
between the linearization points of $ins^*$ and $ins$.
Note that \codestyle{m} is only changed on
\Cref{1-bounded:insert:choose index}
and \codestyle{Allocated} is only changed on
\Cref{1-bounded:insert:write to Allocated}, so 
their values do not change during this interval.
There are also no resets performed during this interval, since these only occur
on \Cref{1-bounded:insert:reset TS}.
Moreover, by \Cref{invariant: no 2 items concurrently}, throughout this interval, \codestyle{Items[Allocated]} $\neq \bot$
and all other locations of \codestyle{Items} contain $\bot$.
\Cref{corollary to lemma using hazards} implies that the only successful \codestyle{t\&s} performed during this interval is on
\codestyle{TS[Allocated]}. However, if there is such a successful \codestyle{t\&s},
it is not reset before $ins$ reads \codestyle{TS[m]}, so it would read 1, rather than 0.
Hence, no successful \codestyle{Take} operation is linearized between $ins^*$ and $ins$, and so $B$ contains the item that $ins^*$ inserted.

\medskip

Finally, suppose $e$ is a write to \codestyle{Done} by an uncoupled \codestyle{Insert}($x$) operation $ins$ on \Cref{1-bounded:insert:write to done}. 
Then $S'$ is $S$ followed by $ins$ with return value OK. 
To show that $S'$ is a linearization, we must show that $B$ is empty.
This is clearly true if there are no \codestyle{Insert} operations in $S$.
Otherwise, let $ins^*$ be the last \codestyle{Insert} operation in $S$. 
Let $e^*$ be the step of $\sigma$ in which $ins^*$
wrote to \codestyle{Items[$a^*$]} on \Cref{1-bounded:insert:write item}.
By \Cref{lemma: TS=0 on write to Items}, \codestyle{TS[$a^*$]} = 0 when $e^*$ occurs.
Since $S$ is a linearization, right after $ins^*$, the bag contains one item. 
Only successful \codestyle{Insert} operations modify \codestyle{m} (on \Cref{1-bounded:insert:choose index}), so \codestyle{m} = $a^*$ 
when $ins$ read 1 from \codestyle{TS[m]} on \Cref{1-bounded:insert:check TS[m]}.
Thus, a successful \codestyle{t\&s} on \codestyle{TS[$a^*$]} must have been performed between $e^*$
and this step. The \codestyle{Take} operation that performed it is linearized at this 
\codestyle{t\&s}. It appears after $ins^*$ in $S$, so the bag is empty at the end of $S$.

\subsection{Lock-Freedom} 
The \codestyle{Insert} operation has no unbounded loops, so it is wait-free. 
Suppose a \codestyle{Take} operation, $tk$,  finishes an iteration of the loop and is about to begin another iteration, 
because its last read of \codestyle{Done} on \Cref{1-bounded:take:reread done}
returned true. Then
some \codestyle{Insert} operation wrote to \codestyle{Done} on \Cref{1-bounded:insert:write to done} 
since when $tk$ previously  read \codestyle{Done} on \Cref{1-bounded:take:read done} or \Cref{1-bounded:take:reread done}.
This write step completes the \codestyle{Insert} operation.

\section{A Lock-Free Strongly-Linearizable Implementation of a $b$-Bounded Bag with a Single Producer}\label{section: b-bounded}

\begin{figure}[p]
\begin{lstlisting}
@\underline{Shared variables:}@
    Items[1..n+b]: an array of single-writer registers, each initialized to $\ \bot$, which can only be written to by the producer
    TS[1..n+b]: an array of readable, resettable test&set objects, each initialized to 0
    Allocated: a single-writer register, initialized to $\emptyset$, which contains a subset of $\{1,\ldots,n+b\}$ of size between 0 @\text{and}@ $b$ @\text{and}@ can only be written to by the producer 
    Hazards[1..n]: an array of single-writer registers, each initialized to $\ \bot$, where Hazards[i] can only be written to by $P_i$
    InsertDone, TakeDone: ABA-detecting registers
@\underline{Persistent local variables of the producer:}@
    used: a register, initialized to $\emptyset$, which contains a subset of $\{1,\ldots,n+b\}$
    alloc: a register, initialized to $\emptyset$, which is used to update Allocated
@\underline{Insert(x)} by the producer:@
    TakeDone.dRead()@\label{b-bounded:insert:read TakeDone}@
    repeat        
        for all m $\in$ alloc do
            if TS[m].read() = 1 then@\label{b-bounded:insert:check TS}@
                Items[m].write($\bot$)@\label{b-bounded:insert:clear item}@
                alloc $\leftarrow$ alloc - {m}@\label{b-bounded:insert:remove from alloc}@
                used $\leftarrow$ used $\cup$ {m}@\label{b-bounded:insert:add to used}@
        if |alloc| < $b$ then@\label{b-bounded:insert:alloc<b}@
            hazardous $\leftarrow$ COLLECT(Hazards)@\label{b-bounded:insert:collect hazards}@
            m $\leftarrow$ some index in $\{1, \ldots, n+b\}$ - alloc - hazardous@\label{b-bounded:insert:choose index}@
            alloc $\leftarrow$ alloc $\cup$ {m}@\label{b-bounded:insert:add to alloc}@
            Allocated.write(alloc)@\label{b-bounded:insert:add to Allocated}@
            for all i $\in$ used - hazardous do@\label{b-bounded:insert:reset loop}@
                TS[i].reset()@\label{b-bounded:insert:reset TS}@
            used $\leftarrow$ used $\cap$ hazardous@\label{b-bounded:insert:narrow used down}@
            Items[m].write(x)@\label{b-bounded:insert:write item}@
            InsertDone.dWrite()@\label{b-bounded:insert:write InsertDone}@
            return OK
        else if TakeDone.dRead() = false then@\label{b-bounded:insert:reread TakeDone}@
            return FULL
@\underline{Take()} by $P_i$, for $i\in\{1,\ldots,n\}$:@ 
    InsertDone.dRead()@\label{b-bounded:take:read InsertDone}@
    repeat
        allocated $\leftarrow$ Allocated.read()@\label{b-bounded:take:read Allocated}@
        for all a $\in$ allocated do
            Hazards[i].write(a)@\label{b-bounded:take:add hazard}@
            x $\leftarrow$ Items[a].read()@\label{b-bounded:take:read item}@
            if x $\neq$ $\bot$ then@\label{b-bounded:take:test item}@
                if TS[a].t&s() = 0 then@\label{b-bounded:take:t&s}@
                    Hazards[i].write($\bot$)@\label{b-bounded:take:clear hazard1}@
                    TakeDone.dWrite()@\label{b-bounded:take:successful writes TakeDone}@
                    return x
        Hazards[i].write($\bot$)        
        if InsertDone.dRead() = false then@\label{b-bounded:take:reread InsertDone}@
            TakeDone.dWrite()@\label{b-bounded:take:failing writes TakeDone}@
            return EMPTY
\end{lstlisting}
\caption{A lock-free, strongly-linearizable $b$-bounded bag with one producer and $n$ consumers}\label{fig: b-bounded bag single producer}
\end{figure}

Our final algorithm is a lock-free, strongly-linearizable implementation of a bag that can contain at most $b$ elements. It is shared by $n$ processes, $P_1,\ldots,P_{n}$, called 
\emph{consumers}, that can perform \codestyle{Take}() and a single process, 
called the \emph{producer}, that can perform \codestyle{Insert}($x$).
It extends our lock-free, strongly-linearizable implementation of a 1-bounded bag in \Cref{sec:1-bounded bag}.
In this algorithm, \codestyle{Allocated} is a shared register that stores a subset of at most $b$ locations, and
\codestyle{alloc} is a local variable the producer uses to update the contents of \codestyle{Allocated}.
The challenge in designing a strongly-linearizable bag algorithm from interfering objects was for an operation that is trying to take an element from the bag to detect when the bag is empty (and then return EMPTY). 
This 
is 
addressed using an ABA-detecting register, \codestyle{InsertDone}, to which
each \codestyle{Insert}($x$) operation writes 
after writing $x$ to \codestyle{Items}.
When designing a strongly-linearizable \emph{bounded} bag algorithm from interfering objects,
an operation that is trying to insert an element into the bag faces a symmetrical challenge, as it needs to detect when the bag is full (and then return FULL).
To address this, we use another ABA-detecting register, \codestyle{TakeDone}, to which each \codestyle{Take}() operation writes after performing a successful \codestyle{t\&s}().
Unsuccessful \codestyle{Take} operations also write to \codestyle{TakeDone} before returning, to help them be linearized before the unsuccessful \codestyle{Take}.
\ignore{
This is done by a \codestyle{Take} operation to prevent the following scenario: the bag is empty; a \codestyle{Take} operation reads \codestyle{TakeDone}, observes $b$ items, and does not yet reread \codestyle{TakeDone}; a \codestyle{Take} operation performs a successful \codestyle{t\&s}() and does not yet write to \codestyle{TakeDone}; another \codestyle{Insert} operation sees the successful \codestyle{t\&s}() and considering the \codestyle{Take} that performed it, it adds an item to the bag; 

A failing \codestyle{Take} has to write to \codestyle{TakeDone} only if $b < n$

This is done by an \codestyle{Insert} operation to prevent the following scenario: the bag is full; an \codestyle{Insert} operation reads \codestyle{TakeDone}, observes $b$ items, and does not yet reread \codestyle{TakeDone}; a \codestyle{Take} operation performs a successful \codestyle{t\&s}() and does not yet write to \codestyle{TakeDone}; another \codestyle{Insert} operation sees the successful \codestyle{t\&s}() and considering the \codestyle{Take} that performed it, it adds an item to the bag; 

to make sure that a concurrent \codestyle{Insert} would not return FULL if an item was    
 They write to \codestyle{TakeDone} before performing the step that linearizes them, so that this successful \codestyle{Take} operation is linearized before them

}

The producer begins an \codestyle{Insert}($x$) operation by reading \codestyle{Ta\-keDone}. 
Then it looks at each of the locations in \codestyle{TS} that have been allocated. For each location $m$ that contains 1, it clears the element in that location (setting \codestyle{Items[m]} to $\bot$), removes \codestyle{m} from \codestyle{alloc}, and adds \codestyle{m} to \codestyle{used}.
If \codestyle{alloc} contains less than $b$ locations, it adds $x$ to the bag as follows. It first collects the non-$\bot$ values from the \codestyle{Hazards} array into the set \codestyle{hazardous}. Afterwards, it allocates an arbitrary location, \codestyle{m}, from $\{1,\ldots,n+b\}$, excluding those in \codestyle{alloc} and in \codestyle{hazardous}, adds this location to \codestyle{alloc}, and announces it by
copying \codestyle{alloc} into
 \codestyle{Allocated}.
 Next, for each location in \codestyle{used} that is not in \codestyle{hazardous}, it resets the associated test\&set object and removes the location from \codestyle{used}.
Then it writes $x$ to the allocated object, \codestyle{Items[m]}. Finally, it 
writes to the ABA-detecting register \codestyle{InsertDone} and returns OK.
If \codestyle{alloc} contained $b$ locations, it rereads \codestyle{TakeDone} and, if its value has not changed since its previous read,
it returns FULL;
otherwise, it repeats the entire sequence of steps.

To perform a \codestyle{Take}() operation, a process reads \codestyle{InsertDone} and \codestyle{Allocated}. For each location $a$ in the set of locations obtained from \codestyle{Allocated}, it announces $a$ in \codestyle{Hazards} and then reads the 
element, \codestyle{x} from \codestyle{Items[a]}.
If $x\neq\bot$, it performs  \codestyle{t\&s}() on the associated test\&set object, \codestyle{TS[a]},
and, if
successful, it clears its announcement, writes to \codestyle{TakeDone}, 
and returns $x$. 
If the process completes the for loop, it
clears its announcement, 
 and rereads \codestyle{InsertDone}.
If the value of \codestyle{InsertDone}
has not changed since its previous read,
it writes to \codestyle{TakeDone} and returns EMPTY.
Otherwise, it repeats the entire sequence of steps.
Pseudocode for our implementation appears in \Cref{fig: b-bounded bag single producer}.


\subsection{Strong-Linearizability}
Consider any execution consisting of operations on this data structure.
We linearize the operations as follows:
\begin{itemize}
\item
An \codestyle{Insert} operation that obtains \codestyle{false} when it reads \codestyle{TakeDone} on \Cref{b-bounded:insert:reread TakeDone} is linearized at this read. It returns FULL. 
\item
Consider an \codestyle{Insert(x)} operation, $ins$, that allocated 
location $a$ 
to \codestyle{m} on \Cref{b-bounded:insert:choose index}
and wrote $x$ to 
\codestyle{Items[a]}
on \Cref{b-bounded:insert:write item}.
Suppose that, while it is poised to write to \codestyle{InsertDone} on \Cref{b-bounded:insert:write InsertDone},
some \codestyle{Take} operation, $tk$, performs a successful \codestyle{t\&s}() on \codestyle{TS[a]} and then some (possibly different) \codestyle{Take} operation writes to \codestyle{TakeDone}.
In this case, $ins$ is linearized immediately before $tk$,
$ins$ returns OK, and we say that $ins$ and $tk$ are \emph{coupled}.

We prove in \Cref{b-lemma: single t&s between Insert's writes to Items and InsertDone} below that at most one 
successful \codestyle{t\&s}() 
can be performed
on \codestyle{TS[a]} 
after $ins$ writes to
\codestyle{Items[a]}
and before it writes to \codestyle{InsertDone}. Hence, an \codestyle{Insert} operation is coupled with at most one \codestyle{Take} operation.
\item
Suppose that an \codestyle{Insert} operation, $ins$, writes to \codestyle{InsertDone} on \Cref{b-bounded:insert:write InsertDone}, but has not already been linearized as a coupled operation.
Then, $ins$ is linearized when it performs this write and returns OK, and we
say that $ins$ is \emph{uncoupled}.
\item
Consider a \codestyle{Take} operation, $tk$, that performs a successful \codestyle{t\&s}() on \codestyle{TS[a]} on \Cref{b-bounded:take:t&s}. 
It is linearized at the first among the following events to occur after the successful \codestyle{t\&s}() by $tk$:
a write to \codestyle{TakeDone} by any \codestyle{Take} operation on 
\Cref{b-bounded:take:successful writes TakeDone} or \Cref{b-bounded:take:failing writes TakeDone},
a read of 1 from \codestyle{TS[a]} on \Cref{b-bounded:insert:check TS},
and a write to \codestyle{InsertDone} on \Cref{b-bounded:insert:write InsertDone} by an uncoupled \codestyle{Insert} that wrote to \codestyle{Items[$a'$]} for some $a' \neq a$.
\begin{remove}
If no \codestyle{Insert} operation reads 1 from \codestyle{TS[a]} on \Cref{b-bounded:insert:check TS} and no uncoupled \codestyle{Insert} \galy{that wrote to \codestyle{Items[$a'$]} for some} $a' \neq a$ writes to \codestyle{InsertDone} on \Cref{b-bounded:insert:write InsertDone}
between when $tk$ performed its successful \codestyle{t\&s}()
and the next write to \codestyle{TakeDone} (by any \codestyle{Take} operation) on 
\Cref{b-bounded:take:successful writes TakeDone} or \Cref{b-bounded:take:failing writes TakeDone}, then
$tk$ is linearized at this write to \codestyle{TakeDone}.
Otherwise, $tk$ is linearized at the first step following the successful \codestyle{t\&s}() by $tk$ at which
a read of 1 from \codestyle{TS[a]} on \Cref{b-bounded:insert:check TS} occurs or an uncoupled \codestyle{Insert} \galy{that wrote to \codestyle{Items[$a'$]} for some} $a' \neq a$ writes to \codestyle{InsertDone} on \Cref{b-bounded:insert:write InsertDone}.
\end{remove}
Multiple successful \codestyle{Take} operations linearized at the same step are ordered arbitrarily and before any
other
operations linearized at this step, with one exception: a coupled \codestyle{Insert} is linearized right before its coupled \codestyle{Take} operation.
In all cases, $tk$ returns the value it last read on \Cref{b-bounded:take:read item}.
\begin{remove}
\item
If after it has performed \Cref{b-bounded:take:t&s}, before any \codestyle{Take} operation writes to \codestyle{TakeDone}, some \codestyle{Insert} operation reads 1 from \codestyle{TS[a]}, or some uncoupled \codestyle{Insert} operation that wrote to \codestyle{Items[b]} for some $b \neq a$ writes to \codestyle{InsertDone}, $tk$ is linearized at the first such event. 
In case it is linearized at a write to \codestyle{InsertDone} by an uncoupled \codestyle{Insert} operation and other operations are linearized at this write as well, successful \codestyle{Take} operations including $tk$ are linearized first, and if multiple successful \codestyle{Take} operations are linearized at this write to \codestyle{InsertDone}, the order among them (including $tk$) is arbitrary.
In any case, $tk$ returns the value it read in its last execution of \Cref{b-bounded:take:read item}.
\item
If after a \codestyle{Take} operation, $tk$, performs a successful \codestyle{t\&s}() on \codestyle{TS[a]}, some \codestyle{Take} operation writes to \codestyle{TakeDone}, and no \codestyle{Insert} operation reads 1 from \codestyle{TS[a]} and no uncoupled \codestyle{Insert} operation writes to \codestyle{InsertDone} in between, then $tk$ is linearized on the first such write to \codestyle{TakeDone}. 
If multiple successful \codestyle{Take} operations are linearized at the same write to \codestyle{TakeDone}, the order among them is arbitrary. 
$tk$ returns the value it read in its last execution of \Cref{b-bounded:take:read item}. 
\end{remove}
\item
Consider a \codestyle{Take} operation, $tk$, that obtains \codestyle{false} when it reads \codestyle{InsertDone} on \Cref{b-bounded:take:reread InsertDone}. 
If no uncoupled \codestyle{Insert} operation writes to \codestyle{InsertDone} while $tk$ is poised to write to \codestyle{TakeDone}, then $tk$ is linearized at its write to \codestyle{TakeDone}
(after any other operations linearized at this step).
Otherwise, $tk$ is linearized at the first such write to \codestyle{InsertDone}, before the \codestyle{Insert} and after any successful \codestyle{Take} operations linearized at this step. The ordering among unsuccessful \codestyle{Take} operations linearized at this write is arbitrary. In both cases, $tk$ returns EMPTY.
\begin{remove}
\item
Consider a \codestyle{Take} operation, $tk$, that obtains \codestyle{false} when it reads \codestyle{InsertDone} on \Cref{b-bounded:take:reread InsertDone}. If an 
uncoupled
\codestyle{Insert} operation writes to \codestyle{InsertDone} 
while $tk$ is poised to write to \codestyle{TakeDone}, 
$tk$ is linearized at the write to \codestyle{InsertDone} by the first such \codestyle{Insert}. It is linearized before the \codestyle{Insert}, and after successful \codestyle{Take} operations if any such operations are linearized at this step. If multiple failing \codestyle{Take} operations are linearized at the same write to \codestyle{InsertDone}, the order among them is arbitrary. 
$tk$ returns EMPTY.
\item
A \codestyle{Take} operation, $tk$, that obtains \codestyle{false} when it reads \codestyle{InsertDone} on \Cref{b-bounded:take:reread InsertDone} and then writes to \codestyle{TakeDone}, is linearized at its write to \codestyle{TakeDone} if no 
uncoupled 
\codestyle{Insert} operation writes to \codestyle{InsertDone} 
when $tk$ is poised to write to \codestyle{TakeDone}.
If other \codestyle{Take} operations are linearized at the same write, they are linearized before $tk$.
$tk$ returns EMPTY.
\end{remove}
\end{itemize}

To make the linearization points clearer, we 
also list the types of steps at which operations are linearized. For each, we specify the operations that can be linearized there. If 
multiple operations can be
linearized at the same step, we specify the ordering of these operations.

\begin{itemize}
    \item An \codestyle{Insert} reads 1 from \codestyle{TS[m]} on \Cref{b-bounded:insert:check TS}:
    \begin{itemize}
        \item A \textbf{successful \codestyle{Take}} that previously performed a successful \codestyle{t\&s} on \codestyle{TS[m]}.
    \end{itemize}
    \item A \textbf{successful uncoupled \codestyle{Insert}}, $ins$, which wrote to \codestyle{Items[m]}, writes to \codestyle{InsertDone} on \Cref{b-bounded:insert:write InsertDone}:
    \begin{itemize}
        \item \textbf{Successful \codestyle{Take}s} that previously performed a successful \codestyle{t\&s} on \codestyle{TS[m']} for $m' \neq m$, arbitrarily ordered.
        \item \textbf{Unsuccessful \codestyle{Take}s}, arbitrarily ordered.
        \item $ins$.
    \end{itemize}
    \item An \textbf{unsuccessful \codestyle{Insert}}, $ins$, obtains false from \codestyle{TakeDone.dRead}() on \Cref{b-bounded:insert:reread TakeDone}:
    \begin{itemize}
        \item $ins$. 
    \end{itemize}
    \item A \textbf{successful \codestyle{Take}} writes to \codestyle{TakeDone} on \Cref{b-bounded:take:successful writes TakeDone}:
    \begin{itemize}
        \item \textbf{Successful \codestyle{Take}s}, arbitrarily ordered.
        \begin{itemize}
            \item 
            \textbf{Successful coupled \codestyle{Insert}s}, each
            immediately before its coupled \codestyle{Take}.
        \end{itemize}
    \end{itemize}
    \item An \textbf{unsuccessful \codestyle{Take}} $tk$ writes to \codestyle{TakeDone} on \Cref{b-bounded:take:failing writes TakeDone}:
    \begin{itemize}
        \item \textbf{Successful \codestyle{Take}s}, arbitrarily ordered.
        \begin{itemize}
            \item 
            \textbf{Successful coupled \codestyle{Insert}s}, each
            immediately before its coupled \codestyle{Take}.
        \end{itemize}
        \item $tk$. 
    \end{itemize}
\end{itemize}

Next, we provide 
some intuition explaining our choice of 
linearization points, as well as 
why unsuccessful \codestyle{Take} operations write to \codestyle{TakeDone}
on \Cref{b-bounded:take:failing writes TakeDone} before returning EMPTY.

We pick the linearization point for an \codestyle{Insert} operation as we do
in \Cref{sec:1-bounded bag}:
a successful uncoupled Insert is linearized at its write to \codestyle{InsertDone},
a successful coupled \codestyle{Insert} operation is linearized right before its coupled \codestyle{Take} operation,
and an unsuccessful \codestyle{Insert} operation is linearized 
when it obtains
false from \codestyle{TakeDone.dRead}() on \Cref{b-bounded:insert:reread TakeDone}. 

We want to linearize a successful \codestyle{Take} operation at its
write to \codestyle{TakeDone}, 
which is intended to notify the producer that the bag is not full, similarly to how a
successful uncoupled \codestyle{Insert} operation notifies the consumers that the bag is not empty.
However, in the following four situations, 
we linearize a successful \codestyle{Take} operation earlier than its write to \codestyle{TakeDone}.

Suppose a successful \codestyle{Take} operation, $tk$, has performed a successful \codestyle{t\&s} operation on \Cref{b-bounded:take:t&s}, but has not yet been linearized when another successful \codestyle{Take} operation writes to \codestyle{TakeDone}. Then $tk$ is linearized at this write. Linearizing $tk$ at the earliest possible write to \codestyle{TakeDone} in this case is done to simplify the linearization.

An unsuccessful \codestyle{Take} operation, $tk$, cannot be linearized when it obtains false from \codestyle{InsertDone.\-dRead}(), because there might be successful \codestyle{Take} operations that remove elements from the bag, but are not yet linearized. These operations must be linearized before $tk$.
To ensure this, $tk$ writes to \codestyle{TakeDone} before returning EMPTY. This linearizes every  successful 
\codestyle{Take} operation that has performed a successful \codestyle{t\&s} operation on \Cref{b-bounded:take:t&s}, but 
has not been linearized before the write to \codestyle{TakeDone} by $tk$.
In this case, $tk$ is linearized at this write, after all such successful \codestyle{Take} operations.

A coupled \codestyle{Take} operation witnesses that an \codestyle{Insert} operation has happened and causes it to be linearized. Similarly, an \codestyle{Insert} operation, $ins$, that witnesses a successful \codestyle{Take} operation causes it to be linearized: If $ins$
reads 1 from \codestyle{TS[a]} on \Cref{b-bounded:insert:check TS}, but the  
\codestyle{Take} 
that last performed a successful \codestyle{TS[a].t\&s}() is not yet linearized,
then the \codestyle{Take} 
is linearized at this \codestyle{read}, which is before 
$ins$
removes $a$  from \codestyle{alloc} on \Cref{b-bounded:insert:remove from alloc}.
Then, when $ins$
sees that \codestyle{|alloc| < $b$} on \Cref{b-bounded:insert:alloc<b}, the bag is not full and 
it may insert a new element.


Suppose an unsuccessful \codestyle{Take} operation, $tk$, got false from \codestyle{InsertDone.dR\-ead}(), but has not yet written to \codestyle{TakeDone} when an uncoupled \codestyle{Insert} operation, $ins$, writes to \codestyle{InsertDone} on \Cref{b-bounded:insert:write InsertDone}.
Then $tk$ must be linearized
while the bag is still empty. Hence $tk$ is linearized before $ins$, at this \codestyle{dWrite}.
Each successful \codestyle{Take} operation that performed a successful \codestyle{t\&s} on \Cref{b-bounded:take:t&s},
but has not yet been linearized, is also linearized at this step, before $tk$, to
ensure
the bag is empty when $tk$ is linearized.
There is one exception:
If $ins$ wrote to \codestyle{Items[a]} on \Cref{b-bounded:insert:write item},
a \codestyle{Take} operation that 
performed a successful \codestyle{TS[a].t\&s}() at the same location after this write
should be linearized after $ins$.
A \codestyle{Take} operation that 
performed a successful \codestyle{TS[a].t\&s}() at the same location before this write
is guaranteed to be linearized before $ins$.
This is because it is linearized at or before the producer last read 1 from \codestyle{TS[a]} on 
\Cref{b-bounded:insert:check TS},
which occurs before the producer removes location $a$ from \codestyle{alloc} on \Cref{b-bounded:insert:remove from alloc},
which, in turn, must occur before $ins$ is allocated location $a$ on \Cref{b-bounded:insert:choose index}.

\medskip
We proceed to prove that the algorithm in \Cref{fig: b-bounded bag single producer} is a strongly-linearizable implementation of a $b$-bounded bag with a single producer.

\begin{remove}
\faith{This paragraph needs rewriting and perhaps needs to be moved.}
We later bring \Cref{b-lemma: single t&s between Insert's writes to Items and InsertDone} to show that the linearization point of a coupled \codestyle{Insert} operation is well defined,
and \Cref{b-lemma: takes on same location are linearized separately} to show that the linearization point of a successful \codestyle{Take} operation at the read of 1 from \codestyle{TS[a]} by an \codestyle{Insert} operation is well defined, in the sense that there is only one successful \codestyle{Take} linearized at such a read. 
\end{remove}


Fix an execution consisting of possibly concurrent operations in \Cref{fig: b-bounded bag single producer}.
Each operation $op$ in the execution is linearized either when a step by $op$ occurs, or when a step by another operation occurs between two steps of $op$. 
Thus, the linearization point defined above for every operation is within its interval.
Each operation is linearized at the same point in every extension of the execution, hence, if the implementation is linearizable, it is strongly-linearizable.

We now prove a number of useful facts about the implementation.

\codestyle{Items} is only modified by the producer. It uses its local set \codestyle{alloc} to keep track of the locations
in \codestyle{Items} that store elements.

\begin{lemma}\label{b-invariant: items < alloc}
If \codestyle{Items[$a$]} $\neq \bot$, then $a \in$ \codestyle{alloc}.
\end{lemma}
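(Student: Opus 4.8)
The plan is to prove this as an invariant of the execution, by induction on the steps. The invariant to maintain is: for every location $a \in \{1,\ldots,n+b\}$, if \codestyle{Items[$a$]} $\neq \bot$, then $a \in$ \codestyle{alloc}. Since \codestyle{Items} and \codestyle{alloc} are only ever modified by the producer (\codestyle{Items} via single-writer registers writable only by the producer, and \codestyle{alloc} is a persistent local variable of the producer), it suffices to check the steps of the producer that touch either \codestyle{Items} or \codestyle{alloc}. Steps by consumers never change either structure, so the invariant is trivially preserved across them.

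First I would check the base case: initially every component of \codestyle{Items} is $\bot$ and \codestyle{alloc} $= \emptyset$, so the implication holds vacuously. Then I would go through the relevant producer steps one at a time. The step on \Cref{b-bounded:insert:clear item} writes $\bot$ to \codestyle{Items[$m$]}; this only removes a location from the set $\{a : \codestyle{Items}[a] \neq \bot\}$, so it cannot violate the invariant. The step on \Cref{b-bounded:insert:remove from alloc} removes \codestyle{m} from \codestyle{alloc}; but this line is executed only immediately after \Cref{b-bounded:insert:clear item} has set \codestyle{Items[m]} $= \bot$ in the same iteration of the inner \codestyle{for} loop (guarded by the \codestyle{if} on \Cref{b-bounded:insert:check TS}), so at that point \codestyle{Items[m]} $= \bot$ and removing \codestyle{m} from \codestyle{alloc} keeps the implication true for location \codestyle{m}; all other locations are unaffected. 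The step on \Cref{b-bounded:insert:add to alloc} adds \codestyle{m} to \codestyle{alloc}, which only makes the consequent easier to satisfy and hence preserves the invariant. Finally, the step on \Cref{b-bounded:insert:write item} writes the element $x$ to \codestyle{Items[$m$]}; here I must argue that \codestyle{m} $\in$ \codestyle{alloc} at this point. This follows because \codestyle{m} was chosen on \Cref{b-bounded:insert:choose index} and immediately added to \codestyle{alloc} on \Cref{b-bounded:insert:add to alloc}, and between \Cref{b-bounded:insert:add to alloc} and \Cref{b-bounded:insert:write item} the producer executes only \Cref{b-bounded:insert:add to Allocated}, the reset loop on \Cref{b-bounded:insert:reset loop}--\Cref{b-bounded:insert:reset TS}, and \Cref{b-bounded:insert:narrow used down}, none of which modifies \codestyle{alloc} or changes the value of \codestyle{m}. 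Hence \codestyle{m} $\in$ \codestyle{alloc} when the write on \Cref{b-bounded:insert:write item} occurs, so the invariant is preserved.

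The main (and only mild) obstacle is the bookkeeping around \Cref{b-bounded:insert:remove from alloc}: one has to be careful that the removal of \codestyle{m} from \codestyle{alloc} is always paired with, and preceded in the same loop iteration by, the clearing of \codestyle{Items[m]} on \Cref{b-bounded:insert:clear item}, so that no location with a non-$\bot$ \codestyle{Items} entry is ever dropped from \codestyle{alloc}. Once that pairing is spelled out, and the fact that \codestyle{m} is not modified between \Cref{b-bounded:insert:add to alloc} and \Cref{b-bounded:insert:write item} is noted, the induction goes through routinely and the lemma follows.
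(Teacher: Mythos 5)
Your proposal is correct and is essentially the paper's own argument, just written out as an explicit step-by-step induction: the paper's two-sentence proof compresses exactly the two key observations you make (that \codestyle{m} is added to \codestyle{alloc} on \Cref{b-bounded:insert:add to alloc} and not removed before the write on \Cref{b-bounded:insert:write item}, and that the removal on \Cref{b-bounded:insert:remove from alloc} is always immediately preceded by the clearing on \Cref{b-bounded:insert:clear item}). No gaps; the extra bookkeeping you spell out is a faithful elaboration of the same reasoning.
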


\begin{proof}
Before the producer writes an element $x$ to \codestyle{Items[m]} on \Cref{b-bounded:insert:write item},
it adds \codestyle{m} to \codestyle{alloc} on \Cref{b-bounded:insert:add to alloc}.
Immediately after the producer writes $\bot$ to
\codestyle{Items[m]} on \Cref{b-bounded:insert:clear item}, it removes \codestyle{m} from \codestyle{alloc} on \Cref{b-bounded:insert:remove from alloc}.
\end{proof}


When a location in \codestyle{TS} is 
reset on \Cref{b-bounded:insert:reset TS},
this location is in \codestyle{used}, which is a set
that is local to the producer.
We show that this implies that 
the corresponding location in
\codestyle{Items} contains $\bot$.

\begin{lemma}\label{b-invariant: bot while in used}
    If $a \in$ \codestyle{used}, then \codestyle{Items[$a$]} = $\bot$.
\end{lemma}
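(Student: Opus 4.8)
The plan is to mimic the template used for the two earlier analogues, \Cref{invariant: bot while in used} and \Cref{WF-invariant: bot while in used}, since the role of \codestyle{used} here is the same: it records locations whose associated test\&set object the producer still owes a reset, and those are exactly locations it has just cleared in \codestyle{Items}. The whole proof is an invariant argument that tracks a single location $a$ from the moment it enters \codestyle{used}. First I would record the two ingredients that make the argument go through: \codestyle{Items} is a single-writer array modified only by the producer, and the producer writes it only on \Cref{b-bounded:insert:clear item} (where it writes $\bot$) and on \Cref{b-bounded:insert:write item} (where it writes an element $x \neq \bot$); moreover a location is placed into \codestyle{used} only on \Cref{b-bounded:insert:add to used}.

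Next I would establish the base of the invariant: when $a$ is added to \codestyle{used} on \Cref{b-bounded:insert:add to used}, this step occurs inside an iteration of the \codestyle{for all m $\in$ alloc} loop with \codestyle{m} $= a$, and in that same iteration the producer has just executed \Cref{b-bounded:insert:clear item}, writing $\bot$ to \codestyle{Items[$a$]}, right before \Cref{b-bounded:insert:remove from alloc} and \Cref{b-bounded:insert:add to used}. Hence \codestyle{Items[$a$]} $= \bot$ at the instant $a$ joins \codestyle{used}.

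Then I would argue that \codestyle{Items[$a$]} stays $\bot$ for as long as $a$ remains in \codestyle{used}. The only step that can make \codestyle{Items[$a$]} non-$\bot$ is \Cref{b-bounded:insert:write item} executed with \codestyle{m} $= a$. But in that execution of the \codestyle{if |alloc| < $b$} block, \codestyle{m} was assigned on \Cref{b-bounded:insert:choose index} to an index lying outside both \codestyle{alloc} and \codestyle{hazardous}, and \codestyle{m} is not reassigned again before \Cref{b-bounded:insert:write item}; between \Cref{b-bounded:insert:choose index} and \Cref{b-bounded:insert:write item} the producer executes \Cref{b-bounded:insert:narrow used down}, which replaces \codestyle{used} by \codestyle{used} $\cap$ \codestyle{hazardous} and therefore removes $a$ from \codestyle{used} before the write on \Cref{b-bounded:insert:write item} can occur. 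So there is no configuration in which $a \in$ \codestyle{used} and \codestyle{Items[$a$]} $\neq \bot$, which is the claim.

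I do not anticipate a real obstacle here; the only points that need care are bookkeeping ones: confirming that \codestyle{m} is constant across the lines being tracked (it is constant within one iteration of the \codestyle{for all m $\in$ alloc} loop, and is set exactly once on \Cref{b-bounded:insert:choose index} within one execution of the \codestyle{if |alloc| < $b$} block and left untouched through \Cref{b-bounded:insert:write item}), and noting that \codestyle{Items} being single-writer means no concurrent step by a consumer can break the invariant. I would phrase the final proof in the same compact style as \Cref{invariant: bot while in used}.
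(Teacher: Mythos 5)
Your proposal is correct and follows essentially the same argument as the paper's proof: a location enters \codestyle{used} only on \Cref{b-bounded:insert:add to used}, immediately after the producer writes $\bot$ to it on \Cref{b-bounded:insert:clear item}, and it cannot still be in \codestyle{used} when \Cref{b-bounded:insert:write item} writes an element there, because \codestyle{m} is chosen outside \codestyle{hazardous} on \Cref{b-bounded:insert:choose index} and \Cref{b-bounded:insert:narrow used down} shrinks \codestyle{used} to a subset of \codestyle{hazardous} before the write. The extra bookkeeping you flag (that \codestyle{m} is stable across the relevant lines and that \codestyle{Items} is single-writer) is implicit in the paper's version but does not change the argument.
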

\begin{proof}
    The producer adds the location \codestyle{m} to \codestyle{used} on \Cref{b-bounded:insert:add to used}, 
    after  it sets \codestyle{Items[m]} to $\bot$ on \Cref{b-bounded:insert:clear item}. 
When the producer writes an element $x$ to \codestyle{Items[m]} on \Cref{b-bounded:insert:write item},
\codestyle{m} is not in \codestyle{used}.
This is because \codestyle{used} is a subset of \codestyle{hazardous}, from \Cref{b-bounded:insert:narrow used down},
and \codestyle{m} s not in \codestyle{hazardous}, from \Cref{b-bounded:insert:choose index}.
\end{proof}

When a \codestyle{Take} operation successfully returns an element, we can identify an interval
immediately before it performed its successful \codestyle{t\&s}() in which this test\&set object was 0
and the corresponding location in \codestyle{Items} contained the returned value.

\begin{lemma}
\label{b-lemma using hazards}
\label{b-corollary to lemma using hazards}
Suppose $tk$ is a \codestyle{Take} operation that 
performs a successful \codestyle{t\&s}() of \codestyle{TS[$a$]} on \Cref{b-bounded:take:t&s}.
Let $x$ be the element that $tk$ last read from \codestyle{Items[$a$]}
on \Cref{b-bounded:take:read item}.
Then \codestyle{TS[$a$]} = 0 and \codestyle{Items[$a$]} = $x$ between
when the producer last wrote to \codestyle{Items[$a$]} prior to the read
and when $tk$ performed  its successful \codestyle{t\&s}().
\end{lemma}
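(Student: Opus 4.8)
The plan is to transcribe the proof of \Cref{lemma using hazards} from \Cref{sec:1-bounded bag}, with two substitutions forced by \Cref{fig: b-bounded bag single producer}: the producer now reclaims allocated locations one at a time inside its \codestyle{for all m $\in$ alloc} loop rather than from a single variable \codestyle{m}, so the fact that the corresponding \codestyle{Items} cell holds $\bot$ whenever its test\&set object is reset becomes \Cref{b-invariant: bot while in used}; and the fact that a nonempty \codestyle{Items} cell lies in \codestyle{alloc} (\Cref{b-invariant: items < alloc}) plays the role that \Cref{invariant: no item right before write} played in the 1-bounded proof when ruling out spurious rewrites of \codestyle{Items[$a$]}.

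First I would fix three reference configurations: $D_2$ right before $tk$ performs its successful \codestyle{t\&s}() of \codestyle{TS[$a$]}; $D_1$ right before $tk$'s last read of \codestyle{Items[$a$]} on \Cref{b-bounded:take:read item}; and $D_0$ right after the producer's last write to \codestyle{Items[$a$]} on \Cref{b-bounded:insert:write item} before $D_1$ (such a write exists and it wrote $x$, since \codestyle{Items[$a$]} starts at $\bot$, is changed only by the producer, \Cref{b-bounded:insert:clear item} writes only $\bot$, and $x\neq\bot$). Then I would argue three things in turn. \textbf{(i)} \codestyle{Items[$a$]} $=x$ throughout $[D_0,D_1]$ --- there is no element-write there by the choice of $D_0$, and a $\bot$-write there would persist until $D_1$, contradicting $tk$'s read of $x$ --- and \codestyle{Hazards[$i$]} $=a$ throughout $[D_1,D_2]$, since the consumer $P_i$ running $tk$ writes $a$ on \Cref{b-bounded:take:add hazard} just before that read and next touches \codestyle{Hazards[$i$]} only on \Cref{b-bounded:take:clear hazard1}, which is after $D_2$. \textbf{(ii)} \codestyle{TS[$a$]} $=0$ throughout $[D_0,D_2]$: it is $0$ at $D_2$; if it were $1$ somewhere in $[D_0,D_2]$, the producer would have to reset \codestyle{TS[$a$]} on \Cref{b-bounded:insert:reset TS} at some step $t'$ before $tk$'s \codestyle{t\&s}(), and then $a \in$ \codestyle{used} at $t'$, so \codestyle{Items[$a$]} $=\bot$ at $t'$ by \Cref{b-invariant: bot while in used}; with (i) this forces $t'\in(D_1,D_2)$, so some \codestyle{clear item} step $w\in(D_1,t')$ sets \codestyle{Items[$a$]} to $\bot$, and the \codestyle{COLLECT} on \Cref{b-bounded:insert:collect hazards} that precedes $t'$ inside the same \codestyle{Insert} call lies in $(w,t')\subseteq(D_1,D_2)$, where \codestyle{Hazards[$i$]} $=a$; hence $a$ is collected as hazardous and \Cref{b-bounded:insert:reset loop} would skip $a$, contradicting the existence of $t'$. \textbf{(iii)} \codestyle{Items[$a$]} $=x$ throughout $[D_0,D_2]$: by (i) it remains to exclude writes to \codestyle{Items[$a$]} in $(D_1,D_2)$; right after the write-item just before $D_0$ we have $a \in$ \codestyle{alloc} (\Cref{b-invariant: items < alloc}), and $a$ cannot be re-chosen on \Cref{b-bounded:insert:choose index} until some \codestyle{clear item} for $a$ removes it on \Cref{b-bounded:insert:remove from alloc}, so the first write to \codestyle{Items[$a$]} after $D_0$ is a \codestyle{clear item}, which the producer reaches only after reading $1$ from \codestyle{TS[$a$]} on \Cref{b-bounded:insert:check TS}; by (ii) that read, hence the write, is after $D_2$. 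This gives the lemma, and it also yields \codestyle{Items[$a$]} $\neq\bot$ when the successful \codestyle{t\&s}() is performed, which is the statement carried by the second label \Cref{b-corollary to lemma using hazards}.

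The part requiring the most care is (ii), specifically the claim that a \codestyle{COLLECT} lies strictly between the \codestyle{clear item} that empties \codestyle{Items[$a$]} and the reset $t'$: unlike in the 1-bounded algorithm, these two steps need not fall in the same iteration of the producer's \codestyle{repeat} loop, nor even in the same \codestyle{Insert} call. I would handle this by taking $w$ to be the \codestyle{clear item} for $a$ whose following \Cref{b-bounded:insert:add to used} put $a$ into \codestyle{used} for the last time before $t'$, and observing that a reset on \Cref{b-bounded:insert:reset TS} is reached only after the producer, in that same \codestyle{Insert} call, finds that \codestyle{alloc} has fewer than $b$ locations on \Cref{b-bounded:insert:alloc<b} and performs the \codestyle{COLLECT} on \Cref{b-bounded:insert:collect hazards}; this \codestyle{COLLECT} follows $w$ --- either $w$ is earlier in the same call, inside its \codestyle{for all m $\in$ alloc} loop, or $w$ lies in a strictly earlier call --- and precedes $t'$, so it is inside $(D_1,D_2)$ as needed. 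Everything else is a routine copy of the 1-bounded argument.
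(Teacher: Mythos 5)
Your proof is correct and follows essentially the same route as the paper's: the same three reference configurations ($s_0,s_1,s_2$ in the paper's notation), the same hazard-pointer contradiction to establish \codestyle{TS[$a$]} $=0$ via \Cref{b-invariant: bot while in used} and \Cref{b-bounded:insert:reset loop}, and the same final step using the test on \Cref{b-bounded:insert:check TS} to rule out any intervening write to \codestyle{Items[$a$]}. The only difference is that you spell out more carefully than the paper does why the \codestyle{COLLECT} preceding the offending reset must fall after the $\bot$-write and hence inside $(D_1,D_2)$ even though, in the $b$-bounded code, the clear and the reset need not lie in the same loop iteration or the same \codestyle{Insert} call; the paper asserts this ordering without comment, so your added justification is a small but welcome refinement rather than a divergence.
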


\begin{proof}
Let $i$ be the index of the process that performs $tk$.
Prior to reading $x$ from \codestyle{Items[$a$]},  $tk$ wrote $a$ to \codestyle{Hazards[$i$]} on \Cref{b-bounded:take:add hazard}.
It next writes to \codestyle{Hazards[$i$]} on \Cref{b-bounded:take:clear hazard1}.
So \codestyle{Hazards[$i$]} = $a$ between 
the step $s_1$
 when $tk$ last performed \Cref{b-bounded:take:read item}
and 
the step $s_2$
when it last performed  \Cref{b-bounded:take:t&s}.

Note that \codestyle{TS[$a$]} = 0 immediately before $tk$ performed its successful \codestyle{t\&s}() on \codestyle{TS[$a$]}.
Let $s_0$ be the step when the producer last wrote to \codestyle{Items[$a$]} (on \Cref{b-bounded:insert:write item}) prior to $s_1$.
So, \codestyle{Items[$a$]} = $x$ between $s_0$ and $s_1$.
Assume, for the sake of contradiction, that \codestyle{TS[$a$]} = 1  at some configuration between $s_0$ and $s_2$.
Then between this configuration and $s_2$, the producer
 must have reset \codestyle{TS[$a$]} on \Cref{b-bounded:insert:reset TS}.
From \Cref{b-bounded:insert:reset loop}, $a$ was in \codestyle{used} when the reset was performed, so, by \Cref{b-invariant: bot while in used}, \codestyle{Items[$a$]} = $\bot$ when the reset was performed. 
Thus, the producer must have written $\bot$ to \codestyle{Items[$a$]} on \Cref{b-bounded:insert:clear item} before it performed the reset.
Since \codestyle{Items[$a$]} $=x$ between $s_0$ and $s_1$,
the producer must have written $\bot$ to \codestyle{Items[$a$]}, collected \codestyle{Hazards} on \Cref{b-bounded:insert:collect hazards},
and performed the reset between  $s_1$ and $s_2$.
In particular,  the producer reads $a$ from \codestyle{Hazards[i]}.
But, by \Cref{b-bounded:insert:reset loop}, the producer does not reset \codestyle{TS[$a$]} on \Cref{b-bounded:insert:reset TS}.
This is a contradiction. Thus \codestyle{TS[$a$]} = 0 between $s_0$ and $s_2$.

By the test on \Cref{b-bounded:insert:check TS}, the producer does not write $\bot$ to \codestyle{Items[$a$]} on 
\Cref{b-bounded:insert:clear item} and $a$ is not removed from \codestyle{alloc} on \Cref{b-bounded:insert:remove from alloc}
between $s_0$ and $s_2$. Thus, the producer does not choose the index $a$ again on \Cref{b-bounded:insert:choose index}
and does not write to \codestyle{Items[$a$]} on \Cref{b-bounded:insert:write item} between these two steps.
Hence, \codestyle{Items[$a$]} = $x$ between
when the producer last wrote to \codestyle{Items[$a$]} at step $s_0$
and when $tk$ performed  its successful \codestyle{t\&s}() at step $s_2$.
\end{proof}

\begin{remove}
First, we show that there are up to $b$ non-$\bot$ values in \codestyle{Items}, and they are all found in locations that appear in \codestyle{alloc}.

\begin{lemma}\label{b-invariant: items < alloc}
The locations of non-$\bot$ values in \codestyle{Items} are a subset of the locations in \codestyle{alloc}.
\end{lemma}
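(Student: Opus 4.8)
The plan is to establish this as an invariant holding in every configuration of every execution, proved by a straightforward induction on the length of the execution prefix. In the initial configuration all entries of \codestyle{Items} are $\bot$ and \codestyle{alloc} $=\emptyset$, so the inclusion holds vacuously. For the inductive step I would observe that only two kinds of steps can possibly falsify ``every $a$ with \codestyle{Items[$a$]} $\neq\bot$ lies in \codestyle{alloc}'': a step that writes some non-$\bot$ value into a cell of \codestyle{Items}, and a step that deletes an index from \codestyle{alloc}. Writing $\bot$ into \codestyle{Items} only shrinks the set of non-$\bot$ locations, and adding an index to \codestyle{alloc} only enlarges \codestyle{alloc}, so neither of those can break the inclusion. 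Moreover, \codestyle{Items} is written only by the producer and \codestyle{alloc} is a persistent \emph{local} variable of the producer, so only the producer can perform either dangerous kind of step, and it does so while running its (single-threaded) \codestyle{Insert} code, which rules out interleaved modifications of \codestyle{alloc} or \codestyle{Items} by anyone else.

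For a step of the first kind: the only line that writes an element into \codestyle{Items} is \Cref{b-bounded:insert:write item}, and the index \codestyle{m} used there was just added to \codestyle{alloc} on \Cref{b-bounded:insert:add to alloc}. Between \Cref{b-bounded:insert:add to alloc} and \Cref{b-bounded:insert:write item} the producer executes only \Cref{b-bounded:insert:add to Allocated}, the reset loop on \Cref{b-bounded:insert:reset loop}--\Cref{b-bounded:insert:reset TS}, and \Cref{b-bounded:insert:narrow used down}, none of which change \codestyle{alloc}; hence \codestyle{m} $\in$ \codestyle{alloc} when the write happens, and the invariant is preserved.

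For a step of the second kind: an index \codestyle{m} is removed from \codestyle{alloc} only on \Cref{b-bounded:insert:remove from alloc}, and the immediately preceding line, \Cref{b-bounded:insert:clear item}, wrote $\bot$ into \codestyle{Items[$m$]}. Since the producer performs no write to \codestyle{Items} between these two consecutive lines (and it is the sole writer of \codestyle{Items}), we still have \codestyle{Items[$m$]} $=\bot$ at the moment \codestyle{m} leaves \codestyle{alloc}, so the invariant again holds after the step.

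The whole argument is essentially bookkeeping; the only thing that needs a moment's care is correctly pairing up the two relevant code-line pairs --- \Cref{b-bounded:insert:add to alloc}/\Cref{b-bounded:insert:write item} and \Cref{b-bounded:insert:clear item}/\Cref{b-bounded:insert:remove from alloc} --- and invoking the single-writer property of \codestyle{Items} together with there being a single producer to exclude any interference in between. I do not anticipate a genuine obstacle.
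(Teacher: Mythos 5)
Your proof is correct and is essentially the paper's own argument: the paper's (much terser) proof rests on exactly the two pairings you identify, namely that \codestyle{m} is added to \codestyle{alloc} on \Cref{b-bounded:insert:add to alloc} before the write on \Cref{b-bounded:insert:write item}, and that $\bot$ is written on \Cref{b-bounded:insert:clear item} immediately before the removal on \Cref{b-bounded:insert:remove from alloc}. Your explicit induction and the single-producer/single-writer remarks just make precise what the paper leaves implicit.
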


\begin{proof}
\codestyle{alloc} is a local variable of the producer, who writes to it only on \Cref{b-bounded:insert:add to alloc,b-bounded:insert:remove from alloc}. 
A non-$\bot$ value is written to \codestyle{Items[m]} only on \Cref{b-bounded:insert:write item}, after $m$ is added to \codestyle{alloc} on \Cref{b-bounded:insert:add to alloc}.
$m$ is removed from \codestyle{alloc} on \Cref{b-bounded:insert:remove from alloc} only after \codestyle{Items[m]} is set to $\bot$ on \Cref{b-bounded:insert:clear item}.
\end{proof}

Next, we show that 
when a location in \codestyle{TS} can be
reset, the corresponding location in
\codestyle{Items} contains $\bot$.

\begin{lemma}\label{b-invariant: bot while in used}
    If $k$ $\in$ \codestyle{used}, then \codestyle{Items[k]} = $\bot$.
\end{lemma}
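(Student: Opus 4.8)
The plan is to establish the statement as an inductive invariant of an arbitrary execution: in every configuration, \codestyle{Items[k]} $= \bot$ for every $k \in$ \codestyle{used}. Since there is a single producer and its \codestyle{Insert} calls are therefore executed sequentially, both \codestyle{used} (a persistent local variable of the producer) and every cell of \codestyle{Items} (a single-writer register written only by the producer) are modified only by the producer, so it suffices to examine the producer's steps.

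The base case is immediate, since \codestyle{used} $= \emptyset$ in the initial configuration. For the inductive step, I would assume the invariant holds before some step $e$ and argue that it still holds afterward. The invariant can fail in only two ways. First, $e$ could add a location $k$ to \codestyle{used} while \codestyle{Items[k]} $\neq \bot$; but additions to \codestyle{used} happen only on \Cref{b-bounded:insert:add to used}, with \codestyle{m} equal to the location to which the producer has just written $\bot$ on the immediately preceding \Cref{b-bounded:insert:clear item}, and \codestyle{Items[m]} is not touched between those two lines, so \codestyle{Items[k]} $= \bot$ when $k$ enters \codestyle{used}. Second, $e$ could write a non-$\bot$ value to \codestyle{Items[k]} for some $k$ that lies in \codestyle{used} both before and after $e$; but a non-$\bot$ value is written to \codestyle{Items} only on \Cref{b-bounded:insert:write item}, and then to \codestyle{Items[m]}, where within the same loop iteration \codestyle{m} was chosen on \Cref{b-bounded:insert:choose index} to lie outside \codestyle{hazardous} and the producer subsequently executed \codestyle{used} $\leftarrow$ \codestyle{used} $\cap$ \codestyle{hazardous} on \Cref{b-bounded:insert:narrow used down}. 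Since neither \codestyle{hazardous} (set only on \Cref{b-bounded:insert:collect hazards}) nor \codestyle{used} is modified on the straight-line code between those lines and \Cref{b-bounded:insert:write item}, we get \codestyle{m} $\notin$ \codestyle{used} at step $e$, so this write cannot affect any location currently in \codestyle{used}. Hence the invariant is preserved and the lemma follows.

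I do not expect a genuinely hard step: this is a routine single-writer invariant argument, essentially identical in shape to \Cref{WF-invariant: bot while in used} for the wait-free 1-bounded bag. The only point requiring a little care is confirming that \codestyle{used} and \codestyle{hazardous} are untouched on the code segment between \Cref{b-bounded:insert:choose index}/\Cref{b-bounded:insert:narrow used down} and \Cref{b-bounded:insert:write item}, so that the conclusion ``\codestyle{m} $\notin$ \codestyle{used}'' holds precisely at the instant of the write; this is clear from the pseudocode. An equivalent but terser write-up would instead fix an arbitrary $k \in$ \codestyle{used}, observe that \codestyle{Items[k]} was set to $\bot$ just before $k$ most recently joined \codestyle{used}, and note that the only subsequent write that could change \codestyle{Items[k]} is \Cref{b-bounded:insert:write item} with \codestyle{m} $= k$, which is impossible while $k \in$ \codestyle{used} by the same \Cref{b-bounded:insert:choose index}/\Cref{b-bounded:insert:narrow used down} reasoning.
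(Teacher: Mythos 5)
Your proposal is correct and matches the paper's argument: the paper's proof is exactly the ``terser write-up'' you describe at the end (location $k$ enters \codestyle{used} only on \Cref{b-bounded:insert:add to used}, immediately after \codestyle{Items[k]} is set to $\bot$ on \Cref{b-bounded:insert:clear item}, and the write on \Cref{b-bounded:insert:write item} cannot target a location in \codestyle{used} because \codestyle{used} $\subseteq$ \codestyle{hazardous} after \Cref{b-bounded:insert:narrow used down} while \codestyle{m} $\notin$ \codestyle{hazardous} by \Cref{b-bounded:insert:choose index}). Your explicit inductive framing is just a more verbose presentation of the same reasoning.
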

\begin{proof}
    The producer adds the location \codestyle{m} = $k$ to \codestyle{used} on \Cref{b-bounded:insert:add to used}, after it sets \codestyle{Items[m]} to $\bot$ on \Cref{b-bounded:insert:clear item}. 
    Thus, \codestyle{Items[k]} = $\bot$ when $k$ is added to \codestyle{used}. 
    Suppose that a non-$\bot$ value is subsequently written to \codestyle{Items[k]}.
    This occurs on \Cref{b-bounded:insert:write item}, with \codestyle{m} = $k$.
    The value of \codestyle{m} was chosen on \Cref{b-bounded:insert:choose index} to be a value
    that is not in \codestyle{hazardous}.
    Thus, if $k$ was in \codestyle{used}, it was removed from \codestyle{used} on
    \Cref{b-bounded:insert:narrow used down}, before a non-$\bot$ value was written to \codestyle{Items[m]}.
\end{proof}

When a \codestyle{Take} operation returns a non-$\bot$ value, we can identify an interval
immediately before it performed its successful \codestyle{t\&s}() in which this test\&set object was 0
and the corresponding location in \codestyle{Items} contained the returned value.

\begin{lemma}\label{b-lemma using hazards}
Suppose $tk$ is a \codestyle{Take} operation that 
performs a successful \codestyle{t\&s}() of \codestyle{TS[a]} on \Cref{b-bounded:take:t&s}.
Let $D_2$ be the configuration right before this step, let
$D_1$ be the configuration right before $tk$ reads \codestyle{Items[a]} 
on \Cref{b-bounded:take:read item}
for the last time, and let $x$ be the value that $tk$ read from \codestyle{Items[a]}.
Let $D_0$ be the configuration right after the producer last wrote to \codestyle{Items[a]} on \Cref{b-bounded:insert:write item} before $D_1$.
Then \codestyle{TS[a]} = 0 and \codestyle{Items[a]} = $x$ between $D_0$ and $D_2$.
\end{lemma}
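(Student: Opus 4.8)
The plan is to follow the template of the analogous \Cref{lemma using hazards} (and \Cref{WF-lemma using hazards}) for the $1$-bounded bag, adapting the hazard-pointer argument to the $b$-bounded code. Let $s_2$ be the step at which $tk$ performs its successful \codestyle{t\&s}() of \codestyle{TS[$a$]} on \Cref{b-bounded:take:t&s}, let $s_1$ be the step at which $tk$ last reads \codestyle{Items[$a$]} on \Cref{b-bounded:take:read item} (obtaining $x$), and let $s_0$ be the step at which the producer last writes an element to \codestyle{Items[$a$]} on \Cref{b-bounded:insert:write item} before $s_1$. Note that $x \neq \bot$, since $tk$ reaches \Cref{b-bounded:take:t&s} only after passing the test on \Cref{b-bounded:take:test item}. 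The goal is to show that \codestyle{TS[$a$]} $= 0$ and \codestyle{Items[$a$]} $= x$ at every configuration between $s_0$ and $s_2$.

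First I would pin down the hazard window. Let $P_i$ be the process running $tk$. In the for-loop iteration that performs the successful \codestyle{t\&s}(), $P_i$ writes $a$ to \codestyle{Hazards[$i$]} on \Cref{b-bounded:take:add hazard} just before the read at $s_1$, and its next write to \codestyle{Hazards[$i$]} is on \Cref{b-bounded:take:clear hazard1}, which occurs after $s_2$; since only $P_i$ writes \codestyle{Hazards[$i$]}, \codestyle{Hazards[$i$]} $= a$ between $s_1$ and $s_2$. Next I would show \codestyle{Items[$a$]} $= x$ between $s_0$ and $s_1$: the only writes to \codestyle{Items[$a$]} are on \Cref{b-bounded:insert:write item} (an element) and \Cref{b-bounded:insert:clear item} ($\bot$); by the choice of $s_0$ there is no \Cref{b-bounded:insert:write item} write in this interval, and a \Cref{b-bounded:insert:clear item} write would leave \codestyle{Items[$a$]} $= \bot$ at $s_1$ (no \Cref{b-bounded:insert:write item} write follows it before $s_1$), contradicting $x \neq \bot$; so \codestyle{Items[$a$]} stays at the value written at $s_0$, which is $x$.

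The heart of the argument is that \codestyle{TS[$a$]} $= 0$ between $s_0$ and $s_2$. Since the \codestyle{t\&s}() at $s_2$ succeeds, \codestyle{TS[$a$]} $= 0$ just before $s_2$; suppose, for contradiction, it equals $1$ at some configuration in this range. As only \Cref{b-bounded:insert:reset TS} changes a test\&set object back to $0$, there is a reset of \codestyle{TS[$a$]} at some step $r$ with $s_0 < r < s_2$. By \Cref{b-bounded:insert:reset loop}, the producer resets \codestyle{TS[$a$]} only when $a \in$ \codestyle{used}, so \codestyle{Items[$a$]} $= \bot$ at $r$ by \Cref{b-invariant: bot while in used}; combined with the previous paragraph this forces $r > s_1$, i.e., $r \in (s_1, s_2)$. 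Tracing the producer's control flow between its most recent write of $\bot$ to \codestyle{Items[$a$]} on \Cref{b-bounded:insert:clear item} before $r$ (which, again since \codestyle{Items[$a$]} $= x$ on $[s_0,s_1]$ and the reset requires a prior $\bot$-write, must occur after $s_1$) and the reset at $r$, the producer performs a COLLECT of \codestyle{Hazards} on \Cref{b-bounded:insert:collect hazards}, and this COLLECT lies in $(s_1, s_2)$. Hence, by the hazard window, it reads \codestyle{Hazards[$i$]} $= a$, so $a \in$ \codestyle{hazardous}, and the loop on \Cref{b-bounded:insert:reset loop} --- which iterates over \codestyle{used} $-$ \codestyle{hazardous} --- skips $a$; thus \codestyle{TS[$a$]} is not reset at $r$, a contradiction. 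Therefore \codestyle{TS[$a$]} $= 0$ throughout $[s_0, s_2]$.

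Finally I would extend \codestyle{Items[$a$]} $= x$ from $[s_0, s_1]$ to all of $[s_0, s_2]$: the producer writes $\bot$ to \codestyle{Items[$a$]} on \Cref{b-bounded:insert:clear item} only after reading \codestyle{TS[$a$]} $= 1$ on \Cref{b-bounded:insert:check TS} and, in that same for-loop body, also removes $a$ from \codestyle{alloc} on \Cref{b-bounded:insert:remove from alloc}; since \codestyle{TS[$a$]} $= 0$ on $[s_0, s_2]$ no such read occurs there, so $a$ is neither cleared from \codestyle{Items} nor removed from \codestyle{alloc}, and therefore $a$ cannot be re-chosen on \Cref{b-bounded:insert:choose index} (which selects an index outside \codestyle{alloc}), so no \Cref{b-bounded:insert:write item} write to \codestyle{Items[$a$]} happens in this interval either. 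Hence \codestyle{Items[$a$]} $= x$ throughout $[s_0, s_2]$. The main obstacle is the third paragraph: making precise that the COLLECT the producer executes immediately before resetting \codestyle{TS[$a$]} falls inside the window $(s_1, s_2)$ on which $P_i$'s hazard pointer equals $a$. This requires a careful trace of the producer's code between its $\bot$-write to \codestyle{Items[$a$]} and its reset of \codestyle{TS[$a$]}, using the code structure (the for-loop over \codestyle{alloc} and the \codestyle{|alloc| < $b$} branch) together with the fact that once $a$ enters \codestyle{used} it is removed only on \Cref{b-bounded:insert:narrow used down}, which itself immediately follows a COLLECT, to rule out the preceding COLLECT occurring at or before $s_1$.
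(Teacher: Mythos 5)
Your proposal is correct and follows essentially the same route as the paper's proof: establish the hazard window $[s_1,s_2]$ during which \codestyle{Hazards[$i$]} $= a$, derive \codestyle{TS[$a$]} $=0$ on $[s_0,s_2]$ by contradiction (a reset would require a prior $\bot$-write after $s_1$ and hence a COLLECT inside the hazard window, which would place $a$ in \codestyle{hazardous} and block the reset), and then use the test on \Cref{b-bounded:insert:check TS} to conclude that \codestyle{Items[$a$]} is neither cleared nor re-chosen via \codestyle{alloc}, so it stays equal to $x$ through $s_2$. Your explicit justification that the relevant COLLECT falls in $(s_1,s_2)$ is the same point the paper makes, just spelled out in slightly more detail.
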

\begin{proof}
Let $P_i$ be the process that performed $tk$.
 Prior to reading $x$ from \codestyle{Items[a]} right after $D_1$, process $P_i$ wrote $a$ to \codestyle{Hazards[i]} on \Cref{b-bounded:take:add hazard}.
     It next writes to \codestyle{Hazards[i]} on \Cref{b-bounded:take:clear hazard1}, which occurs after $D_2$. So, between $D_1$ and $D_2$,
     \codestyle{Hazards[i]} = $a$.
         Since $tk$ reads $x$ from \codestyle{Items[a]} immediately after $D_1$
    and the producer does not write to \codestyle{Items[a]} on \Cref{b-bounded:insert:write item} between $D_0$ and $D_1$, 
    \codestyle{Items[a]} = $x$ throughout this interval.
    
  Right after $D_2$, a \codestyle{t\&s}() of \codestyle{TS[a]} succeeds, so 
     \codestyle{TS[a]} = 0 at $D_2$.
    Assume, for sake of contradiction, that \codestyle{TS[a]} = 1 at some point between $D_0$ and $D_2$. Then, after this point, but before $tk$
    performed its successful \codestyle{t\&s}() on \codestyle{TS[a]}, the producer
    must have reset \codestyle{TS[a]} on \Cref{b-bounded:insert:reset TS}.
    According to \Cref{b-bounded:insert:reset loop}, $a$ was in \codestyle{used} when the reset was performed, so by \Cref{b-invariant: bot while in used}, \codestyle{Items[a]} = $\bot$ at that point. Thus, between  $D_1$ and $D_2$, the producer must have written $\bot$ to \codestyle{Items[a]} on \Cref{b-bounded:insert:clear item} and then performed the reset.
    Between these two steps, the producer collects
    \codestyle{Hazards} and reads $a$ from \codestyle{Hazards[i]}.
But, by \Cref{b-bounded:insert:reset loop}, the producer does not reset \codestyle{TS[a]} on \Cref{b-bounded:insert:reset TS}.
This is a contradiction. Thus \codestyle{TS[a]} = 0 between $D_0$ and $D_2$.

When writing $x$ to \codestyle{Items[a]} just before $D_0$, $a$ is in \codestyle{alloc} after it was added to \codestyle{alloc} on \Cref{b-bounded:insert:add to alloc}. To perform another write to \codestyle{Items[a]} on \Cref{b-bounded:insert:write item}, the producer needs to choose the index $a$ again on \Cref{b-bounded:insert:choose index}, which requires that $a$ $\notin$ \codestyle{alloc}. Before that, $a$ must be removed from \codestyle{alloc} on \Cref{b-bounded:insert:remove from alloc}. 
Hence, after writing $x$ to \codestyle{Items[a]} just before $D_0$,
    the producer next writes to \codestyle{Items[a]}
    on \Cref{b-bounded:insert:clear item}.
    Between these two writes, the producer must read 1 from  \codestyle{TS[a]} on \Cref{b-bounded:insert:check TS}.
But \codestyle{TS[a]} = 0 between $D_0$ and $D_2$.
    Thus, \codestyle{Items[a]} = $x$ between $D_0$ and $D_2$. 
\end{proof}

It easily follows that,
when a successful \codestyle{t\&s}() is performed on a location in \codestyle{TS}, the corresponding location in \codestyle{Items} does not contain  $\bot$.

\begin{corollary}
\label{b-corollary to lemma using hazards}
When a \codestyle{Take} operation performs a successful \codestyle{t\&s}() on \codestyle{TS[a]},
\codestyle{Items[a]} $\neq \bot$.
\end{corollary}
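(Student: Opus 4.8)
The plan is to derive this as an immediate consequence of \Cref{b-lemma using hazards}. I would fix a \codestyle{Take} operation $tk$ that performs a successful \codestyle{t\&s}() of \codestyle{TS[$a$]} on \Cref{b-bounded:take:t&s}, and let $x$ denote the value that $tk$ read from \codestyle{Items[$a$]} in its last execution of \Cref{b-bounded:take:read item}. Applying \Cref{b-lemma using hazards} gives \codestyle{Items[$a$]} $= x$ throughout the interval from when the producer last wrote to \codestyle{Items[$a$]} before that read up to the step at which $tk$ performs its successful \codestyle{t\&s}(); in particular, \codestyle{Items[$a$]} $= x$ at the configuration right before $tk$ performs that \codestyle{t\&s}().

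It then remains to observe that $x \neq \bot$. This is immediate from the control flow of \codestyle{Take}: $tk$ reaches \Cref{b-bounded:take:t&s} and performs the \codestyle{t\&s}() only when the test \codestyle{x $\neq$ $\bot$} on \Cref{b-bounded:take:test item} evaluates to true, and the value being tested there is exactly the value last read from \codestyle{Items[$a$]} on \Cref{b-bounded:take:read item}, i.e.\ the $x$ above. Combining the two facts yields \codestyle{Items[$a$]} $= x \neq \bot$ at the moment of the successful \codestyle{t\&s}(), which is the claim.

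There is essentially no real obstacle here; this is a genuine corollary rather than a theorem needing new ideas. The only points that warrant a moment's care are: (i) confirming that the step of the successful \codestyle{t\&s}() is the right endpoint of the interval supplied by \Cref{b-lemma using hazards}, so that the lemma's conclusion \codestyle{Items[$a$]} $= x$ indeed holds at that step; and (ii) checking that the symbol $x$ in the statement of \Cref{b-lemma using hazards} and the symbol $x$ in the test on \Cref{b-bounded:take:test item} refer to the same read of \codestyle{Items[$a$]} — both are described as the value $tk$ last read from \codestyle{Items[$a$]} on \Cref{b-bounded:take:read item}, so they coincide. With these two observations in place the proof is a one-liner.
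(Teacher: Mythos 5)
Your proposal is correct and matches the paper's intent exactly: the paper states this corollary as an immediate consequence of \Cref{b-lemma using hazards} without writing out a proof, and your derivation (the lemma gives \codestyle{Items[$a$]} $= x$ up to the moment of the successful \codestyle{t\&s}(), and the test on \Cref{b-bounded:take:test item} guarantees $x \neq \bot$) is precisely the intended one-liner. Both of your points of care are well taken and resolve correctly.
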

\end{remove}

Next, we show that when the producer writes an element 
into some location in \codestyle{Items}, 
that element is available to 
be taken by a taker.

\begin{lemma}\label{b-lemma: TS=0 on write to Items}
When the producer writes an element  to \codestyle{Items[m]} on
\Cref{b-bounded:insert:write item}, 
\codestyle{TS[m]} = 0.
\end{lemma}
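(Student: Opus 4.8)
The plan is to follow the same strategy as the proof of \Cref{lemma: TS=0 on write to Items} for the $1$-bounded bag, adapted to the $b$-bounded code; this should actually be a little cleaner here, since every entry of \codestyle{TS} is initialized to $0$ and so there is no special case for location $1$. Fix a step $e$ at which the producer performs \Cref{b-bounded:insert:write item}, and let $a$ be the value of \codestyle{m} at $e$. I would split into two cases according to whether the producer has ever previously executed \Cref{b-bounded:insert:write item} with \codestyle{m} $= a$.

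If it has not, then, since \codestyle{Items} is initialized to $\bot$ everywhere and a non-$\bot$ value is written to \codestyle{Items} only on \Cref{b-bounded:insert:write item}, \codestyle{Items[$a$]} $= \bot$ throughout the execution before $e$. A \codestyle{t\&s}() on \codestyle{TS[$a$]} is performed only on \Cref{b-bounded:take:t&s}, and only after the same consumer read a non-$\bot$ value from \codestyle{Items[$a$]} on the earlier \Cref{b-bounded:take:read item}; hence no \codestyle{t\&s}() on \codestyle{TS[$a$]} is performed before $e$. Since \codestyle{TS[$a$]} starts at $0$ and only a successful \codestyle{t\&s}() ever changes its value to $1$, \codestyle{TS[$a$]} $= 0$ at $e$.

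The interesting case is when the producer has previously executed \Cref{b-bounded:insert:write item} with \codestyle{m} $= a$; let $e'_w$ be the last such step before $e$. By \Cref{b-invariant: items < alloc}, $a \in$ \codestyle{alloc} right after $e'_w$. For $e$ to occur, the producer must choose $a$ again on \Cref{b-bounded:insert:choose index}, which requires $a \notin$ \codestyle{alloc}; hence $a$ is removed from \codestyle{alloc} on \Cref{b-bounded:insert:remove from alloc} somewhere in $(e'_w, e)$, and the following line, \Cref{b-bounded:insert:add to used}, adds $a$ to \codestyle{used}. On the other hand, $a \notin$ \codestyle{used} at $e$: in $e$'s iteration the producer chose $a$ on \Cref{b-bounded:insert:choose index} with $a \notin$ \codestyle{hazardous}, then executed \Cref{b-bounded:insert:narrow used down} (\codestyle{used} $\gets$ \codestyle{used} $\cap$ \codestyle{hazardous}), and nothing between that line and $e$ touches \codestyle{used}. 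So, letting $e_u$ be the last step before $e$ that adds $a$ to \codestyle{used} (which, by the above, lies in $(e'_w, e)$), $a$ must be removed from \codestyle{used} on some execution of \Cref{b-bounded:insert:narrow used down} in $(e_u, e)$, and at that point $a \notin$ \codestyle{hazardous}. Since $e_u$ is the last such addition, $a$ remains in \codestyle{used} from $e_u$ through the \Cref{b-bounded:insert:reset loop} immediately preceding that removal, so \codestyle{TS[$a$]} is reset on \Cref{b-bounded:insert:reset TS} at some step $e' \in (e_u, e) \subseteq (e'_w, e)$.

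It then remains to argue that \codestyle{TS[$a$]} stays $0$ between $e'$ and $e$. By \Cref{b-invariant: bot while in used}, \codestyle{Items[$a$]} $= \bot$ at $e'$; and since $e' > e'_w$, there is no execution of \Cref{b-bounded:insert:write item} with \codestyle{m} $= a$ in $(e', e)$, so \codestyle{Items[$a$]} $= \bot$ throughout $[e', e)$. By \Cref{b-corollary to lemma using hazards}, any successful \codestyle{t\&s}() on \codestyle{TS[$a$]} requires \codestyle{Items[$a$]} $\neq \bot$ at that instant, so none occurs in $[e', e)$; as \codestyle{TS[$a$]} is $0$ immediately after the reset $e'$ and is only ever set to $1$ by a successful \codestyle{t\&s}(), it is still $0$ at $e$. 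I expect the delicate part to be the third paragraph: pinning down that $a$ really does enter and leave \codestyle{used} (with the intervening reset) strictly after $e'_w$ and strictly before $e$, across possibly many iterations of the producer's \codestyle{repeat} loop and across \codestyle{Insert} calls. The cleanest way I see to handle this is to reason about the \emph{last} addition of $a$ to \codestyle{used} before $e$, as above, together with the structural fact that each \codestyle{Insert} call performs \Cref{b-bounded:insert:write item} at most once.
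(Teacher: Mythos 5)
Your proof is correct and takes essentially the same route as the paper's: the same case split on whether the producer has previously written to \codestyle{Items[$a$]} on \Cref{b-bounded:insert:write item}, the same chain (removal of $a$ from \codestyle{alloc} $\Rightarrow$ addition to \codestyle{used} $\Rightarrow$ reset of \codestyle{TS[$a$]} on \Cref{b-bounded:insert:reset TS}), and the same use of \Cref{b-invariant: bot while in used} and \Cref{b-corollary to lemma using hazards} to rule out an intervening successful \codestyle{t\&s}(). The only cosmetic difference is that you establish $a \notin$ \codestyle{used} at $e$ directly from \Cref{b-bounded:insert:choose index} and \Cref{b-bounded:insert:narrow used down}, whereas the paper deduces it from \codestyle{Items[$a$]} $\neq \bot$ immediately after $e$ via the same invariant.
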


\begin{remove}
\begin{proof}
Let $e$ be a step of the execution in which the producer performs \Cref{b-bounded:insert:write item}. Let $a$ denote the value of \codestyle{m} at this step.

Suppose that, prior to $e$, the producer has not performed \Cref{b-bounded:insert:write item} with \codestyle{m} = $a$.
Since all locations of \codestyle{Items} initially contain $\bot$ and non-$\bot$ values are only written
into \codestyle{Items} on \Cref{b-bounded:insert:write item},
\codestyle{Items[a]} = $\bot$ prior to $e$.
Since \codestyle{t\&s}() on \codestyle{TS[a]} is only performed on \Cref{b-bounded:take:t&s} after
seeing that \codestyle{Items[a]} $\neq \bot$, no \codestyle{t\&s}() on \codestyle{TS[a]} was performed prior to $e$.
Since \codestyle{TS[a]} is initially 0, it is 0 when $e$ is performed and we are done.

Otherwise, let $e^*$ be the last step prior to $e$ in which the producer performed \Cref{b-bounded:insert:write item} with \codestyle{m} = $a$.
This step is performed when $a$ is in \codestyle{alloc} after it was added to \codestyle{alloc} on \Cref{b-bounded:insert:add to alloc}. When the producer performs \Cref{b-bounded:insert:choose index} for the last time before $e$, it chooses the index $a$ which is not in \codestyle{alloc} at that point. This means that $a$ was earlier removed from \codestyle{alloc} after $e^*$. Indexes are removed from \codestyle{alloc} only on \Cref{b-bounded:insert:remove from alloc}. Thus, the producer executed \Cref{b-bounded:insert:remove from alloc} 
and then added $a$ to \codestyle{used} on \Cref{b-bounded:insert:add to used} 
between $e^*$ and $e$.

By \Cref{b-invariant: bot while in used}, \codestyle{Items[a]} = $\bot$ while
$a$ was in \codestyle{used}. Immediately after step $e$, \codestyle{Items[a]} $\neq \bot$,
so $a$ is not in \codestyle{used}.
Prior to being removed from \codestyle{used} on \Cref{b-bounded:insert:narrow used down},
$a$ was not in \codestyle{hazardous}, so \codestyle{TS[a]} was reset to 0 on \Cref{b-bounded:insert:reset TS} by some step $e'$.

By \Cref{b-invariant: bot while in used}, \codestyle{Items[a]} = $\bot$ at $e'$.
Between $e'$ and $e$, there was no write to \codestyle{Items[a]} on \Cref{b-bounded:insert:write item}, so \codestyle{Items[a]} = $\bot$.
When a successful \codestyle{t\&s}() on  \codestyle{TS[a]} is performed,
\Cref{b-corollary to lemma using hazards}
implies that \codestyle{Items[a]} $\neq$ $\bot$.
Thus, no successful \codestyle{t\&s}() on  \codestyle{TS[a]} was performed between $e'$ and $e$.
This implies that \codestyle{TS[a]} = 0 when $e$ was performed.
\end{proof}
\end{remove}

\begin{proof}
Consider a step $s$ of the execution in which the producer performs a write to \codestyle{TS[$a$]} on
 \Cref{b-bounded:insert:write item}.
 All locations  of \codestyle{Items} initially contain $\bot$ and elements are only written
into \codestyle{Items} on \Cref{b-bounded:insert:write item}.
Thus, if $s$ is the first step in which the producer performs a write to \codestyle{TS[$a$]} on
 \Cref{b-bounded:insert:write item}, then \codestyle{Items[a]} = $\bot$ prior to $s$.

Suppose that, prior to $e$, the producer has not performed \Cref{b-bounded:insert:write item} with \codestyle{m} = $a$.
All locations of \codestyle{Items} initially contain $\bot$, non-$\bot$ values are only written
into \codestyle{Items} on \Cref{b-bounded:insert:write item},
and  \codestyle{t\&s}() on \codestyle{TS[a]} is only performed on \Cref{b-bounded:take:t&s} after a taker
sees that \codestyle{Items[a]} $\neq \bot$, no \codestyle{t\&s}() on \codestyle{TS[a]} was performed prior to $s$.
Since \codestyle{TS[a]} is initially 0, it is 0 when $s$ is performed.

Otherwise, let $s'$ be the last step prior to $s$ at which the producer performed \Cref{b-bounded:insert:write item} with \codestyle{m} = $a$.
Then $a$ was added to \codestyle{alloc} when the producer last performed  \Cref{b-bounded:insert:add to alloc} prior to $s'$.
When the producer last performs \Cref{b-bounded:insert:choose index} prior to  $s$, the location $a$ was not in \codestyle{alloc}.
Thus, between these two steps, the producer removed $a$  from \codestyle{alloc}  on \Cref{b-bounded:insert:remove from alloc}
and added $a$ to \codestyle{used} on \Cref{b-bounded:insert:add to used}.

By \Cref{b-invariant: bot while in used}, \codestyle{Items[a]} = $\bot$ while
$a \in$  \codestyle{used}. Immediately after step $s$, \codestyle{Items[$a$]} $\neq \bot$,
so $a \not\in$ \codestyle{used}.
Prior to being removed from \codestyle{used} on \Cref{b-bounded:insert:narrow used down},
$a$ was not in \codestyle{hazardous},
 so \codestyle{TS[a]} was reset to 0 on \Cref{b-bounded:insert:reset TS}.
Since $a \in$  \codestyle{used} when this step occurred, \codestyle{Items[$a$]} = $\bot$.
Between this step and $s$,  there is no write to \codestyle{Items[$a$]} on \Cref{b-bounded:insert:write item},
by definition of $s'$, so \codestyle{Items[$a$]} = $\bot$.
When a successful \codestyle{t\&s}() on  \codestyle{TS[$a$]} is performed,
\Cref{b-corollary to lemma using hazards}
implies that \codestyle{Items[a]} $\neq$ $\bot$.
Thus, no successful \codestyle{t\&s}() on  \codestyle{TS[$a$]} was performed between these two steps.
Hence,  \codestyle{TS[$a$]} = 0 when $s$ was performed.
\end{proof}

The next lemma says that an \codestyle{Insert} operation is coupled with at most one \codestyle{Take} operation.
\begin{lemma}\label{b-lemma: single t&s between Insert's writes to Items and InsertDone}
Consider an \codestyle{Insert(x)} operation, $ins$, that allocated location $a$ to \codestyle{m} on \Cref{b-bounded:insert:choose index}.
At most one successful \codestyle{t\&s}() on \codestyle{TS[$a$]} by a \codestyle{Take} operation is performed while $ins$ is poised to write to \codestyle{InsertDone} on \Cref{b-bounded:insert:write InsertDone}.
\end{lemma}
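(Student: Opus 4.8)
The plan is to argue by contradiction, exploiting that the producer is a single sequential process and that, in its code, \Cref{b-bounded:insert:write InsertDone} immediately follows \Cref{b-bounded:insert:write item}. Consequently, from the moment $ins$ executes \Cref{b-bounded:insert:write item} until the moment it executes \Cref{b-bounded:insert:write InsertDone} --- which is precisely the interval during which $ins$ is poised to write to \codestyle{InsertDone} --- the producer performs no shared-memory step at all. I would make this the first observation.

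Next, suppose for contradiction that two \codestyle{Take} operations, $tk_1$ and $tk_2$, each perform a successful \codestyle{t\&s}() on \codestyle{TS[$a$]} during this interval, with $tk_1$'s occurring first. A successful \codestyle{t\&s}() changes the value of a test\&set object from $0$ to $1$, so immediately after $tk_1$'s successful \codestyle{t\&s}() we have \codestyle{TS[$a$]} $= 1$. For $tk_2$'s \codestyle{t\&s}() on \codestyle{TS[$a$]} to be successful, \codestyle{TS[$a$]} must equal $0$ immediately before it, so some \codestyle{reset}() of \codestyle{TS[$a$]} must be performed between $tk_1$'s and $tk_2$'s \codestyle{t\&s}() operations. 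Resets of test\&set objects occur only on \Cref{b-bounded:insert:reset TS}, a line executed only by the producer.

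Finally, combining the two observations: the producer takes no step while $ins$ is poised to write to \codestyle{InsertDone}, hence it performs no \codestyle{reset}() in the sub-interval spanned by $tk_1$'s and $tk_2$'s successful \codestyle{t\&s}() operations, contradicting the need for such a reset. This contradiction establishes the lemma, and it also shows (as used immediately afterward) that $ins$ is coupled with at most one \codestyle{Take}. The argument is short and I expect no real obstacle; the only point that needs care is pinning down the interval ``while $ins$ is poised to write to \codestyle{InsertDone}'' as lying strictly between $ins$'s executions of \Cref{b-bounded:insert:write item} and \Cref{b-bounded:insert:write InsertDone}, and then invoking the sequentiality of the producer to conclude that no producer step, and therefore no reset of \codestyle{TS[$a$]}, intervenes.
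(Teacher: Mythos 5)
Your proposal is correct and follows essentially the same route as the paper's proof: both argue that a second successful \codestyle{t\&s}() on \codestyle{TS[$a$]} would require an intervening \codestyle{reset}() on \Cref{b-bounded:insert:reset TS}, which only the producer can perform, and the producer takes no step while poised between \Cref{b-bounded:insert:write item} and \Cref{b-bounded:insert:write InsertDone}. You merely make explicit the sequentiality observation that the paper leaves implicit, and you omit the paper's (inessential for this claim) opening appeal to \Cref{b-lemma: TS=0 on write to Items}.
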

\begin{proof}
By \Cref{b-lemma: TS=0 on write to Items}, when $ins$ writes to \codestyle{Items[a]} on \Cref{b-bounded:insert:write item}, \codestyle{TS[a]} = 0. Assume some \codestyle{Take} operation performs a successful \codestyle{t\&s}() on \codestyle{TS[a]} after $ins$ performs this write and before it performs its next step, which is a write to \codestyle{InsertDone} on \Cref{b-bounded:insert:write InsertDone}. For another \codestyle{Take} operation to perform a successful \codestyle{t\&s}() on \codestyle{TS[a]}, \codestyle{TS[a]} must be first reset to 0. It is reset only by the 
producer (on \Cref{b-bounded:insert:reset TS}).
Therefore, 
no other \codestyle{Take} operation can perform a successful \codestyle{t\&s}() on \codestyle{TS[a]} while $ins$ is poised to write to \codestyle{InsertDone}. 
\end{proof}

Between a successful \codestyle{t\&s}() and a reset of \codestyle{TS[a]},
the producer reads 1 from \codestyle{TS[a]}.
\begin{lemma}\label{b-lemma: producer reads 1 from TS between t&s and reset}
After a \codestyle{Take} operation performs a successful \codestyle{TS[a].t\&s}()
the producer must read 1 from \codestyle{TS[a]}
before it resets \codestyle{TS[a]}.
\end{lemma}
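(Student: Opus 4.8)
The plan is to fix the successful \codestyle{t\&s}() in question — call this step $\tau$, performed by a \codestyle{Take} operation $tk$ on \codestyle{TS[$a$]} — and to look at the \emph{first} reset of \codestyle{TS[$a$]} by the producer that follows $\tau$ (if there is none, the statement is vacuous); call this step $r$. Any read of $1$ from \codestyle{TS[$a$]} lying in the interval $(\tau,r)$ lies before every reset of \codestyle{TS[$a$]} after $\tau$, so it suffices to exhibit one such read. First I would note that \codestyle{TS[$a$]} $=1$ throughout the open interval $(\tau,r)$: a \codestyle{t\&s}() applied when the value is $1$ leaves it unchanged, and the only operation that changes \codestyle{TS[$a$]} from $1$ to $0$ is a reset by the producer on \Cref{b-bounded:insert:reset TS}, of which $r$ is, by choice, the first after $\tau$.

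Next I would locate a write of $\bot$ to \codestyle{Items[$a$]} inside $(\tau,r)$. Since $r$ executes \Cref{b-bounded:insert:reset TS} with $i=a$, the producer has $a\in$ \codestyle{used} just before $r$ (from the loop on \Cref{b-bounded:insert:reset loop}), so \Cref{b-invariant: bot while in used} gives \codestyle{Items[$a$]} $=\bot$ just before $r$. On the other hand, \Cref{b-corollary to lemma using hazards} gives \codestyle{Items[$a$]} $\neq\bot$ at $\tau$. Since \codestyle{Items[$a$]} is modified only by the producer — set to $\bot$ on \Cref{b-bounded:insert:clear item} and to a non-$\bot$ value on \Cref{b-bounded:insert:write item} — there must be a step $c\in(\tau,r)$ at which the producer writes $\bot$ to \codestyle{Items[$a$]} on \Cref{b-bounded:insert:clear item}; take $c$ to be the earliest such step. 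The producer reaches \Cref{b-bounded:insert:clear item} only when the guard on \Cref{b-bounded:insert:check TS} holds for the same index, so the producer's step immediately preceding $c$ is a read of \codestyle{TS[$a$]} on \Cref{b-bounded:insert:check TS} that returns $1$; call it $\rho$. By construction $\rho<c<r$.

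The remaining — and most delicate — step is to show that $\rho$ occurs after $\tau$, so that $\rho\in(\tau,r)$ is the read we want. Suppose, for contradiction, that $\rho$ precedes $\tau$ (note $\rho\neq\tau$, since $\rho$ is a step of the producer and $\tau$ is a step of a consumer). The producer reads $1$ from \codestyle{TS[$a$]} at $\rho$, whereas \codestyle{TS[$a$]} $=0$ immediately before $\tau$ because $tk$'s \codestyle{t\&s}() at $\tau$ succeeds; hence \codestyle{TS[$a$]} drops from $1$ to $0$ at some step in $(\rho,\tau)$, which can only be a reset $r'$ of \codestyle{TS[$a$]} by the producer. But because $\rho$ returns $1$, the guard on \Cref{b-bounded:insert:check TS} holds and the producer proceeds directly to \Cref{b-bounded:insert:clear item} $= c$, performing no step strictly between $\rho$ and $c$. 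Since $r'<\tau<c$ and $r'>\rho$, the step $r'$ would be a step of the producer strictly between $\rho$ and $c$, a contradiction. Therefore $\rho\in(\tau,r)$, which gives a read of $1$ from \codestyle{TS[$a$]} by the producer after the successful \codestyle{t\&s}() and before it resets \codestyle{TS[$a$]}, as required.
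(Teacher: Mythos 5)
Your proof is correct and takes essentially the same route as the paper's: both arguments force the producer to pass through the test on \Cref{b-bounded:insert:check TS} for index $a$ between the successful \codestyle{t\&s}() and the reset, and both rule out that this read of 1 precedes the \codestyle{t\&s}() by noting that \codestyle{TS[$a$]} is 0 at the \codestyle{t\&s}() while no reset can fall inside the producer's window between that read and its next step. The only cosmetic difference is that you anchor the argument on the write of $\bot$ to \codestyle{Items[$a$]} on \Cref{b-bounded:insert:clear item}, whereas the paper anchors it on the addition of $a$ to \codestyle{used} on \Cref{b-bounded:insert:add to used}; both are forced by the same pair of facts, \Cref{b-corollary to lemma using hazards} and \Cref{b-invariant: bot while in used}.
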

\begin{proof}
Let $ts$ be a successful \codestyle{TS[$a$].t\&s}() performed by a \codestyle{Take} operation
on \Cref{b-bounded:take:t&s}.  
By \Cref{b-corollary to lemma using hazards}, when $ts$ is performed, \codestyle{Items[$a$]} $\neq \bot$ and, hence, by \Cref{b-invariant: bot while in used}, $a \notin$ \codestyle{used}.
When a reset of \codestyle{TS[$a$]} is performed on \Cref{b-bounded:insert:reset TS},
the condition on \Cref{b-bounded:insert:reset loop} ensures that $a \in$ \codestyle{used}.
So $a$ is added to \codestyle{used} on \Cref{b-bounded:insert:add to used} between $ts$ and the next reset of \codestyle{TS[$a$]}.
Before $a$ is added to \codestyle{used}, the producer reads 1 from \codestyle{TS[a]} on \Cref{b-bounded:insert:check TS}.

Only the producer performs resets, but does not perform any resets between performing  
\Cref{b-bounded:insert:check TS} and next performing \Cref{b-bounded:insert:add to used}.
Since \codestyle{TS[a]} $= 0$ when $ts$ is performed, $ts$ does not occur between these two steps by the producer. Hence, the producer reads 1 from \codestyle{TS[a]} after $ts$ and before it
next resets \codestyle{TS[$a$]}.
\end{proof}

Two successful \codestyle{Take} operations that return elements from the same location 
in \codestyle{Items} are linearized at different steps.
\begin{lemma}\label{b-lemma: takes on same location are linearized separately}
Two successful \codestyle{Take} operations that both perform a successful \codestyle{t\&s}() on \codestyle{TS[$a$]} are not linearized at the same step.
\end{lemma}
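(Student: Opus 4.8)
The plan is to fix two distinct successful \codestyle{Take} operations, $tk_1$ and $tk_2$, that both perform a successful \codestyle{t\&s}() on \codestyle{TS[$a$]}, say at steps $s_1$ and $s_2$, and assume without loss of generality that $s_1$ occurs before $s_2$. (Since a step is atomic, $s_1 \neq s_2$; and if either operation is not linearized in the execution there is nothing to prove.) First I would argue that \codestyle{TS[$a$]} is reset somewhere between $s_1$ and $s_2$: the successful \codestyle{t\&s}() at $s_1$ leaves \codestyle{TS[$a$]} $= 1$, whereas \codestyle{TS[$a$]} $= 0$ immediately before $s_2$ since $tk_2$'s \codestyle{t\&s}() succeeds, and the only operation on a resettable test\&set object that changes its value from $1$ to $0$ is \codestyle{reset}(), which is performed only by the producer on \Cref{b-bounded:insert:reset TS}.

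Next I would apply \Cref{b-lemma: producer reads 1 from TS between t&s and reset} to $tk_1$'s successful \codestyle{t\&s}(): the producer reads $1$ from \codestyle{TS[$a$]} on \Cref{b-bounded:insert:check TS} at some step $r$ that occurs after $s_1$ and before the reset identified above. Combining with the previous paragraph, $s_1 < r < s_2$. The key point is that a read of $1$ from \codestyle{TS[$a$]} on \Cref{b-bounded:insert:check TS} is one of the three kinds of events at which a successful \codestyle{Take} that performed a successful \codestyle{t\&s}() on \codestyle{TS[$a$]} can be linearized, and $tk_1$ is linearized at the \emph{first} such event occurring after $s_1$. Hence the linearization point of $tk_1$ is at a step no later than $r$, and in particular strictly before $s_2$.

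On the other hand, $tk_2$ is linearized at the first qualifying event occurring after $s_2$; such an event exists because $tk_2$ itself writes to \codestyle{TakeDone} on \Cref{b-bounded:take:successful writes TakeDone} after its successful \codestyle{t\&s}() (or, if $tk_2$ is linearized earlier, at some other qualifying event, which by definition is also after $s_2$). Therefore the linearization point of $tk_2$ is at a step strictly after $s_2$, hence strictly after the linearization point of $tk_1$, and the two operations are linearized at distinct steps.

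I do not expect a significant obstacle. The only point requiring care is the bookkeeping that establishes $s_1 < r < s_2$: one must first observe that a \codestyle{reset} of \codestyle{TS[$a$]} necessarily occurs between $s_1$ and $s_2$ before \Cref{b-lemma: producer reads 1 from TS between t&s and reset} can be invoked to locate the producer's intervening read of $1$, and one must confirm that this read is genuinely among the events listed in the linearization rule for successful \codestyle{Take}s, so that it (rather than only the reset) bounds $tk_1$'s linearization point from above.
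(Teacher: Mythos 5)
Your proposal is correct and follows essentially the same route as the paper's proof: both establish that a reset of \codestyle{TS[$a$]} must occur between the two successful \codestyle{t\&s}() steps, invoke \Cref{b-lemma: producer reads 1 from TS between t&s and reset} to locate an intervening read of 1 from \codestyle{TS[$a$]} by the producer, and conclude that $tk_1$ is linearized at or before that read while $tk_2$ is linearized only after its own \codestyle{t\&s}(). Your additional bookkeeping (verifying that the read is among the qualifying linearization events, and that $tk_2$'s linearization point exists and lies after $s_2$) is just a more explicit rendering of the same argument.
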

\begin{proof}
Suppose that $tk_1$ is a \codestyle{Take} operation that performs a successful \codestyle{t\&s}(), $ts_1$, on \codestyle{TS[$a$]} before the \codestyle{Take} operation $tk_2$ performs a successful \codestyle{t\&s}(), $ts_2$, on \codestyle{TS[$a$]}.
A reset of \codestyle{TS[$a$]} occurs between $ts_1$ and $ts_2$. 
By \Cref{b-lemma: producer reads 1 from TS between t&s and reset}, 
the producer reads 1 from \codestyle{TS[a]} (on \Cref{b-bounded:insert:check TS})
between $ts_1$ and performing this reset. Note that $tk_1$ is linearized at or before this read.
However, $tk_2$ is linearized after $ts_2$.
Thus, they are not linearized at the same step.
\end{proof}

The following lemma says that, after an element is written to \codestyle{Items[$a$]},
the next \codestyle{Take} operation to perform a successful \codestyle{t\&s}()
on \codestyle{TS[$a$]} returns this element.

\begin{lemma}\label{b-lemma: item is taken by the first t&s after its write}
Consider 
an \codestyle{Insert}($x_1$) operation that writes $x_1$ to \codestyle{Items[$a$]} on \Cref{b-bounded:insert:write item}.
The \codestyle{Take} operation that next performs a successful \codestyle{TS[$a$].t\&s()} returns $x_1$.
\end{lemma}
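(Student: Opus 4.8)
The plan is to reduce the lemma to a single claim about the contents of \codestyle{Items[$a$]} at one configuration, thereby avoiding any need to locate the exact read step of the consumer.

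First I would set up notation. Let $w$ be the step at which the given \codestyle{Insert}($x_1$) operation writes $x_1$ to \codestyle{Items[$a$]} on \Cref{b-bounded:insert:write item}. Let $tk$ be the \codestyle{Take} operation that performs the first successful \codestyle{t\&s}() on \codestyle{TS[$a$]} occurring after $w$ (if there is none, there is nothing to prove), let $s$ be that step, and let $C$ be the configuration immediately before $s$. Let $x$ be the value $tk$ last read from \codestyle{Items[$a$]} on \Cref{b-bounded:take:read item}; since, from the code, $tk$ reads \codestyle{Items[$a$]} on \Cref{b-bounded:take:read item} in the same loop iteration as $s$ and then returns immediately after $s$, this $x$ is exactly the value $tk$ returns. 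So it suffices to prove $x = x_1$.

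Then I would establish two facts about \codestyle{Items[$a$]} at $C$. The first, that \codestyle{Items[$a$]} $= x$ at $C$, is immediate from \Cref{b-lemma using hazards} applied to $tk$, which says \codestyle{TS[$a$]} $= 0$ and \codestyle{Items[$a$]} $= x$ throughout the interval from the producer's last write to \codestyle{Items[$a$]} before $tk$'s read up to $tk$'s successful \codestyle{t\&s}(); in particular at $C$. The second, that \codestyle{Items[$a$]} $= x_1$ at $C$, is the main part. By \Cref{b-lemma: TS=0 on write to Items}, \codestyle{TS[$a$]} $= 0$ at $w$; since $s$ is the first successful \codestyle{t\&s}() on \codestyle{TS[$a$]} after $w$ and only such a step can raise \codestyle{TS[$a$]} from $0$ to $1$, we get \codestyle{TS[$a$]} $= 0$ at every configuration between $w$ and $C$. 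The producer is the only process that writes \codestyle{Items}, and it changes \codestyle{Items[$a$]} only on \Cref{b-bounded:insert:clear item} --- which it reaches only after reading $1$ from \codestyle{TS[$a$]} on \Cref{b-bounded:insert:check TS} --- or on \Cref{b-bounded:insert:write item} with \codestyle{m} $= a$ --- which it reaches only after choosing index $a$ on \Cref{b-bounded:insert:choose index}, requiring $a \notin$ \codestyle{alloc} at that moment. Immediately after $w$, $a \in$ \codestyle{alloc} (added on \Cref{b-bounded:insert:add to alloc} earlier in the same \codestyle{Insert}), and $a$ is removed from \codestyle{alloc} only on \Cref{b-bounded:insert:remove from alloc}, which is reached only after reading $1$ from \codestyle{TS[$a$]} on \Cref{b-bounded:insert:check TS}; since \codestyle{TS[$a$]} $= 0$ between $w$ and $C$, neither such write to \codestyle{Items[$a$]} happens in that interval and $a$ stays in \codestyle{alloc}, so \codestyle{Items[$a$]} $= x_1$ at $C$, as claimed. (Alternatively, \Cref{b-invariant: items < alloc} gives $a \in$ \codestyle{alloc} directly from \codestyle{Items[$a$]} $\neq \bot$.) Combining the two facts yields $x = x_1$, so $tk$ returns $x_1$.

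The main obstacle is really just the careful bookkeeping in the second fact --- tracking that neither \codestyle{alloc} loses $a$ nor \codestyle{TS[$a$]} becomes $1$ during the window from $w$ to $C$ --- but this is routine given the earlier invariants and the fact that there is a single producer, so the \codestyle{Insert} code runs sequentially. The one point worth being deliberate about is phrasing the argument through the configuration $C$ rather than through $tk$'s read step, since the read could in principle precede $w$; reasoning about the fixed configuration $C$, where both \Cref{b-lemma using hazards} and the producer-side argument pin down \codestyle{Items[$a$]}, sidesteps that case analysis entirely.
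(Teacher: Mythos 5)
Your proof is correct and takes essentially the same route as the paper's: both arguments rest on \Cref{b-lemma using hazards}, \Cref{b-lemma: TS=0 on write to Items}, and \Cref{b-invariant: items < alloc}, the crux in each case being that \codestyle{TS[$a$]} remains 0 from the write of $x_1$ until the first subsequent successful \codestyle{t\&s}(), so the producer cannot rewrite \codestyle{Items[$a$]} in that window. The only difference is organizational --- the paper identifies the last write preceding $tk$'s read and shows it must be the write of $x_1$, whereas you pin down \codestyle{Items[$a$]} at the configuration just before the \codestyle{t\&s}() from both ends --- which is a cosmetic (and slightly tidier) rearrangement, since it avoids having to place $tk$'s read relative to the write of $x_1$.
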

\begin{proof}
Let $tk$ be the \codestyle{Take} operation that performs the first successful \codestyle{TS[a].t\&s()}, $ts$, following the write, $w_1$, of $x_1$ to \codestyle{Items[$a$]}.
Let $r$ be the last read of \codestyle{Items[a]} on \Cref{b-bounded:take:read item} 
performed by $tk$ prior to $ts$ and let
$x$ be the element that it read during $r$. 
Let $w$ be the last write to \codestyle{Items[a]} on \Cref{b-bounded:insert:write item} by the producer before $r$, which implies this is a write of $x$. 
Between $w$ and the producer's next write to \codestyle{Items[a]}, the producer executes \Cref{b-bounded:insert:choose index}, and at that point $a \notin$ \codestyle{alloc}, which implies by \Cref{b-invariant: items < alloc} that \codestyle{Items[a]} = $\bot$.
But by \Cref{b-lemma using hazards}, \codestyle{Items[a]} = $x$ between $w$ and $ts$. Therefore, the producer's next write to \codestyle{Items[a]} after $w$ must happen after $ts$. In other words, $w$ was the last write to \codestyle{Items[a]} before $ts$, namely, $w = w_1$. Hence, $x = x_1$. So $tk$ would return $x_1$.
\end{proof}

To show that the ordering of operations that we defined is a linearization, we prove that the values returned by the operations are consistent with the sequential specifications of a $b$-bounded bag.
We do this by induction on the length of the execution.

Let $S$ be the ordering of a prefix, $\sigma$, of the execution, let $e$ be the next step of the execution, and $S'$ be the ordering
of $\sigma\cdot e$. 
Suppose $S$ is a linearization. 
Let $B$ be the set of items in the bag at the end of $S$. 
Further suppose that for every \codestyle{Take} operation $tk$ that appears in $S$, returns an item $x$, and performs in $\sigma$ a successful \codestyle{t\&s}() on \codestyle{TS[a]} for some index $a$, 
an \codestyle{Insert}($x$) operation that wrote in $\sigma$ to \codestyle{Items[a]} appears in $S$ before $tk$ and no \codestyle{Take} operation returning $x$ appear in $S$ between this \codestyle{Insert} operation and $tk$. 
Due to our no repetitions assumption, this informally means that every \codestyle{Take} operation that performs in $\sigma$ a successful \codestyle{t\&s}() on some location in \codestyle{TS}, takes the item that was inserted to the bag by an \codestyle{Insert} operation that wrote to the same location in \codestyle{Items}.
Let $B$ be the set of items in the bag at the end of $S$. 
We will prove that $S'$ is a linearization, and that for every \codestyle{Take} operation $tk$ that appears in $S'$ but not in $S$, returns an item $x$ and performs in $\sigma$ a successful \codestyle{t\&s}() on \codestyle{TS[a]} for some index $a$, an \codestyle{Insert}($x$) operation that wrote in $\sigma$ to \codestyle{Items[a]} appears in $S$, such that no \codestyle{Take} operation returning $x$ appears in $S$ after it. The latter part is proven in \Cref{b-lemma: no take is linearized before take after last insert writing to the same location}.

It suffices to consider the steps at which operations are linearized, namely when a \codestyle{Take} operation writes to \codestyle{TakeDone} and when 
an \codestyle{Insert} operation reads \codestyle{TakeDone}, writes to \codestyle{InsertDone} or reads 1 from \codestyle{TS}.

In what follows we will use the following lemmas.
The first one states that between two successful \codestyle{Insert} operations that write to the same location, a single \codestyle{Take} operation that takes from that location is linearized and it returns the item inserted by the first \codestyle{Insert} among the two.
\begin{lemma}\label{b-lemma: take between 2 inserts to the same location}
Let $ins_1$ and $ins_2$ be two successful \codestyle{Insert} operations
that were allocated the same location $a$. Assume $ins_1$ was the first among the two to be executed by the producer. Then a \codestyle{Take} operation $tk$ that performs a successful \codestyle{TS[a].t\&s}()
and returns the item inserted by $ins_1$
is linearized 
after $ins_1$.
If $ins_2$ is linearized, then it is linearized after $tk$.
\end{lemma}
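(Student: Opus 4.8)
The plan is to pin down the \codestyle{Take} operation $tk$ and a distinguished step $r^*$ of the producer, and then to establish the two assertions of the lemma relative to $r^*$. Since $ins_1$ and $ins_2$ are both allocated $a$, with $ins_1$ first, and since the producer picks an index for \codestyle{m} on \Cref{b-bounded:insert:choose index} only when it is absent from \codestyle{alloc}, the producer must remove $a$ from \codestyle{alloc} on \Cref{b-bounded:insert:remove from alloc} between $ins_1$'s execution of \Cref{b-bounded:insert:add to alloc} and $ins_2$'s execution of \Cref{b-bounded:insert:choose index}. Every such removal is immediately preceded by a read of $1$ from \codestyle{TS[$a$]} on \Cref{b-bounded:insert:check TS}; and since $ins_1$ does not read \codestyle{TS[$a$]} again after \Cref{b-bounded:insert:add to alloc}, this read occurs after $ins_1$ returns, hence after $ins_1$ writes $x_1$ to \codestyle{Items[$a$]} on \Cref{b-bounded:insert:write item}. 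By \Cref{b-lemma: TS=0 on write to Items}, \codestyle{TS[$a$]} holds $0$ immediately after that write, and \codestyle{TS[$a$]} can only become $1$ through a successful \codestyle{t\&s}() by a \codestyle{Take}; hence some \codestyle{Take} performs a successful \codestyle{t\&s}() on \codestyle{TS[$a$]} after $ins_1$'s write to \codestyle{Items[$a$]}. Let $tk$ perform the first such \codestyle{t\&s}(); by \Cref{b-lemma: item is taken by the first t&s after its write}, $tk$ returns $x_1$ (and, by the no-repetitions assumption, $tk$ is the unique \codestyle{Take} returning $x_1$). Let $r^*$ be the first read of $1$ from \codestyle{TS[$a$]} on \Cref{b-bounded:insert:check TS} after $ins_1$'s write to \codestyle{Items[$a$]}; I would check that $r^*$ exists, that it follows $tk$'s \codestyle{t\&s}() (because \codestyle{TS[$a$]} stays $0$ from $ins_1$'s write until that \codestyle{t\&s}()), and that it triggers the first removal of $a$ from \codestyle{alloc} after $ins_1$ added it, so that $r^*$ precedes $ins_2$'s executions of \Cref{b-bounded:insert:choose index}, \Cref{b-bounded:insert:write item}, and \Cref{b-bounded:insert:write InsertDone}.

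To show $tk$ is linearized after $ins_1$, I would split on whether $ins_1$ is coupled. If it is, then it is coupled with a \codestyle{Take} whose successful \codestyle{t\&s}() on \codestyle{TS[$a$]} lies while $ins_1$ is poised to write to \codestyle{InsertDone}; by \Cref{b-lemma: single t&s between Insert's writes to Items and InsertDone} that \codestyle{Take} is unique, and since $tk$ performs the earliest successful \codestyle{t\&s}() on \codestyle{TS[$a$]} after $ins_1$'s write to \codestyle{Items[$a$]}, it must be exactly that \codestyle{Take}, so $ins_1$ is linearized immediately before $tk$. If $ins_1$ is uncoupled, it is linearized at its write to \codestyle{InsertDone}, and I would argue $tk$'s linearization point is strictly later. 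Recall $tk$ is linearized at the first of the following events after its \codestyle{t\&s}(): a write to \codestyle{TakeDone} by any \codestyle{Take}; a read of $1$ from \codestyle{TS[$a$]} by the producer; or a write to \codestyle{InsertDone} by an uncoupled \codestyle{Insert} that wrote to a location different from $a$. If $tk$'s \codestyle{t\&s}() already lies after $ins_1$'s write to \codestyle{InsertDone}, the claim is immediate. Otherwise $tk$'s \codestyle{t\&s}() lies while $ins_1$ is poised to write to \codestyle{InsertDone}: in that window the producer takes no step (so there is no read of $1$ from \codestyle{TS[$a$]}, and the only write to \codestyle{InsertDone} is $ins_1$'s own, which does not qualify as the third kind of event since $ins_1$ wrote to \codestyle{Items[$a$]}), and no \codestyle{Take} writes to \codestyle{TakeDone} in that window --- for otherwise, by the definition of coupling, $ins_1$ would be coupled with $tk$, a contradiction.

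For the second assertion, suppose $ins_2$ is linearized. Since $r^*$ is a read of $1$ from \codestyle{TS[$a$]} on \Cref{b-bounded:insert:check TS} occurring after $tk$'s \codestyle{t\&s}(), the linearization point of $tk$ is at or before $r^*$. On the other hand, if $ins_2$ is uncoupled it is linearized at its write to \codestyle{InsertDone}, which occurs after $r^*$; and if $ins_2$ is coupled with a \codestyle{Take} $tk_2'$, then $ins_2$ is linearized immediately before $tk_2'$, and $tk_2'$'s successful \codestyle{t\&s}() lies while $ins_2$ is poised to write to \codestyle{InsertDone}, hence after $ins_2$'s write to \codestyle{Items[$a$]} and therefore after $r^*$, so $tk_2'$'s linearization point --- and thus $ins_2$'s --- is after $r^*$. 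In both cases $ins_2$'s linearization point is strictly after $r^*$, hence strictly after that of $tk$, which gives the claim.

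I expect the delicate step to be the uncoupled-$ins_1$ case above --- excluding the possibility that $tk$'s linearization point coincides with or precedes $ins_1$'s write to \codestyle{InsertDone}. This relies on the producer being sequential, so that it performs no step while ``poised to write to \codestyle{InsertDone}'', combined with the precise definition of coupling, to rule out every event that could linearize $tk$ inside the window between $tk$'s \codestyle{t\&s}() and $ins_1$'s write to \codestyle{InsertDone}. A secondary subtlety is confirming that $r^*$ is precisely the step triggering the first removal of $a$ from \codestyle{alloc} after $ins_1$, so that it genuinely precedes all of $ins_2$'s relevant steps.
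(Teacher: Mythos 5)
Your proof is correct and follows essentially the same route as the paper's: it forces a read of $1$ from \codestyle{TS[$a$]} on \Cref{b-bounded:insert:check TS} between $ins_1$'s completion and $ins_2$'s re-allocation of $a$, uses \Cref{b-lemma: TS=0 on write to Items} and \Cref{b-lemma: item is taken by the first t&s after its write} to identify $tk$, splits on whether $ins_1$ is coupled for the first assertion, and bounds $tk$'s linearization point by that read to place it before $ins_2$. The only differences are cosmetic: you argue about $ins_2$ directly instead of the first re-allocating insertion $ins_2'$, and you spell out more explicitly why none of $tk$'s candidate linearization events can fall in the window where $ins_1$ is poised to write to \codestyle{InsertDone} — a point the paper's proof states more tersely.
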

\begin{proof}
Let $ins_2'$ be the first successful \codestyle{Insert} operation to be executed after $ins_1$ among those allocated location $a$. 
Let $w_1$ be the write event to \codestyle{Items[a]} by $ins_1$ on \Cref{b-bounded:insert:write item}, and $w_2$ be the write event to \codestyle{Items[a]} by $ins_2'$.
Let $w_{InsertDone}$ be the write event to \codestyle{InsertDone} by $ins_1$ on \Cref{b-bounded:insert:write InsertDone}, which is its next step after writing to \codestyle{Items[a]}.
From the code, it is evident that $a \in$ \codestyle{alloc} when $w_{InsertDone}$ occurs.
Before $ins_2'$ writes to \codestyle{Items[a]}, it allocates location $a$ to \codestyle{m} on \Cref{b-bounded:insert:choose index}, and at that point $a \notin$ \codestyle{alloc}. Therefore, between $w_{InsertDone}$ and $w_2$, $m$ is removed from \codestyle{alloc}. This is done by the producer on \Cref{b-bounded:insert:remove from alloc}, after reading 1 from \codestyle{TS[a]} on \Cref{b-bounded:insert:check TS}. 
Let $r_{TS}$ be this read event. Then $r_{TS}$ occurs between $w_{InsertDone}$ and $w_2$.
By \Cref{b-lemma: TS=0 on write to Items}, \codestyle{TS[a]} = 0 at $w_1$. Thus, between $w_1$ and $r_{TS}$, a \codestyle{Take} operation performed a successful \codestyle{TS[a].t\&s}(). Let $tk$ be the \codestyle{Take} operation to perform the first such \codestyle{t\&s}(), and let $ts$ be this \codestyle{t\&s}() event. 
$tk$ returns the item inserted by $ins_1$ due to \Cref{b-lemma: item is taken by the first t&s after its write}. 

If $ins_2'$ is linearized, it is linearized after $w_2$. $tk$ is linearized at the latest at $r_{TS}$. Therefore, $tk$ is linearized 
before $ins_2'$.
To show that $tk$ is linearized after $ins_1$, we consider two cases. 
First consider the case in which when $ins_1$ is poised to write to \codestyle{InsertDone} on \Cref{b-bounded:insert:write InsertDone}, $ts$ occurs and then some \codestyle{Take} operation writes to \codestyle{TakeDone}. Then $ins_1$ is coupled with $tk$ and linearized before it. 
Otherwise, $ins_1$ is linearized at $w_{InsertDone}$, and $tk$ is linearized at one of the following events, all occurring after $w_{InsertDone}$: a read of 1 from \codestyle{TS[a]} by the producer on \Cref{b-bounded:insert:check TS}, a write to \codestyle{InsertDone} by an uncoupled \codestyle{Insert} operation that is allocated an index $\neq a$, or a write to \codestyle{TakeDone}.

\end{proof}

The following lemma clarifies the linearization order among a successful \codestyle{Take} operation and a successful \codestyle{Insert} operation that operate on the same location.
\begin{lemma}\label{b-lemma: take performing t&s before insert writes to the same location is linearized first}
Let $tk$ be a \codestyle{Take} operation that performs a successful \codestyle{TS[a].t\&s}(), in an event $ts$. Let $ins$ be an \codestyle{Insert} operation that writes to \codestyle{Items[a]} on \Cref{b-bounded:insert:write item}, in an event $w$ that occurs after $ts$. 
Then $tk$ is linearized before $ins$.
\end{lemma}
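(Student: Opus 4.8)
The plan is to pin down a concrete event strictly between $ts$ and $w$ that upper bounds the linearization point of $tk$, and to show that the linearization point of $ins$ lies strictly after $w$; together these give that $tk$ is linearized before $ins$.

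First I would exhibit a reset of \codestyle{TS[$a$]} between $ts$ and $w$. Immediately after $ts$, \codestyle{TS[$a$]} $= 1$, while by \Cref{b-lemma: TS=0 on write to Items}, \codestyle{TS[$a$]} $= 0$ when the producer performs $w$ on \Cref{b-bounded:insert:write item}. Since only the producer's reset on \Cref{b-bounded:insert:reset TS} can change \codestyle{TS[$a$]} from $1$ to $0$, some reset of \codestyle{TS[$a$]} occurs strictly between $ts$ and $w$. Applying \Cref{b-lemma: producer reads 1 from TS between t&s and reset}, the producer reads $1$ from \codestyle{TS[$a$]} on \Cref{b-bounded:insert:check TS} at some event $r$ that occurs after $ts$ and strictly before that reset, hence strictly before $w$. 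Now $r$ is precisely one of the three kinds of events after $ts$ at which a successful \codestyle{Take} that performed a successful \codestyle{TS[$a$].t\&s}() is linearized (namely, a read of $1$ from \codestyle{TS[$a$]} on \Cref{b-bounded:insert:check TS}), so $tk$ is linearized at or before $r$, and therefore strictly before $w$.

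Next I would argue that $ins$ is linearized strictly after $w$. If $ins$ is uncoupled, it is linearized at its \codestyle{InsertDone.dWrite}() on \Cref{b-bounded:insert:write InsertDone}, which is the producer's step immediately following $w$. If $ins$ is coupled, let $tk'$ be its coupled \codestyle{Take}; by the definition of coupling, $tk'$ performs a successful \codestyle{t\&s}() on \codestyle{TS[$a$]} after $w$, and $ins$ is linearized immediately before $tk'$, i.e., at the step at which $tk'$ is linearized, which is at or after that \codestyle{t\&s}() and hence after $w$. (Note $tk' \neq tk$: $ts$ precedes $w$ whereas $tk'$'s successful \codestyle{t\&s}() follows $w$, and a \codestyle{Take} performs at most one successful \codestyle{t\&s}().) In either case the linearization step of $ins$ is strictly after $w$, which is strictly after $r$, which is at or after the linearization step of $tk$; since these are distinct steps, $tk$ is linearized before $ins$.

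The only delicate point — the main obstacle — is making sure the event $r$ both genuinely qualifies as a linearization point of $tk$ and provably lies before $w$; this is handled by chaining \Cref{b-lemma: TS=0 on write to Items} with \Cref{b-lemma: producer reads 1 from TS between t&s and reset}. The secondary care needed is in the coupled case, where one must check that the linearization step of $ins$ is strictly after $w$ rather than merely at $w$, which follows because $tk'$'s successful \codestyle{t\&s}() (and hence $tk'$'s, and thus $ins$'s, linearization step) occurs after $w$. Once these are settled, the conclusion is immediate.
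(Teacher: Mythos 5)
Your proposal is correct and follows essentially the same route as the paper: both deduce a reset of \codestyle{TS[$a$]} between $ts$ and $w$ via \Cref{b-lemma: TS=0 on write to Items}, then invoke \Cref{b-lemma: producer reads 1 from TS between t&s and reset} to place a read of 1 from \codestyle{TS[$a$]} before $w$, at or before which $tk$ is linearized. Your added case analysis showing that $ins$ is linearized strictly after $w$ (uncoupled vs.\ coupled) is a point the paper leaves implicit, but it is not a different approach.
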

\begin{proof}
By \Cref{b-lemma: TS=0 on write to Items}, \codestyle{TS[a]} = 0 at $w$. Hence, between $ts$ and $w$, a reset of \codestyle{TS[a]} on \Cref{b-bounded:insert:reset TS} must take place. 
By \Cref{b-lemma: producer reads 1 from TS between t&s and reset}, between $ts$ and this reset, the producer reads 1 from \codestyle{TS[a]} on \Cref{b-bounded:insert:check TS}.
$tk$ is linearized at this read if it is not already linearized earlier. Therefore, $tk$ is linearized before $w$, and thus, before $ins$.
\end{proof}

The following lemma states that from the successful \codestyle{t\&s}() a \codestyle{Take} operation performs on \codestyle{TS[a]} until this operation is linearized, no \codestyle{Take} operation performs a successful \codestyle{TS[a].t\&s}().
\begin{lemma}\label{b-lemma: no other successful t&s from take's t&s until its linearization step}
Let $tk$ be a \codestyle{Take} operation that appears in $S'$ but not in $S$ and performs in $\sigma$ a successful \codestyle{t\&s}() on \codestyle{TS[a]} for some index $a$. 
Then no successful \codestyle{TS[a].t\&s}() occurs in $\sigma$ after $tk$'s successful \codestyle{TS[a].t\&s}().
\end{lemma}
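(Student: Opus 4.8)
The plan is to argue by contradiction. Suppose $tk$ performs its successful \codestyle{TS[$a$].t\&s}() in $\sigma$ at a step I will call $ts$, and suppose for contradiction that some \codestyle{Take} operation performs another successful \codestyle{TS[$a$].t\&s}(), call it $ts'$, in $\sigma$ with $ts$ before $ts'$. The first step is to locate a reset of \codestyle{TS[$a$]} between them: immediately after $ts$ the value of \codestyle{TS[$a$]} is $1$, so for $ts'$ to succeed \codestyle{TS[$a$]} must have been set back to $0$ strictly between $ts$ and $ts'$, and the only operation that does this is a reset by the producer on \Cref{b-bounded:insert:reset TS}. Hence there is a reset event $\rho$ with $ts$ before $\rho$ before $ts'$.

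The second step applies \Cref{b-lemma: producer reads 1 from TS between t&s and reset}: after the successful \codestyle{TS[$a$].t\&s}() at $ts$ and before the producer resets \codestyle{TS[$a$]} at $\rho$, the producer reads $1$ from \codestyle{TS[$a$]} on \Cref{b-bounded:insert:check TS}. Call this read $r$; then $ts$ is before $r$, $r$ is before $\rho$, and $\rho$ is before $ts'$, so in particular $r$ occurs in $\sigma$ since $ts'$ does.

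The final and key step invokes the definition of the linearization point of a successful \codestyle{Take}: $tk$ is linearized at the \emph{first} among a short list of events to occur after its successful \codestyle{t\&s}(), and one entry of that list is precisely "a read of $1$ from \codestyle{TS[$a$]} on \Cref{b-bounded:insert:check TS}". Since $r$ is such an event occurring after $ts$, the linearization point of $tk$ is at or before $r$, hence within $\sigma$; so $tk$ appears in $S$, contradicting the hypothesis that $tk$ appears in $S'$ but not in $S$. I do not expect a genuine obstacle here: the only care needed is to quote the correct list of candidate linearization events for a successful \codestyle{Take} and to confirm that the producer's read of $1$ supplied by \Cref{b-lemma: producer reads 1 from TS between t&s and reset} is on that list and lies inside $\sigma$.
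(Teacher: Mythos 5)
Your proof is correct and follows essentially the same route as the paper's: locate the forced reset between the two successful \codestyle{t\&s}() operations, invoke \Cref{b-lemma: producer reads 1 from TS between t&s and reset} to obtain a read of 1 from \codestyle{TS[a]} on \Cref{b-bounded:insert:check TS} within $\sigma$, and conclude that $tk$ would then be linearized at that read at the latest and hence appear in $S$, a contradiction. The only cosmetic difference is that you phrase it as a direct contradiction from an assumed second successful \codestyle{t\&s}(), while the paper phrases it as ruling out the read; the substance is identical.
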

\begin{proof}
Let $ts$ be $tk$'s successful \codestyle{TS[a].t\&s}().
Before another successful \codestyle{TS[a].t\&s}() occurs in $\sigma$ after $ts$, a reset of \codestyle{TS[a]} must be executed. 
By \Cref{b-lemma: producer reads 1 from TS between t&s and reset}, between $ts$ and this reset, the producer reads 1 from \codestyle{TS[a]} on \Cref{b-bounded:insert:check TS}.
But no such read is executed in $\sigma$ after $ts$, because if it were, $tk$ would be linearized at this step at the latest and would appear in $S$.
 is executed in $\sigma$ after $ts$.
We have that a successful \codestyle{TS[a].t\&s}() that follows $ts$ must occur after a reset, which in turn occurs after read of 1 from \codestyle{TS[a]}, which does not occur in $\sigma$ after $ts$. So such a successful \codestyle{t\&s}() does not occur in $\sigma$.
\end{proof}

The following lemma proves the second part of the induction statement. After this lemma we prove the first part, namely, that $S'$ is a linearization.
\begin{lemma}\label{b-lemma: no take is linearized before take after last insert writing to the same location}
Let $tk$ be a \codestyle{Take} operation that appears in $S'$ but not in $S$, returns an item $x$ and performs in $\sigma$ a successful \codestyle{t\&s}() on \codestyle{TS[a]} for some index $a$. Let $ins$ be the last \codestyle{Insert} operation to write to \codestyle{Items[a]} in $\sigma$ before $tk$ read it for the last time (which it did before executing its successful \codestyle{t\&s}()).
If $ins$ appears in $S$, then no \codestyle{Take} operation that returns $x$ appears after it in $S$.
\end{lemma}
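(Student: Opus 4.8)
The plan is to argue by contradiction: assume some \codestyle{Take} operation $tk'$ that returns $x$ is linearized after $ins$ in $S$, and derive a contradiction by showing that $tk'$ must operate on the same location as $ins$, and then locating both $tk'$'s and $tk$'s successful \codestyle{t\&s}() operations relative to $ins$'s write to that location.

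First I would fix notation. Since $tk$ returns $x$ and performs a successful \codestyle{t\&s}() on \codestyle{TS[$a$]}, and $ins$ is by hypothesis the last \codestyle{Insert} to write \codestyle{Items[$a$]} before $tk$'s last read of it, \Cref{b-lemma using hazards} gives that \codestyle{Items[$a$]} equals $x$ from $ins$'s write $w$ until $tk$'s \codestyle{t\&s}(); in particular $ins$ inserts the value $x$. For the assumed $tk'$: it is successful, so it performs a successful \codestyle{t\&s}(), say on \codestyle{TS[$a''$]}, and since $tk'$ appears in $S$ this \codestyle{t\&s}() occurs in $\sigma$. The induction hypothesis attached to $S$ then supplies an \codestyle{Insert}($x$) operation that writes \codestyle{Items[$a''$]} in $\sigma$ and precedes $tk'$ in $S$; it inserts the value $x$, just like $ins$, so by the no-repetitions assumption it \emph{is} $ins$, and since an \codestyle{Insert} writes \codestyle{Items} exactly once this forces $a'' = a$. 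Hence both $tk$ and $tk'$ perform a successful \codestyle{t\&s}() on \codestyle{TS[$a$]}, and $tk' \neq tk$ since $tk \notin S$.

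The core step is then to place $tk'$'s successful \codestyle{t\&s}(), call it $ts'$, relative to $w$ and to $tk$'s successful \codestyle{t\&s}(), call it $ts$. By \Cref{b-lemma using hazards} again, \codestyle{TS[$a$]} $= 0$ throughout the interval from $w$ to $ts$, so no successful \codestyle{t\&s}() on \codestyle{TS[$a$]} occurs strictly inside that interval; hence $ts'$ occurs either before $w$ or after $ts$. If $ts'$ is after $ts$, there is a reset of \codestyle{TS[$a$]} between $ts$ and $ts'$, and \Cref{b-lemma: producer reads 1 from TS between t&s and reset} forces the producer to read $1$ from \codestyle{TS[$a$]} on \Cref{b-bounded:insert:check TS} between $ts$ and that reset, hence before $ts'$ and therefore inside $\sigma$; but the linearization rule for a successful \codestyle{Take} then linearizes $tk$ at or before that read, so $tk$ would appear in $S$ --- a contradiction. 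If $ts'$ is before $w$, then \Cref{b-lemma: take performing t&s before insert writes to the same location is linearized first}, applied to $tk'$ and $ins$, says $tk'$ is linearized before $ins$, contradicting the assumption that $tk'$ follows $ins$ in $S$.

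The main obstacle, and the only genuinely delicate point, is the identification $a'' = a$: it rests on combining the induction hypothesis about $S$ with the no-repetitions assumption and the observation that each \codestyle{Insert} writes \codestyle{Items} exactly once. Once that is in place, the rest is a clean two-way case split on the position of $ts'$, each branch closed immediately by one of \Cref{b-lemma using hazards}, \Cref{b-lemma: producer reads 1 from TS between t&s and reset}, or \Cref{b-lemma: take performing t&s before insert writes to the same location is linearized first}.
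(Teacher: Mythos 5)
Your treatment of the case where the offending operation acts on the same location $a$ is sound and is essentially the paper's own argument: no successful \codestyle{t\&s}() on \codestyle{TS[$a$]} can lie strictly between $w$ and $ts$ because \codestyle{TS[$a$]} $=0$ there (\Cref{b-lemma using hazards}); one after $ts$ would force a read of 1 from \codestyle{TS[$a$]} in $\sigma$ (\Cref{b-lemma: producer reads 1 from TS between t&s and reset}), putting $tk$ into $S$ (this is exactly \Cref{b-lemma: no other successful t&s from take's t&s until its linearization step}); and one before $w$ is linearized before $ins$ by \Cref{b-lemma: take performing t&s before insert writes to the same location is linearized first}.

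The gap is the step you yourself flag as the delicate one: the identification $a''=a$. You deduce that the \codestyle{Insert}($x$) supplied by the induction hypothesis for $tk'$ must \emph{be} $ins$ ``by the no-repetitions assumption.'' But the paper's assumption is that the bag never \emph{contains} two copies of the same value, not that each value is inserted at most once over the whole execution; a value can be inserted, taken, and inserted again at a different location, so two distinct \codestyle{Insert}($x$) operations writing to different locations are entirely possible. Consequently the case $a''\neq a$ does not disappear, and it is precisely the case the paper has to work for: it takes $tk'$ to be the \emph{first} \codestyle{Take} returning $x$ after $ins$ in $S$ among those operating on a location other than $a$ (having already ruled out the same-location ones), and combines this with the induction hypothesis' guarantee that no \codestyle{Take} returning $x$ lies between the other \codestyle{Insert}($x$) and $tk'$, to conclude that immediately before $tk'$ the bag holds $x$ twice --- and \emph{that} is the contradiction with no-repetitions. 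Note also that even this repaired argument needs $tk'$ to be chosen minimal (you take an arbitrary violator), since otherwise one cannot assert that $ins$'s copy of $x$ survives until $tk'$. As written, your proof silently assumes away the harder half of the lemma.
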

\begin{proof}
Let $ts$ be $tk$'s successful \codestyle{TS[a].t\&s}().
Let $w$ be the write to \codestyle{Items[a]} on \Cref{b-bounded:insert:write item} by $ins$.
We first consider \codestyle{Take} operations in $S$ that perform in $\sigma$ a successful \codestyle{t\&s}() on \codestyle{TS[a]}, if such operations exist. 
No successful \codestyle{t\&s}() is performed on \codestyle{TS[a]} between $w$ and $ts$, as \codestyle{TS[a]} = 0 in this interval by \Cref{b-lemma using hazards}. No successful \codestyle{t\&s}() is performed on \codestyle{TS[a]} in $\sigma$ after $ts$ by \Cref{b-lemma: no other successful t&s from take's t&s until its linearization step}.
Any \codestyle{Take} operation performing a successful \codestyle{TS[a].t\&s}() before $w$, appears in $S$ before $ins$ by \Cref{b-lemma: take performing t&s before insert writes to the same location is linearized first}. 

Now, we consider \codestyle{Take} operations in $S$ returning $x$, which perform in $\sigma$ a successful \codestyle{t\&s}() on \codestyle{TS[a']} for some index $a' \neq a$. We will show that if any such operations exist, they must appear before $ins$ in $S$.
Assume for sake of contradiction that such an operation appears after $ins$ in $S$. Let $tk'$ be the first such operation to appear in $S$ after $ins$. Assume it performed a successful \codestyle{t\&s}() on \codestyle{TS[a']} for some location $a'$.
Due to the induction hypothesis, an \codestyle{Insert}($x$) operation that wrote in $\sigma$ to \codestyle{Items[a']} appears in $S$ before $tk'$, such that $tk'$ is the first \codestyle{Take} operation returning $x$ to appear after this \codestyle{Insert}($x$). This \codestyle{Insert} operation is not $ins$, because they write to different locations in \codestyle{Items}.
$tk'$ is also the first \codestyle{Take} operation returning $x$ to appear in $S$ after $ins$, due to the definition of $tk'$ in addition to what we earlier proved---that \codestyle{Take} operations that perform in $\sigma$ a successful \codestyle{t\&s}() on \codestyle{TS[a]} do not appear in $S$ after $ins$. Hence, right before $tk'$, $x$ appears at least twice in the bag. This is a contradiction to our no repetitions assumption.
\end{proof}

\medskip

First, suppose $e$ is an execution of \codestyle{TakeDone.dRead}() on \Cref{b-bounded:insert:reread TakeDone} by an \codestyle{Insert} operation $ins$, which returns false.
Then $S'$ is $S$ followed by $ins$ with return value FULL. 
To show that $S'$ is a linearization, we must show that $|B| = b$.

\begin{lemma}\label{b-lemma: at most b locations in alloc}
$|alloc| \leq b$.
\end{lemma}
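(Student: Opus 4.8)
The plan is to prove the invariant $|\texttt{alloc}| \le b$ by tracking how \texttt{alloc} changes throughout the producer's execution. Since \texttt{alloc} is a persistent local variable of the producer and is initialized to $\emptyset$, we only need to examine the lines where the producer modifies \texttt{alloc}, namely \Cref{b-bounded:insert:remove from alloc} (which removes an element) and \Cref{b-bounded:insert:add to alloc} (which adds an element). A removal can only decrease $|\texttt{alloc}|$, so the only concern is the addition on \Cref{b-bounded:insert:add to alloc}.

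First I would observe that \Cref{b-bounded:insert:add to alloc} is only reached when the guard on \Cref{b-bounded:insert:alloc<b} holds, i.e., when $|\texttt{alloc}| < b$ at that point. Since \texttt{alloc} is local to the producer and no step lies between the test on \Cref{b-bounded:insert:alloc<b} and the collect/choose/add sequence that affects \texttt{alloc} (the \texttt{COLLECT} on \Cref{b-bounded:insert:collect hazards} and the choice on \Cref{b-bounded:insert:choose index} do not modify \texttt{alloc}), the value of $|\texttt{alloc}|$ when \Cref{b-bounded:insert:add to alloc} executes is still $< b$, hence $\le b-1$. Moreover \Cref{b-bounded:insert:choose index} picks an index \emph{not} already in \texttt{alloc}, so the union on \Cref{b-bounded:insert:add to alloc} genuinely adds one new element, bringing $|\texttt{alloc}|$ to at most $b$. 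Thus after every modification $|\texttt{alloc}| \le b$, and since the initial value satisfies this, the invariant holds throughout by induction on the producer's steps.

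The argument is essentially routine; the only subtlety — and the step I would be most careful about — is confirming that no step by any \emph{other} process can change \texttt{alloc} between the guard test on \Cref{b-bounded:insert:alloc<b} and the update on \Cref{b-bounded:insert:add to alloc}, which is immediate because \texttt{alloc} is a local variable of the producer and only the producer executes \codestyle{Insert}. I would also note for completeness that \Cref{b-bounded:insert:remove from alloc} only ever shrinks \texttt{alloc}, so it cannot violate the bound. This gives the claimed inequality $|\texttt{alloc}| \le b$ at every configuration.
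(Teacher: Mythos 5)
Your proof is correct and follows essentially the same approach as the paper's (much terser) argument: \codestyle{alloc} starts empty, is local to the producer, and is only grown on \Cref{b-bounded:insert:add to alloc} under the guard $|\codestyle{alloc}| < b$ from \Cref{b-bounded:insert:alloc<b}. The extra care you take about interleaving and about \Cref{b-bounded:insert:remove from alloc} only shrinking the set is sound but adds nothing beyond what the paper's two-line proof already relies on.
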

\begin{proof}
\codestyle{alloc} is initialized to $\phi$. A location is added to \codestyle{alloc} only on \Cref{b-bounded:insert:add to alloc}, when $|alloc| < b$.
\end{proof}

Let $C$ be the configuration at the end of $\sigma$,
let $C^*$ be the configuration immediately after the last previous read of \codestyle{TakeDone} by $ins$ on either \Cref{b-bounded:insert:read TakeDone} or \Cref{b-bounded:insert:reread TakeDone},
let $S^*$ be the linearization of the prefix of $\sigma$ ending with $C^*$, and
let $B^*$ be the set of items in the bag at the end of $S^*$.

$|alloc| = b$ when $ins$ executes \Cref{b-bounded:insert:alloc<b} after $C^*$: $|alloc|$ cannot be smaller than $b$ at that step because $ins$ continues to the \codestyle{else if} clause, and it cannot be bigger than $b$ due to \Cref{b-lemma: at most b locations in alloc}.
$ins$ does not execute \Cref{b-bounded:insert:remove from alloc} between $C^*$ and $C$, because remove a location from \codestyle{alloc} would imply that $|alloc| > b$ before the removal, which is impossible by \Cref{b-lemma: at most b locations in alloc}.
So $ins$ does not modify $alloc$ between $C^*$ and $C$. Let $alloc^*$ be the value of $alloc$ in this interval. We have $|alloc^*| = b$.
This implies that $ins$ obtains 0 from the read of \codestyle{TS[m]} for all indices $m \in alloc$ in its executions of \Cref{b-bounded:insert:check TS} following $C^*$.

As $ins$ gets false from \codestyle{TakeDone.dRead()} at $e$, no write to \codestyle{TakeDone} is performed between $C^*$ and $C$. This implies that no successful \codestyle{Take} operations are linearized between these configuration, as these are linearized either at a write to \codestyle{TakeDone} or at steps by the producer which are not carried out in this interval (reading 1 from \codestyle{TS[m]} on \Cref{b-bounded:insert:check TS} or writing to \codestyle{InsertDone} on \Cref{b-bounded:insert:write InsertDone}).
So to prove that $|B| = b$, it is enough to show that $|B^*| = b$.

Let $m$ be a location in $alloc^*$. We will show that the item that is in \codestyle{Items[m]} at $C^*$ is in $B^*$. Since $|alloc^*| = b$ and due to our no repetitions assumption, this will imply $|B^*| = b$.
Let $ins'$ be the last \codestyle{Insert} operation to add $m$ to \codestyle{alloc} on \Cref{b-bounded:insert:add to alloc} before $C^*$. $ins'$ is successful so $ins' \neq ins$. Hence, $ins'$ was completed before $C^*$, and appears in $S^*$ as a successful insertion. 
We will show that a \codestyle{Take} operation that takes the item inserted by $ins'$ cannot appear in $S^*$ after $ins'$.

Let $w$ be the write to \codestyle{Items[m]} on \Cref{b-bounded:insert:write item} by $ins'$.
At $w$, $m \notin$ \codestyle{hazardous} according to \Cref{b-bounded:insert:choose index}, while \codestyle{used} $\subseteq$ \codestyle{hazardous} by \Cref{b-bounded:insert:narrow used down}.
Therefore, $m$ $\notin$ \codestyle{used} at $w$.
After $w$, $m$ is added to \codestyle{used} on \Cref{b-bounded:insert:add to used} only after it is removed from \codestyle{alloc} on \Cref{b-bounded:insert:remove from alloc}. The latter may happen only after $C$: it cannot happen until $C^*$ by definition of $ins'$, and it cannot happen between $C^*$ until $C$ since \codestyle{alloc} remains unchanged in this interval.
Thus, $m$ $\notin$ \codestyle{used} continues to hold from $w$ until $C$.
Therefore, no reset of \codestyle{TS[m]} is executed on \Cref{b-bounded:insert:reset TS} between $w$ and $C$. 
Recall that $ins'$ reads \codestyle{TS[m]} between $C^*$ and $C$ and obtains 0. As \codestyle{TS[m]} is not reset between $w$ and this read, a successful \codestyle{TS[m].t\&s}() cannot be executed within this interval, specifically not between $w$ and $C^*$.
A \codestyle{Take} operation that executes a successful \codestyle{TS[m].t\&s}() before $w$ is not linearized after $ins'$ by \Cref{b-lemma: take performing t&s before insert writes to the same location is linearized first}, and a \codestyle{Take} operation that executes a successful \codestyle{TS[m].t\&s}() after $C^*$ is not linearized at $S^*$. Therefore, no \codestyle{Take} operation is linearized after $ins'$ in $S^*$.

\medskip

Next, suppose $e$ is an execution of a read of 1 from \codestyle{TS[a]} by an \codestyle{Insert} operation $ins$ on \Cref{b-bounded:insert:check TS}. 
Further suppose a \codestyle{Take} operation, $tk$, performed a successful \codestyle{t\&s}() on \codestyle{TS[a]} on \Cref{b-bounded:take:t&s}, in a step $ts$. Assume between $ts$ and $e$, none of the following occurred: another read of 1 from \codestyle{TS[a]} by the producer, a write to \codestyle{TakeDone}, or a write to \codestyle{Items[a']} for $a' \neq a$ by an uncoupled \codestyle{Insert} operation. There could not be more than one such \codestyle{Take} operation by \Cref{b-lemma: takes on same location are linearized separately}.
Let the item $tk$ obtained from \codestyle{Items[a]} in its last execution of \Cref{b-bounded:take:read item} be $x$.
Then $S'$ is $S$ followed by $tk$ with return value $x$. 
To show that $S'$ is a linearization, we must show that $x \in B$.

Let $ins'$ be the last \codestyle{Insert} operation to write to \codestyle{Items[a]} on \Cref{b-bounded:insert:write item} before $tk$ reads it for the last time (which it does before $ts$), and let $w$ be this write. Then $w$ is a write of $x$ and $ins'$ in an \codestyle{Insert($x$)} operation. $ins' \neq ins$ since $ins'$ does not execute \Cref{b-bounded:insert:check TS} after $w$. Therefore, $ins'$ is completed in $\sigma$ and thus appears in $S$. Since $S$ is a linearization, in the prefix of $S$ that ends with $ins'$, the bag contains $x$. 
By \Cref{b-lemma: no take is linearized before take after last insert writing to the same location}, there is no \codestyle{Take} operation that returns $x$ in $S$ after $ins'$. This implies $x \in B$.

\medskip

Now, suppose $e$ is an execution of \codestyle{InsertDone.dWrite}() by an uncoupled \codestyle{Insert}($x$) operation $ins$ on \Cref{b-bounded:insert:write InsertDone}. Let $m$ be the index allocated to $ins$ when it executed \Cref{b-bounded:insert:choose index}. 
Let $tk_i$ for $1 \leq i \leq s$ denote every \codestyle{Take} operation that performed in $\sigma$ a successful \codestyle{t\&s}() on \codestyle{TS[$a_i$]} where $a_i \neq m$ such that since the \codestyle{t\&s}() until the end of $\sigma$, no write to \codestyle{TakeDone} was executed, no write to \codestyle{InsertDone} by an uncoupled \codestyle{Insert} operation was executed and no read of 1 from \codestyle{TS[$a_i$]} by an \codestyle{Insert} operation was executed.
Let the item $tk_i$ obtained from \codestyle{Items[$a_i$]} in its last execution of \Cref{b-bounded:take:read item} be $x_i$.
Let $f$ be the number of \codestyle{Take} operations that obtained false from \codestyle{InsertDone.dRead}() on \Cref{b-bounded:take:reread InsertDone} prior to $e$, and between the read and $e$ there was no write to \codestyle{InsertDone} and they have not yet executed \codestyle{TakeDone.dWrite}().
$S'$ is $S$ followed by $tk_1,\ldots,tk_s$, with return value $x_1,\ldots,x_s$ respectively, in an arbitrary order, then the above-mentioned $f$ failing \codestyle{Take} operations in an arbitrary order, all with return value EMPTY, and then $ins$ with return value OK. 
To show that $S'$ is a linearization, we must show that (i) $\{x_1,\ldots,x_s\} \subseteq B$, 
(ii) $B=\{x_1,\ldots,x_s\}$ if $f>0$,
and (iii) $|B|<b$ if $s=f=0$. Due to the no repetitions assumption, (ii) would imply $|B|=s$.

For each $1 \leq i \leq s$, let $ins_i$ be the \codestyle{Insert} operation that performed the last write to \codestyle{Items[$a_i$]} before $tk_i$ read it for the last time (which it did before executing its successful \codestyle{t\&s}()). Hence, $ins_i$ in an \codestyle{Insert}($x_i$) operation.
We begin with proving (i).
Let $i$ be an index such that $1 \leq i \leq s$.
$ins \neq ins_i$ since $ins$ was allocated the index $m$ while $ins_i$ was allocated the index $a_i \neq m$.
Since $e$ is a step performed by $ins$ and there is a single producer, $ins_i$ was completed in $\sigma$.
Hence $ins_i$ appears in $S$. Since $S$ is a linearization, in the prefix of $S$ that ends with $ins_i$, the bag contains $x_i$. 
By \Cref{b-lemma: no take is linearized before take after last insert writing to the same location}, there is no \codestyle{Take} operation after $ins_i$ in $S$ that takes $x_i$. This implies $x_i \in B$.

\smallskip

We proceed to prove (ii).
Let $tk$ be one of the $f$ \codestyle{Take} operations that obtained false from \codestyle{InsertDone.dRead}() on \Cref{b-bounded:take:reread InsertDone} prior to $e$ and have not yet executed \codestyle{TakeDone.dWrite}(). 
We will prove that no items other than \codestyle{$x_i$} for $1 \leq i \leq s$ are in $B$. 

Let $C$ be the configuration at the end of $\sigma$ (immediately before $e$),
let $\widehat C$ be the configuration immediately after $tk$ obtained false from \codestyle{InsertDone.dRead}() on \Cref{b-bounded:take:reread InsertDone},
let $C^*$ be the configuration immediately after the last previous read of \codestyle{InsertDone} by $tk$ on either \Cref{b-bounded:take:read InsertDone} or \Cref{b-bounded:take:reread InsertDone},
let $S^*$ be the linearization of the prefix of $\sigma$ ending with $C^*$, and
let $B^*$ be the set of items in the bag at the end of $S^*$.

Considering $tk$ obtained false from \codestyle{InsertDone.dRead}() right before $\widehat C$, 
the producer does not write to \codestyle{InsertDone} 
between $C^*$ and $\widehat C$.
No write to \codestyle{InsertDone} is performed by an uncoupled \codestyle{Insert} operation between $\widehat C$ and $C$ as $e$ is the first such write since $\widehat C$.

Suppose there was a successful \codestyle{Insert} operation that is in $S$, but not in $S^*$, which means it is linearized between configurations $C^*$ and $C$.
Since no uncoupled \codestyle{Insert} operation wrote to \codestyle{InsertDone} on \Cref{b-bounded:insert:write InsertDone} between $C^*$ and $C$,
this \codestyle{Insert} operation was coupled with a \codestyle{Take} operation. Hence it was immediately
taken from the bag after it was inserted into the bag.
Therefore, any item in $B$ was inserted to the bag in $S^*$ and was in $B^*$.

So, suppose there is $x^* \in B^*$ such that $x^* \neq x_i$ for all $1 \leq i \leq s$.
Items inserted by coupled \codestyle{Insert} operations are immediately taken from the bag by a \codestyle{Take} operation linearized at the same step, so they are not in the bag in the end of the linearization of any prefix of the execution.
Hence they are not in $B^*$.
Therefore, the item $x^*$ was inserted into the bag by an uncoupled \codestyle{Insert} operation, $ins^*$. (Each item in the bag may be uniquely associated with the \codestyle{Insert} operation that inserted it thanks to our no repetitions assumption.)
This operation wrote $x^*$ to \codestyle{Items} in some location, $a^*$, on \Cref{b-bounded:insert:write item}, and then wrote to \codestyle{InsertDone} on \Cref{b-bounded:insert:write InsertDone} before $C^*$. 
We will consider three distinct cases: (1) $a^* = a_i$ for some $1 \leq i \leq s$, (2) $a^* = m$, and (3) $a^* \neq a_i$ for all $1 \leq i \leq s$ and $a^* \neq m$. For each case, we will show that a \codestyle{Take} operation returning $x^*$ appears in $S$ after $ins^*$.

We start with case (1). $tk_i$ read $x_i \neq x^*$ from \codestyle{Items[$a^*$]} on \Cref{b-bounded:take:read item}. We showed in the proof of (i) that $ins_i$ appears in $S$.
If $ins_i$ appears in $S$ before $ins^*$, then by \Cref{b-lemma: take between 2 inserts to the same location}, a \codestyle{Take} operation that performed a successful \codestyle{TS[$a^*$].t\&s}() appears in $S$ after $ins_i$.
But this is impossible: in the end of the above proof of (i), we showed that no such \codestyle{Take} operations appear in $S$ after $ins_i$.
Therefore, $ins^*$ must appear before $ins_i$ in $S$. 
By \Cref{b-lemma: take between 2 inserts to the same location}, a \codestyle{Take} operation that takes $x^*$ appears in $S$ after $ins^*$, and we are done with case (1).

For case (2),
$ins^* \neq ins$ since the first appears in $S$ and the latter does not.
By \Cref{b-lemma: take between 2 inserts to the same location}, a \codestyle{Take} operation that takes $x^*$ appears in $S$ after $ins^*$.

We proceed to case (3).
When $ins^*$ wrote to \codestyle{Items[$a^*$]}, \codestyle{TS[$a^*$]} = 0 by \Cref{b-lemma: TS=0 on write to Items}. We consider several sub-cases.


(3a) If $a^*$ was removed from \codestyle{Allocated} on \Cref{b-bounded:insert:add to Allocated} before $tk$ read \codestyle{Allocated} on \Cref{b-bounded:take:read Allocated} after $C^*$,
then before this read the producer has removed $a^*$ from $alloc$ on \Cref{b-bounded:insert:remove from alloc}, which means the producer passed \Cref{b-bounded:insert:check TS} after $ins^*$ was completed.
(3b) If $tk$ read $a^*$ from \codestyle{Allocated} on \Cref{b-bounded:take:read Allocated} after $C^*$ and then read $\bot$ from \codestyle{Items[$a^*$]} on \Cref{b-bounded:take:read item},
then the producer wrote $\bot$ to \codestyle{Items[$a^*$]} on \Cref{b-bounded:insert:clear item} after $ins^*$ was completed, and before this write the producer passed \Cref{b-bounded:insert:check TS}.
So in both cases---(3a) and (3b), the producer read 1 from \codestyle{TS[$a^*$]} on \Cref{b-bounded:insert:check TS} after $ins^*$ was completed.
Since \codestyle{TS[$a^*$]} = 0 when $ins^*$ wrote to \codestyle{Items[$a^*$]}, 
some \codestyle{Take} operation performed a successful \codestyle{t\&s}() on \codestyle{TS[$a^*$]} after $ins^*$ wrote to \codestyle{Items[$a^*$]} and before $\widehat C$.
Let $tk^*$ be the first \codestyle{Take} to perform such a \codestyle{t\&s}().
It appears in $S$ because it is linearized at the latest at the read of 1 from \codestyle{TS[$a^*$]} by the producer. 

(3c) Otherwise, after $C^*$, $tk$ read $a^*$ from \codestyle{Allocated} and then read a non-$\bot$ value from \codestyle{Items[$a^*$]}.
In this case, $tk$ must have performed an unsuccessful \codestyle{t\&s}() of \codestyle{TS[$a^*$]} on \Cref{b-bounded:take:t&s}. As \codestyle{TS[$a^*$]} = 0 when $ins^*$ wrote to \codestyle{Items[$a^*$]}, another \codestyle{Take} operation performed a successful \codestyle{TS[$a^*$].t\&s}() after $ins^*$ wrote to \codestyle{Items[$a^*$]} and before the unsuccessful \codestyle{t\&s}() by $tk$.
Let $tk^*$ be the first \codestyle{Take} to perform such a \codestyle{t\&s}(). It appears in $S$ because otherwise it would be one of $tk_i$ for some $1 \leq i \leq s$, 
but it is not as it operates on location $a^*\neq a_i$ for all $1 \leq i \leq s$.

In all cases---(3a), (3b) and (3c), $tk^*$ takes $x^*$ from the bag by \Cref{b-lemma: item is taken by the first t&s after its write}. 
Additionally, $tk^*$ appears in $S$ after $ins^*$: $tk^*$ is linearized after $ins^*$ wrote to \codestyle{Items[$a^*$]} because $tk^*$ performs its successful \codestyle{TS[$a^*$].t\&s}() after that write. Since $ins^*$ is uncoupled, $tk^*$ cannot be linearized before the write to \codestyle{InsertDone} by $ins^*$, which is the linearization point of $ins^*$.

\smallskip
We proceed to prove (iii). Assume $s=f=0$.
We use the following lemma, which states that the items that are in the bag in the end of the linearization of any prefix of $\sigma$ are a subset of the non-$\bot$ items in \codestyle{Items} at the end of the prefix.

\begin{lemma}\label{b-lemma: B < Items}
For any prefix $\tau$ of $\sigma$, the items in the bag in the end of the linearization of $\tau$ is a subset of the non-$\bot$ items that are in \codestyle{Items} at the end of $\tau$.
\end{lemma}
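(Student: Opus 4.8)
### Proof plan for \Cref{b-lemma: B < Items}

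The plan is to prove this by induction on the length of the prefix $\tau$, tracking at each linearization point that the bag's multiset of elements is contained in the multiset of non-$\bot$ entries of \codestyle{Items} at the corresponding configuration. Since the bag only changes at linearization points, and \codestyle{Items} is modified only by the producer on \Cref{b-bounded:insert:clear item} (writing $\bot$) and \Cref{b-bounded:insert:write item} (writing an element), it suffices to check each type of step at which an operation is linearized and each step at which \codestyle{Items} is written, and verify the containment is preserved. The base case is immediate: the empty prefix gives an empty bag and all $\bot$ in \codestyle{Items}.

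First I would dispense with the easy transitions. A write of $\bot$ to \codestyle{Items[m]} on \Cref{b-bounded:insert:clear item} only shrinks the right-hand side, but by \Cref{b-lemma: take between 2 inserts to the same location} (or directly, since an element is written to \codestyle{Items[m]} only when $m$ is freshly in \codestyle{alloc}, and between the write of $x$ to \codestyle{Items[m]} and the clearing of \codestyle{Items[m]} the producer reads $1$ from \codestyle{TS[m]}, which by \Cref{b-lemma: producer reads 1 from TS between t&s and reset} forces a successful \codestyle{t\&s}() and hence a successful \codestyle{Take} linearized at or before that read) any element that was in \codestyle{Items[m]} has already been removed from the bag by a linearized successful \codestyle{Take}; so the invariant still holds. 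A read of $1$ from \codestyle{TS[a]} on \Cref{b-bounded:insert:check TS} (which linearizes a successful \codestyle{Take}) removes an element from the bag and does not touch \codestyle{Items}, so the left side shrinks. A read of \codestyle{TakeDone} returning \codestyle{false} (linearizing an unsuccessful \codestyle{Insert}) and a write to \codestyle{TakeDone} by an unsuccessful \codestyle{Take} both leave the bag unchanged except for possibly linearizing successful \codestyle{Take}s (which shrink it). The only transitions that can \emph{enlarge} the bag are: a write to \codestyle{InsertDone} by an uncoupled \codestyle{Insert}($x$), and a write to \codestyle{TakeDone} by a successful \codestyle{Take} that also linearizes a coupled \codestyle{Insert}($x$). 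These are the cases that need care.

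For the uncoupled \codestyle{Insert}($x$) that wrote $x$ to \codestyle{Items[m]} on \Cref{b-bounded:insert:write item}: the write to \codestyle{Items[m]} happened strictly before the linearization point (the \codestyle{dWrite} to \codestyle{InsertDone}), and by \Cref{b-lemma using hazards}/\Cref{b-lemma: TS=0 on write to Items} combined with the analysis in the main linearizability proof, between that write and \codestyle{InsertDone} no successful \codestyle{t\&s}() on \codestyle{TS[m]} is performed (otherwise the \codestyle{Insert} would be coupled, not uncoupled) — so $x$ is still sitting in \codestyle{Items[m]} at the linearization configuration and was not removed from the bag in the meantime; hence adding $x$ to both sides preserves containment. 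The symmetric argument handles a coupled \codestyle{Insert}($x$): it is linearized immediately before the successful \codestyle{Take} that performed the witnessing \codestyle{t\&s}() on \codestyle{TS[m]}, and at that instant \codestyle{Items[m]} $= x$ by \Cref{b-lemma using hazards}; the coupled \codestyle{Take} is ordered immediately after, so in the very next sub-step $x$ leaves the bag and \codestyle{Items[m]} may be cleared later — at the configuration after the coupled \codestyle{Insert} but before the coupled \codestyle{Take}, $x$ is in the bag and in \codestyle{Items[m]}. One must also check that the multiplicities line up, which is where the no-repetitions assumption is used: distinct in-bag copies correspond to distinct locations of \codestyle{Items}, so containment of multisets reduces to containment of the underlying sets plus the fact that each inserted value occupies its own location.

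The main obstacle I anticipate is the bookkeeping in the coupled-\codestyle{Insert} case and, more generally, making precise the claim that at a linearization configuration the freshly-added element $x$ genuinely resides in \codestyle{Items} at that configuration and has not already been consumed. This requires carefully invoking \Cref{b-lemma using hazards} to pin down the interval during which \codestyle{Items[a]} $= x$ and \codestyle{TS[a]} $= 0$, and \Cref{b-lemma: item is taken by the first t&s after its write} to guarantee that no earlier successful \codestyle{Take} removed this particular copy. The rest is routine case analysis over the (finitely many) kinds of linearization steps enumerated in the linearization-point list preceding the lemma.
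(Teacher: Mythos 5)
Your proposal is correct and follows essentially the same route as the paper's proof: both arguments reduce to showing (a) that an uncoupled \codestyle{Insert}'s element still resides in \codestyle{Items} at its linearization point, and (b) that before the producer overwrites \codestyle{Items[$a$]} with $\bot$ it must read 1 from \codestyle{TS[$a$]}, so---since \codestyle{TS[$a$]} was 0 at the write by \Cref{b-lemma: TS=0 on write to Items}---a successful \codestyle{Take} returning that element (\Cref{b-lemma: item is taken by the first t&s after its write}) is linearized, after the \codestyle{Insert} and no later than that read. The paper organizes this per inserted item rather than as a step-indexed induction and dismisses coupled \codestyle{Insert}s in one line (they never contribute to the bag at the end of any prefix's linearization, so your intermediate-configuration bookkeeping is unnecessary), but the substance is identical; note only that your ``forces a successful \codestyle{t\&s}'' step should be credited to \Cref{b-lemma: TS=0 on write to Items} rather than to \Cref{b-lemma: producer reads 1 from TS between t&s and reset}.
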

\begin{proof}
We will show that when an item is added to the bag, it appears in some location in \codestyle{Items}; and that before it is removed from this location in \codestyle{Items} it is removed from the bag.

An item inserted to the bag by a coupled \codestyle{Insert} operation does not appear in the bag in the end of the linearization of any prefix of $\sigma$, since it is inserted and then taken at the same step.
Thus, an item $x$ becomes a part of the bag of the linearization in a step $w_{InsertDone}$ in which a successful uncoupled \codestyle{Insert}($x$) operation $ins'$ writes to \codestyle{InsertDone} and is linearized. 
Let $a$ be the location that $ins'$ is allocated.
Let $w$ be the write of $x$ to \codestyle{Items[a]} by $ins'$ on \Cref{b-bounded:insert:write item}, which happens before $w_{InsertDone}$.
\codestyle{Items[a]} still contains $x$ in $w_{InsertDone}$, because after $w$, the producer writes to \codestyle{Items[a]} only in following \codestyle{Insert} operations, after $ins'$ is completed.

We proceed to the second part, showing that if $x$ is removed from \codestyle{Items}, it is removed from the bag beforehand.
If \codestyle{Items[a]} = $x$ since $w_{InsertDone}$ until the end of $\sigma$, we are done.
Else, let $w_\bot$ be the step after which \codestyle{Items[a]} $\neq x$ for the first time since $w_{InsertDone}$.
When the producer writes $x$ to \codestyle{Items[a]} at $w$, $a$ is in \codestyle{alloc} after it was added to \codestyle{alloc} on \Cref{b-bounded:insert:add to alloc}. To perform another write to \codestyle{Items[a]} on \Cref{b-bounded:insert:write item}, the producer needs to choose the index $a$ again on \Cref{b-bounded:insert:choose index}, which requires that $a$ $\notin$ \codestyle{alloc}. Before that, $a$ must be removed from \codestyle{alloc} on \Cref{b-bounded:insert:remove from alloc}. 
Hence, after writing $x$ to \codestyle{Items[a]} at $w$,
the producer next writes to \codestyle{Items[a]}
on \Cref{b-bounded:insert:clear item}.
Namely, $w_\bot$ is a write of $\bot$ to \codestyle{Items[a]} by the producer.
We will show that a \codestyle{Take} operation takes $x$ from the bag of the linearization, after $ins'$ inserts it to the bag of the linearization, in a step that precedes $w_\bot$.
Let $r_{TS}$ be the producer's step preceding $w_\bot$, in which it executes \Cref{b-bounded:insert:check TS}. \codestyle{TS[a]} = 1 at $r_{TS}$.
By \Cref{b-lemma: TS=0 on write to Items}, \codestyle{TS[a]} = 0 at $w$. 
So a successful \codestyle{TS[a].t\&s}() is performed by a \codestyle{Take} operation between $w$ and $r_{TS}$. Let $tk$ be the \codestyle{Take} operation that performs the first such \codestyle{t\&s}, in a step $ts$.
$tk$ is linearized at the latest at $r_{TS}$, hence it is linearized before $w_\bot$.
It takes $x$ from the bag by \Cref{b-lemma: item is taken by the first t&s after its write}. 
It remains to show that $tk$ is linearized after $ins'$. 
$tk$ is linearized after $w$ because $tk$ is linearized after $ts$ which happens after $w$. Since $ins'$ is uncoupled, $tk$ cannot be linearized between $w$ and $w_{InsertDone}$. Therefore, $tk$ is linearized after $w_{InsertDone}$, which is the linearization point of $ins'$.

\end{proof}

Let $C$ be the configuration that appears in $\sigma$ right before the addition of $m$ to \codestyle{alloc} by $ins$ on \Cref{b-bounded:insert:add to alloc}. 
Let $B_C$ be the set of items in the bag at the end of the linearization of $\sigma$'s prefix that ends with $C$.
By \Cref{b-lemma: B < Items}, $|B_C|$ is less than or equal to the number of non-$\bot$ items that are in \codestyle{Items} at $C$. 
By \Cref{b-invariant: items < alloc}, the locations in \codestyle{Items} with non-$\bot$ values at $C$ are a subset of the locations in \codestyle{alloc} at the same configuration, and there are less than $b$ locations in \codestyle{alloc} according to the check on \Cref{b-bounded:insert:alloc<b}. Hence, $|B_C| < b$. 
No \codestyle{Insert} operations are linearized in $\sigma$ after $C$, since the producer is executing the uncoupled \codestyle{Insert} $ins$ in this interval, which is linearized only at $e$ that comes right after $\sigma$. Therefore, for any prefix of $\sigma$ that extends the prefix ending with $C$, the bag in the end of the linearization of this prefix does not have more items than $B_C$, and we are done.

\medskip

Finally, suppose $e$ is a write to \codestyle{TakeDone} by a \codestyle{Take}() operation $tk$ on \Cref{b-bounded:take:successful writes TakeDone} or \Cref{b-bounded:take:failing writes TakeDone}. 
Let $tk_i$ for $1 \leq i \leq s$ denote every \codestyle{Take} operation that performed in $\sigma$ a successful \codestyle{t\&s}() on \codestyle{TS[$a_i$]} such that no \codestyle{Insert} operation has written to \codestyle{Items[$a_i$]} before this \codestyle{t\&s}() but has not yet written to \codestyle{InsertDone} at $e$.
Let the item $tk_i$ obtained from \codestyle{Items[$a_i$]} in its last execution of \Cref{b-bounded:take:read item} be $x_i$.
Let $tk'$ be a \codestyle{Take} operation that performed in $\sigma$ a successful \codestyle{t\&s}() on \codestyle{TS[$a'$]} such that an \codestyle{Insert($x'$)} operation, $ins'$, has written to \codestyle{Items[$a'$]} before this \codestyle{t\&s}() and has not yet written to \codestyle{InsertDone} at $e$, if such $tk'$ and $ins'$ exist.
Then $S'$ is $S$ followed by $tk_1,\ldots,tk_s$ and $tk'$ if it exists, with return value $x_1,\ldots,x_s,x'$ respectively, in an arbitrary order, with $ins'$ placed right before $tk'$ if they exist, followed by $tk$ with return value EMPTY if $e$ is a write on \Cref{b-bounded:take:failing writes TakeDone} (otherwise, $tk$ is one of the $s$ successful \codestyle{Take} operations). 
To show that $S'$ is a linearization, we must show that
(i) $\{x_1,\ldots,x_s\} \subseteq B$, 
(ii) $B=\{x_1,\ldots,x_s\}$ if $e$ is a write on \Cref{b-bounded:take:failing writes TakeDone},
and (iii) $|B|<b$ if $tk'$ exists. Due to the no repetitions assumption, (ii) would imply $|B|=s$.

For each $1 \leq i \leq s$, let $ins_i$ be the \codestyle{Insert} operation that performed the last write to \codestyle{Items[$a_i$]} before $tk_i$ read it for the last time (which it did before executing its successful \codestyle{t\&s}()). Hence, $ins_i$ in an \codestyle{Insert}($x_i$) operation.
We begin with proving (i).
Let $i$ be an index such that $1 \leq i \leq s$.
$ins_i$ wrote to \codestyle{InsertDone} on \Cref{b-bounded:insert:write InsertDone} in $\sigma$ by definition of $tk_i$. Thus, $ins_i$ is linearized in $\sigma$, namely, it appears in $S$. Since $S$ is a linearization, in the prefix of $S$ that ends with $ins_i$, the bag contains $x_i$. 
By \Cref{b-lemma: no take is linearized before take after last insert writing to the same location}, there is no \codestyle{Take} operation after $ins_i$ in $S$ that takes $x_i$. This implies $x_i \in B$.

\smallskip

We proceed to prove (ii). We assume $e$ is a write on \Cref{b-bounded:take:failing writes TakeDone} in this case, hence, $tk$ is a failing \codestyle{Take} operation.
We will prove that no items other than \codestyle{$x_i$} for $1 \leq i \leq s$ are in $B$. 

Let $C$ be the configuration at the end of $\sigma$ (immediately before $e$),
let $\widehat C$ be the configuration immediately after $tk$ obtained false from \codestyle{InsertDone.dRead}() on \Cref{b-bounded:take:reread InsertDone},
let $C^*$ be the configuration immediately after the last previous read of \codestyle{InsertDone} by $tk$ on either \Cref{b-bounded:take:read InsertDone} or \Cref{b-bounded:take:reread InsertDone},
let $S^*$ be the linearization of the prefix of $\sigma$ ending with $C^*$, and
let $B^*$ be the set of items in the bag at the end of $S^*$.

Considering $tk$ obtained false from \codestyle{InsertDone.dRead}() right before $\widehat C$, 
the producer does not write to \codestyle{InsertDone} 
between $C^*$ and $\widehat C$. No write to \codestyle{InsertDone} is performed by an uncoupled \codestyle{Insert} operation between $\widehat C$ and $C$, because otherwise $tk$ would have been linearized at that write.

Suppose there was a successful \codestyle{Insert} operation that is in $S$, but not in $S^*$, which means it is linearized between configurations $C^*$ and $C$.
Since no uncoupled \codestyle{Insert} operation wrote to \codestyle{InsertDone} on \Cref{b-bounded:insert:write InsertDone} between $C^*$ and $C$,
this \codestyle{Insert} operation was coupled with a \codestyle{Take} operation. Hence it was immediately
taken from the bag after it was inserted into the bag.
Therefore, any item in $B$ was inserted to the bag in $S^*$ and was in $B^*$.

So, suppose there is $x^* \in B^*$ such that $x^* \neq x_i$ for all $1 \leq i \leq s$.
Items inserted by coupled \codestyle{Insert} operations are immediately taken from the bag by a \codestyle{Take} operation linearized at the same step, so they are not in the bag in the end of the linearization of any prefix of the execution.
Hence they are not in $B^*$.
Therefore, the item $x^*$ was inserted into the bag by an uncoupled \codestyle{Insert} operation, $ins^*$. (Each item in the bag may be uniquely associated with the \codestyle{Insert} operation that inserted it thanks to our no repetitions assumption.)
This operation wrote $x^*$ to \codestyle{Items} in some location, $a^*$, on \Cref{b-bounded:insert:write item}, and then wrote to \codestyle{InsertDone} on \Cref{b-bounded:insert:write InsertDone} before $C^*$. 
We will consider each of the following cases separately: (1) $a^* = a_i$ for some $1 \leq i \leq s$, (2) $a^* = a'$ (relevant only when $ins'$ exists), and (3) $a^* \neq a_i$ for all $1 \leq i \leq s$ and $a^* \neq a'$. For each case, we will show that a \codestyle{Take} operation returning $x^*$ appears in $S$ after $ins^*$.

We start with case (1). $tk_i$ read $x_i \neq x^*$ from \codestyle{Items[$a^*$]} on \Cref{b-bounded:take:read item}. We showed in the proof of (i) that $ins_i$ appears in $S$.
If $ins_i$ appears in $S$ before $ins^*$, then by \Cref{b-lemma: take between 2 inserts to the same location}, a \codestyle{Take} operation that performed a successful \codestyle{TS[$a^*$].t\&s}() appears in $S$ after $ins_i$.
But this is impossible: in the end of the above proof of (i), we showed that no such \codestyle{Take} operations appear in $S$ after $ins_i$.
Therefore, $ins^*$ must appear before $ins_i$ in $S$. 
By \Cref{b-lemma: take between 2 inserts to the same location}, a \codestyle{Take} operation that takes $x^*$ appears in $S$ after $ins^*$, and we are done with case (1).

For case (2),
$ins^* \neq ins'$ since the first appears in $S$ and the latter does not.
By \Cref{b-lemma: take between 2 inserts to the same location}, a \codestyle{Take} operation that takes $x^*$ appears in $S$ after $ins^*$.

We proceed to case (3).
When $ins^*$ wrote to \codestyle{Items[$a^*$]}, \codestyle{TS[$a^*$]} = 0 by \Cref{b-lemma: TS=0 on write to Items}. We consider several sub-cases.


(3a) If $a^*$ was removed from \codestyle{Allocated} on \Cref{b-bounded:insert:add to Allocated} before $tk$ read \codestyle{Allocated} on \Cref{b-bounded:take:read Allocated} after $C^*$,
then before this read the producer has removed $a^*$ from $alloc$ on \Cref{b-bounded:insert:remove from alloc}, which means the producer passed \Cref{b-bounded:insert:check TS} after $ins^*$ was completed.
(3b) If $tk$ read $a^*$ from \codestyle{Allocated} on \Cref{b-bounded:take:read Allocated} after $C^*$ and then read $\bot$ from \codestyle{Items[$a^*$]} on \Cref{b-bounded:take:read item},
then the producer wrote $\bot$ to \codestyle{Items[$a^*$]} on \Cref{b-bounded:insert:clear item} after $ins^*$ was completed, and before this write the producer passed \Cref{b-bounded:insert:check TS}.
So in both cases---(3a) and (3b), the producer read 1 from \codestyle{TS[$a^*$]} on \Cref{b-bounded:insert:check TS} after $ins^*$ was completed.
Since \codestyle{TS[$a^*$]} = 0 when $ins^*$ wrote to \codestyle{Items[$a^*$]}, 
some \codestyle{Take} operation performed a successful \codestyle{t\&s}() on \codestyle{TS[$a^*$]} after $ins^*$ wrote to \codestyle{Items[$a^*$]} and before $\widehat C$.
Let $tk^*$ be the first \codestyle{Take} to perform such a \codestyle{t\&s}().
It appears in $S$ because it is linearized at the latest at the read of 1 from \codestyle{TS[$a^*$]} by the producer. 

(3c) Otherwise, after $C^*$, $tk$ read $a^*$ from \codestyle{Allocated} and then read a non-$\bot$ value from \codestyle{Items[$a^*$]}.
In this case, $tk$ must have performed an unsuccessful \codestyle{t\&s}() of \codestyle{TS[$a^*$]} on \Cref{b-bounded:take:t&s}. As \codestyle{TS[$a^*$]} = 0 when $ins^*$ wrote to \codestyle{Items[$a^*$]}, another \codestyle{Take} operation performed a successful \codestyle{TS[$a^*$].t\&s}() after $ins^*$ wrote to \codestyle{Items[$a^*$]} and before the unsuccessful \codestyle{t\&s}() by $tk$.
Let $tk^*$ be the first \codestyle{Take} to perform such a \codestyle{t\&s}(). It appears in $S$ because otherwise it would be one of $tk_i$ for some $1 \leq i \leq s$, 
but it is not as it operates on location $a^*\neq a_i$ for all $1 \leq i \leq s$.

In all cases---(3a), (3b) and (3c), $tk^*$ returns $x^*$ \Cref{b-lemma: item is taken by the first t&s after its write}. 
Additionally, $tk^*$ appears in $S$ after $ins^*$: $tk^*$ is linearized after $ins^*$ wrote to \codestyle{Items[$a^*$]} because $tk^*$ performs its successful \codestyle{TS[$a^*$].t\&s}() after that write. Since $ins^*$ is uncoupled, $tk^*$ cannot be linearized before the write to \codestyle{InsertDone} by $ins^*$, which is the linearization point of $ins^*$.

\smallskip
We proceed to prove (iii). Assume $tk'$ exists.
Let $C$ be the configuration that appears in $\sigma$ right before the addition of $a'$ to \codestyle{alloc} by $ins'$ on \Cref{b-bounded:insert:add to alloc}. 
Let $B_C$ be the set of items in the bag at the end of the linearization of $\sigma$'s prefix that ends with $C$.
By \Cref{b-lemma: B < Items}, $|B_C|$ is less than or equal to the number of non-$\bot$ items that are in \codestyle{Items} at $C$. 
By \Cref{b-invariant: items < alloc}, the locations in \codestyle{Items} with non-$\bot$ values at $C$ are a subset of the locations in \codestyle{alloc} at the same configuration, and there are less than $b$ locations in \codestyle{alloc} according to the check on \Cref{b-bounded:insert:alloc<b}. Hence, $|B_C| < b$. 
No \codestyle{Insert} operations are linearized in $\sigma$ after $C$, since the producer is executing the coupled \codestyle{Insert} $ins'$ in this interval, which is linearized only at $e$ that comes right after $\sigma$. Therefore, for any prefix of $\sigma$ that extends the prefix ending with $C$, the bag in the end of the linearization of this prefix does not have more items than $B_C$, and we are done.


\subsection{Lock-Freedom.} 
When a \codestyle{Take} operation finishes an iteration of the \codestyle{repeat} loop and is about to begin another iteration, 
some \codestyle{Insert} operation wrote to \codestyle{InsertDone} on \Cref{b-bounded:insert:write InsertDone} between the \codestyle{Take}'s penultimate read of \codestyle{InsertDone} on \Cref{b-bounded:take:read InsertDone} or \Cref{b-bounded:take:reread InsertDone} and its last read of \codestyle{InsertDone} on \Cref{b-bounded:take:reread InsertDone}.
This write step completes the \codestyle{Insert} operation.
When an \codestyle{Insert} operation finishes an iteration of the \codestyle{repeat} loop and is about to begin another iteration, 
some \codestyle{Take} operation wrote to \codestyle{TakeDone} on \Cref{b-bounded:take:successful writes TakeDone} or \Cref{b-bounded:take:failing writes TakeDone} between the \codestyle{Insert}'s penultimate read of \codestyle{TakeDone} on \Cref{b-bounded:insert:read TakeDone} or \Cref{b-bounded:insert:reread TakeDone} and its last read of \codestyle{TakeDone} on \Cref{b-bounded:insert:reread TakeDone}.
This write step completes the \codestyle{Take} operation.

\section{Discussion}\label{sec:discussion}

This paper explores strongly-linearizable implementations of bags 
from
interfering objects.
We presented the first lock-free, strongly-linearizable implementation of a bag from interfering objects,
which is, interestingly, also a linearizable implementation of a queue, but not a strongly-linearizable implementation.
We also presented several implementations of bounded bags with a single producer from interfering objects:
a wait-free, linearizable implementation of a 1-bounded bag, a lock-free, strongly-linearizable implementation of a 1-bounded bag, and a lock-free, strongly-linearizable implementation of a
$b$-bounded bag for any value of $b$.

A direct extension of this work is investigating
whether it is possible to extend our bounded bag implementations to support two producers
or multiple producers but only one or two consumers.
Other open questions are whether
there are 
wait-free, strongly-linearizable implementations of ABA-detecting registers and bags
from interfering objects.
It would also be interesting to construct a lock-free, strongly-linearizable implementation of a bag
from interfering objects,
such that, at every point in the execution, the number of interfering objects used 
is bounded above by a function of the number of processes and the number of elements the bag contains.
More broadly, exploring strongly-linearizable implementations of objects beyond bags using interfering objects 
is
a compelling direction for future work.

\ignore{The number of objects used by our implementation  of a lock-free, strongly-linearizable bag can easily be modified so that it is
proportional to the number of \codestyle{Insert} operations performed during the operation.
Use a linked list, where each node contains a register and a test\&set object.
An \codestyle{Insert} operations appends a new node to this list (so it is now lock-free, rather than wait-free).
We no longer need the fetch\&increment object, \codestyle{Max}, although we still keep \codestyle{Done}.  
An iteration of \codestyle{Take} traverses the linked list until it receives a NULL pointer, instead of examining \codestyle{m} 
locations. 
The linearization points don't change, since they were at test\&set operations and accesses to \codestyle{Done}.
To make the number of object proportional to the number of elements in the bag (perhaps plus the number of active processes),
processes performing \codestyle{Take} can also remove nodes from the linked list once the test\&set object
has value 1. We might need reference counting to ensure that there is no process accessing a node before it is reused. 
}


\section*{Acknowledgments}
This work was supported in part by the Natural Science and Engineering Research Council of Canada (NSERC) grant RGPIN-2020-04178.

\bibliographystyle{plainurl}
\bibliography{refs}
\end{document}